\newcommand{\nintrel}{\rightarrowtail}
\newcommand{\dintrel}[1]{\nintrel_{#1}}
\newcommand{\Actions}{A}
\newcommand{\Dom}{D}
\newcommand{\States}{S} 
\newcommand{\agts}{\ensuremath{D}}
\newcommand{\dom}{{\tt dom}} 
\newcommand{\obs}{{\tt obs}}
\newcommand{\wrt}{w.\,r.\,t.\xspace}
\newcommand{\redlogm}{\ensuremath{\leq_{m}^{\log}}}
\newcommand{\colorsys}[1]{\ensuremath{C(#1)}}
\newcommand{\diffcols}[2]{\ensuremath{E(#1,#2)}}
\newcommand{\colorsysprime}[1]{\ensuremath{C'(#1)}}
\newcommand{\diffcolsprime}[2]{\ensuremath{E'(#1,#2)}}
\newcommand{\last}{\ensuremath{last}\xspace}
\newcommand{\lastprime}{\ensuremath{last'}\xspace}
\newcommand{\boxSpacing}{}
\newcommand{\sources}{{\tt src}}
\newcommand{\ipurge}{{\tt ipurge}}
\newcommand{\purge}{{\tt purge}}
\newcommand{\dpurge}{{\tt purge}}
\newcommand{\dpsecure}{t-secure\xspace}
\newcommand{\dpsecty}{t-security\xspace}
\newcommand{\dpsectysection}{t-Security\xspace}
\newcommand{\dipsecure}{i-secure\xspace}
\newcommand{\dipsecurity}{i-security\xspace}
\newcommand{\dipsecuritysection}{i-Security\xspace}
\newcommand{\ipsecurity}{IP-security\xspace}
\newcommand{\ipsecuritysection}{IP-Security\xspace}
\newcommand{\dpsecurity}{\dpsecty}
\newcommand{\unwind}{\sim}
\newcommand{\dpu}{\textnormal{t}}
\newcommand{\ocdp}{\textnormal{(OC$_\dpu$)}\xspace}
\newcommand{\scdp}{\textnormal{(SC$_\dpu$)}\xspace}
\newcommand{\lrdp}{\textnormal{(LR$_\dpu$)}\xspace}
\newcommand{\dip}{\textnormal{i}}
\newcommand{\gOC}{\textnormal{(OC$_\dip$)}\xspace}
\newcommand{\gSC}{\textnormal{(SC$_\dip$)}\xspace}
\newcommand{\gLR}{\textnormal{(LR$_\dip$)}\xspace}
\newcommand{\uniform}{\textnormal{u}}
\newcommand{\dOC}{\textnormal{(OC$_{\dip}^{\uniform}$)}\xspace}
\newcommand{\dPC}{\textnormal{(PC$_\dip^\uniform$)}\xspace}
\newcommand{\dSC}{\textnormal{(SC$_\dip^\uniform$)}\xspace}
\newcommand{\dLR}{\textnormal{(LR$_\dip^\uniform$)}\xspace}
\newcommand{\dsrc}[3]{\ensuremath{{\tt src}(#1,#2,#3)}}
\newcommand{\dipurge}[3]{\ensuremath{{\tt ipurge}(#1,#2,#3)}}
\newcommand{\dipurgename}{\ensuremath{{\tt ipurge}}}
\newcommand{\dipurgeprime}[3]{\ensuremath{{\tt ipurge'}(#1,#2,#3)}}
\newcommand{\dipurgeprimename}{\ensuremath{{\tt ipurge'}}}
\newcommand{\actions}{\ensuremath{A}}
\newcommand{\set}[1]{\ensuremath\left\{#1\right\}}
\newcommand{\mathtext}[1]{\ensuremath{\mathrm{\text{#1}}}}
\newcommand{\indistinguishable}{\approx}
\newcommand{\card}[1]{\left| #1 \right|}
\newcommand{\complexityclassname}[1]{\ensuremath{\mathrm{#1}}}
\newcommand{\NP}{\complexityclassname{NP}}
\newcommand{\PTIME}{\complexityclassname{P}}
\newcommand{\NL}{\complexityclassname{NL}}
\newcommand{\infagents}[2]{\ensuremath{#1^{\leftarrowtail}_{#2}}}
\newcommand{\direl}[1]{\precsim_{#1}}
\newcommand{\dpol}{\ensuremath{(\dintrel s)_{s\in S}}\xspace}
\newcommand{\dpolprime}{\ensuremath{(\dintrel{s}')_{s\in S}}\xspace}
\newcommand{\beforeOpSpace}{\ }
\newcommand{\dpsimilar}{t-similar\xspace}
\newcommand{\dpsimilarity}{t-similarity\xspace}
\newcommand{\dipsimilar}{i-similar\xspace}
\newcommand{\dipsimilarity}{i-similarity\xspace}
\renewenvironment{itemize}{\begin{compactitem}}{\end{compactitem}}
\renewenvironment{enumerate}{\begin{compactenum}}{\end{compactenum}}
\tikzset{tikzglobal/.style={
    ->,
    >=stealth',
    shorten >=1pt,
    auto,
    node distance=0.8cm, 
    every path/.style=semithick, 
   initial text={}
 }}
 \tikzset{transnodedistance/.style={
  node distance=15mm
  }
}
\tikzset{transsysstate/.style={
	  draw,
    rounded corners,
    thick,
    outer sep=2pt
  }}
  \tikzset{systemstate/.style={%
    % state,
    matrix of nodes,
    draw=black, 
    line width=1pt,
    % dash pattern=on 1pt off 0pt,
    rectangle,
   inner sep=2pt,
   % >=angle 90,
    rounded corners
  }}
  \tikzset{localpolicy/.style={%
    % state,
    matrix of nodes,
    draw=black, 
    line width=1pt,
    % dash pattern=on 1pt off 0pt,
    rectangle,
   inner sep=2pt,
   % >=angle 90,
   % rounded corners,
   dashed,
    node distance=10mm
  }}
\tikzset{agent/.style={
  draw=none
}}
\tikzset{policy/.style={ %depreciated
   >->,
   >=to
   }
}
\tikzset{policyedge/.style={
   >->,
   >=to
   }
}
\tikzset{graybox/.style={
		fill=black!15,
		rounded corners
   }
}
\title{Noninterference with Local Policies} 
\author{Sebastian Eggert \and Henning Schnoor \and Thomas Wilke}
\institute{Institut f\"ur Informatik, Christian-Albrechts-Universit\"{a}t zu Kiel, 24098 Kiel, Germany \email{\{sebastian.eggert|henning.schnoor|thomas.wilke\}@email.uni-kiel.de}}
\begin{document}

\maketitle

\pagestyle{plain}

\begin{abstract}
We develop a theory for state-based noninterference in a setting where different security policies---we call them  local policies---apply in different parts of a given system. 
Our theory comprises appropriate security definitions, characterizations of these definitions, for instance in terms of unwindings, algorithms for analyzing the security of systems with local policies, and corresponding complexity results. 
\end{abstract}

\section{Introduction}

Research in formal security aims to provide rigorous definitions for different notions of security as well as methods to analyse a given system with regard to the security goals. Restricting the information that may be available to a user of the system (often called an agent) is an important topic in security. Noninterference~\cite{GogMes,goguen84} is a notion that formalizes this. Noninterference uses a security policy that specifies, for each pair of agents, whether information is allowed to flow from one agent to the other. To capture different aspects of information flow, a wide range of definitions of noninterference has been proposed, see, e.g.,~\cite{DBLP:conf/csfw/YoungB94,DBLP:conf/csfw/Millen90,DBLP:conf/esorics/Oheimb04,DBLP:conf/sp/WittboldJ90}. 

In this paper, we study systems where in different parts different policies apply. This is motivated by the fact that different security requirements may be desired in different situations, for instance, a user may want to forbid interference between his web browser and an instant messenger program while visiting banking sites but when reading a news page, the user may find interaction between these programs useful.

As an illustrating example, consider the system depicted in Fig.~\ref{fig:admin changes policy}, where three agents are involved: an administrator $A$ and two users $H$ and $L$. The rounded boxes represent system states, the arrows represent transitions. The labels of the states indicate what agent $L$ observes in the respective state; the labels of the arrows denote the action, either action $a$ performed by $A$ or action $h$ performed by $H$, inducing the respective transition. 
Every action can be performed in every state; if it does not change the state (i.\,e., if it induces a loop), the corresponding transition is omitted in the picture. 

The lower part of the system constitutes a secure subsystem with respect to the bottom policy: when agent $H$ performs the action $h$ in the initial state, the observation of agent $L$ changes from $0$ to $1$, but this is allowed according to the policy, as agent $H$ may interfere with agent $L$---there is an edge from $H$ to $L$. 

Similarly, the upper part of the system constitutes a secure subsystem with respect to the top policy: interference between $H$ and $L$ is not allowed---no edge from $H$ to $L$---and, in fact, there is no such interference, because $L$'s observation does not change when $h$ performs an action.

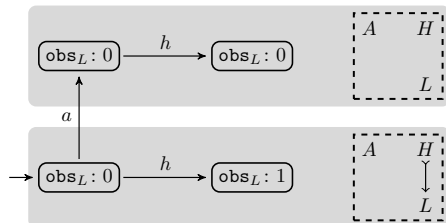
\begin{wrapfigure}[11]{r}{6.2cm}
\scalebox{0.75}{
\begin{tikzpicture}[tikzglobal]  

  \begin{scope}[transnodedistance]
    \node[transsysstate,initial] (q0) {$\obs_L\colon 0$} ;
    \node[transsysstate,above=of q0] (q2) {$\obs_L\colon 0$} ;
    \node[transsysstate,right=of q0] (q1) {$\obs_L\colon 1$} ;
    \node[transsysstate,above=of q1] (q3) {$\obs_L\colon 0$} ;
  \end{scope}
  \path (q0) edge node {$a$} (q2) ;
  \path (q0) edge node {$h$} (q1) ;
  \path (q2) edge node {$h$} (q3) ;

  \newcommand{\agents}{ 
    \node[agent] (A) [left] {$A$};
    \node[agent] (H) [right of=A]  {$H$};
    \node[agent] (L) [below of=H] {$L$};
  }
  \node[localpolicy] (q4) [right=of q1] {
    \agents
    \\
  };
  \path[policyedge] (H) edge (L);
  \node[localpolicy] (q5) [right=of q3] {
    \agents
    \\
  };
 
  \begin{pgfonlayer}{grayboxes}
  \node[graybox, fit= (q0) (q1) (q4)] (sub0) {};
  \node[graybox, fit= (q2) (q3) (q5)] (sub1) {};
  \end{pgfonlayer}
\end{tikzpicture}
}
\vspace*{-6mm}
\label{fig:admin changes policy}
\caption{System with local policies}
\end{wrapfigure}

However, the entire system is clearly insecure: agent $A$ must not interfere with anyone---there is no edge starting from $A$ in either policy---but when $L$ observes ``$1$'' in the lower right state, $L$ can conclude that $A$ did \emph{not} perform the $a$ action depicted.

Note that interference between $H$ and $L$ is allowed, unless $A$ performs action $a$.
But $L$ must not get to know whether $a$ was performed. To achieve this, interference between $H$ and $L$ must never be allowed. 
Otherwise, as we have just argued, 
$L$ can---by observing $H$'s actions---conclude that in the current part of the system, interference between $H$ and $L$ is still legal and thus 
$A$ did not perform $a$. 
In other words, in the policy of the lower part, 
the edge connecting $H$ and $L$ can never be ``used'' for an actual information flow. 
We call such edges \emph{useless}.---Useless edges are a key issue arising in systems with local policies. 

\paragraph{Our results.} We develop a theory of noninterference with local policies which takes the aforementioned issues into account.
Our contributions are as follows:
\begin{enumerate}
 \item We provide new and natural definitions for noninterference with local policies, both for the transitive~\cite{GogMes,goguen84} (agent $L$ may only be influenced by agent $H$ if there is an edge from $H$ to $L$ in the policy) and for the intransitive setting~\cite{HY87} (interference between $H$ and $L$ via ``intermediate steps'' is also allowed).
 \item We show that policies can always be rewritten into a normal form which does not contain any ``useless'' edges (see above).
 \item We provide characterizations of our definitions based on unwindings, which demonstrate the robustness of our definitions and from which we derive efficient verification algorithms.
 \item We provide results on the complexity of verifying noninterference. In the transitive setting, noninterference can be verified in nondeterministic logarithmic space (\NL). In the intransitive setting, the problem is \NP-complete, but fixed-parameter tractable with respect to the number of agents.
\end{enumerate}

Our results show significant differences between the transitive and the intransitive setting. In the transitive setting, one can, without loss of generality, always assume a policy is what we call uniform, which means that each agent may ``know'' (in a precise epistemic sense) the set of agents that currently may interfere with him. Assuming uniformity greatly simplifies the study of noninterference with local policies in the transitive setting. Moreover, transitive noninterference with local policies can be characterized by a simple unwinding, which yields very efficient algorithms. 

In the intransitive setting, the situation is more complicated. Policies cannot be assumed to be uniform, verification is \NP-complete, and, consequently, we only give an unwinding condition that requires computing exponentially many relations. However, for \emph{uniform} policies, the situation is very similar to the transitive setting: we obtain simple unwindings and efficient algorithms.

As a consequence of our results for uniform policies, we obtain an unwinding characterization of IP-security~\cite{HY87} (which uses a single policy for the entire system). Prior to our results, only an unwinding characterization that was \emph{sound}, but not \emph{complete} for IP-security was known~\cite{rushby92}. Our new unwinding characterization immediately implies that IP-security can be verified in nondeterministic logarithmic space, which improves the polynomial-time result obtained in~\cite{emsw11}.

\textit{Related Work}. Our intransitive security definitions generalize IP-security~\cite{HY87} mentioned above. The issues raised against IP-security in~\cite{meyden2007} are orthogonal to the issues arising from local policies. We therefore study local policies in the framework of IP-security, which is technically simpler than, e.g., TA-security as defined in~\cite{meyden2007}. 

Several extensions of intransitive noninterference have been discussed, for instance, in~\cite{RoscoeG99,DBLP:journals/jcs/MyersSZ06}. In~\cite{Leslie-DYNAMICNONINTERFERENCE-SSE-2006}, a definition of intransitive noninterference with local policies is given, however, the definition in~\cite{Leslie-DYNAMICNONINTERFERENCE-SSE-2006} does not take into account the aforementioned effects, and that work does not provide complete unwinding characterizations nor complexity results.

\section{State-based Systems with Local Policies}
\label{sect:preliminaries}

We work with the standard state-observed system model, that is, a system is a deterministic finite-state automaton where each action belongs to a dedicated agent and each agent has an observation in each state. More formally, a \emph{system} is a tuple $M=(S,s_0,\actions,\mathtt{step},\mathtt{obs},\mathtt{dom})$, where $S$ is a finite set of \emph{states}, $s_0\in S$ is the \emph{initial state}, $\Actions$ is a finite set of \emph{actions}, $\mathtt{step}\colon S\times\actions\rightarrow S$ is a \emph{transition function}, $\mathtt{obs}\colon S\times D\rightarrow O$ is an \emph{observation function}, where $O$ is an arbitrary set of observations, and $\mathtt{dom}\colon\actions\rightarrow D$ associates with each action an agent, where $D$ is an arbitrary finite set of agents (or security domains).

For a state $s$ and an agent $u$, we write $\obs_u(s)$ instead of $\obs(s,u)$. For a sequence $\alpha\in\actions^*$ of actions and a state $s\in S$, we denote by $s\cdot\alpha$ the state obtained when performing $\alpha$ starting in $s$, i.e., $s\cdot\epsilon=s$ and $s\cdot\alpha a=\mathtt{step}(s\cdot\alpha,a)$.

A \emph{local policy} is a reflexive relation
${\dintrel{}}\subseteq D\times D$. 
To keep our notation simple, we do not define subsystems nor policies for subsystems explicitly.
Instead, we assign a local policy to every state and denote the policy in state $s$ by $\dintrel{s}$.
We call the collection of all local policies $\dpol$ the \emph{policy} of the system.  
If $(u,v) \in {\dintrel{s}}$ for some $u, v \in \Dom$, $s \in \States$, we 
say $u \dintrel{s} v$ is an \emph{edge} in $\dpol$. 
A system has a \emph{global policy} if all local policies $\dintrel{s}$ are the
same in all states, i.e., if $u\dintrel sv$ does not depend on $s$. In this case, we denote the single policy by $\nintrel$ and only write $u \nintrel v$.
We define the set $\infagents us$ as  the set of agents that \emph{may interfere} with $u$ in $s$, i.e., the set $\set{v\ \vert\ v\dintrel su}$.

In the following, we fix an arbitrary system $M$ and a policy $\dpol$.

In our examples, we often identify a state with an action sequence leading to it from the initial state $s_0$, that is, we write $\alpha$ for $s_0 \cdot \alpha$, which is well-defined, because we consider deterministic systems. For example, in the system from Fig.~\ref{fig:admin changes policy}, we denote the initial state by $\epsilon$ and the upper right state by $ah$. In each state, we write the local policy in that state as a graph. In the system from Fig.~\ref{fig:admin changes policy}, we have $H\dintrel{\epsilon}L$, but $H\not\dintrel{a}L$. In general, we only specify the agents' observations as far as relevant for the example, which usually is only the observation of the agent $L$. We adapt the notation from Fig.~\ref{fig:admin changes policy} to our definition of local policies, which assigns a local policy to every state: we depict the graph of the local policy inside the rounded box for the state, see Fig.~\ref{fig:dpsecure_system}.

\section{The Transitive Setting}
\label{sect:transitive}

In this section, we define noninterference for systems with local policies in the transitive setting, give several characterizations, introduce the notion of useless edge, and discuss it. 
The basic idea of our security definition is that an occurrence of an action which, according to a local policy, should not be observable by an agent $u$ must not have any influence on $u$'s future observations.

\begin{definition}[\dpsecty]
  \label{def:local-dpurge-secure} 
  The system $M$ is \dpsecure iff for all $u \in \Dom$, $s \in \States$, $a \in \Actions$ and $\alpha \in \Actions^*$ the following implication holds:
  \begin{equation*}
    \text{If } \dom(a) \not \dintrel{s} u, \text{ then } 
    \obs_u(s \cdot \alpha) = \obs_u(s \cdot a \alpha) \enspace. 
  \end{equation*}
\end{definition}

\setlength\intextsep{0pt}
\begin{wrapfigure}[9]{r}{0pt}
\scalebox{0.75}{
  \begin{tikzpicture}[tikzglobal]
  
    \newcommand{\agents}{ 
      \node[agent] (A) [left] {$A$};
      \node[agent] (B) [right of=A] {$B$};
      \node[agent] (L) [below of=A] {$L$}; 
}

 \node[initial,systemstate] (q0)  {
   \agents
   \\
   \hline
   $\obs_L \colon 0$ \\
};
   \path (A) edge node {} (L)
   (B) edge node {} (L);

 \node[systemstate] (q1) [right=of q0] {
   \agents
  \\
   \hline
   $\obs_L \colon 1$ \\
};
   \path (B) edge node {} (L);

 \node[systemstate] (q2) [below=of q1] {
   \agents
  \\
   \hline
   $\obs_L \colon 2$ \\
};
   \path (A) edge node {} (L);

\path (q0) edge  node {$b$} (q1) 
edge  node {$a$} (q2)
(q1) edge [bend left=10] node {$b$} (q2)
(q2) edge [bend left=10] node  {$a$} (q1);
\end{tikzpicture}
}
\caption{A \dpsecure system}
\label{fig:dpsecure_system}
\end{wrapfigure}

Fig.~\ref{fig:dpsecure_system} shows a \dpsecure system. In contrast, the system in Fig.~\ref{fig:admin changes policy} is not \dpsecure, since $A \not \dintrel{\epsilon} L$, but $\obs_L(ah) \neq \obs_L(h)$.

\subsection{Characterizations of \dpsectysection}

In Theorem~\ref{thm:dp_characterizations}, we give two characterizations of \dpsecty, underlining that our definition is quite robust. The first characterization is based on an operator which removes all actions that must not be observed. It is essentially the definition from Goguen and Meseguer~\cite{GogMes,goguen84} of the $\purge$ operator generalized to systems with local policies.

\begin{definition}[purge for local policies]
  \label{sec:dynamic-transitive-purge} 
  For all $u \in \Dom$ and $s \in \States$ let $ \dpurge(\epsilon, u,
  s) = \epsilon$ and for all $a \in \Actions$ and $\alpha \in
  \Actions^*$ let 
  \begin{align*}
    \dpurge(a \alpha, u, s)  & = 
    \begin{cases}
      a \beforeOpSpace \dpurge(\alpha, u, s \cdot a) &\text{if } \dom(a) \dintrel{s} u
      \\
      \dpurge(\alpha, u, s) & \text{otherwise} \enspace. 
    \end{cases}
  \end{align*}
\end{definition}
The other characterization is in terms of unwindings, which we define for local policies in the following, generalizing the definition of Haigh and Young~\cite{HY87}.

\begin{definition}[transitive unwinding with local policies]
 A \emph{transitive unwinding} for $M$ with a policy $\dpol$ is a family of equivalence relations $(\unwind_u)_{u\in\Dom}$ such that for every agent $u \in \Dom$, all states $s, t \in \States$ and
all $a \in \Actions$, the following holds:
\begin{itemize}
\item If $\dom(a) \not\dintrel{s} u$, then $s \unwind_u s\cdot a$. \hfill \textnormal{ \lrdp---local respect}
\item If $s \unwind_u t$, then $s\cdot a \unwind_u t \cdot a$. \hfill \textnormal{\scdp---step consistency}
\item If $s \unwind_u t$, then $\obs_u(s) = \obs_u(t)$. \hfill \textnormal{\ocdp---output consistency}
\end{itemize}
\end{definition}

Our characterizations of \dpsecty are spelled out in the following theorem.

\begin{theorem}[characterizations of \dpsecty]
  \label{thm:dp_characterizations} 
  The following are equivalent:
  \begin{enumerate}
  \item The system $M$ is \dpsecure.
    \label{thm:dp_characterization_def}
  \item For all $u \in \Dom$, $s \in \States$, and $\alpha, \beta \in \Actions^*$ with $\dpurge(\alpha, u, s) = \dpurge(\beta, u, s)$, we have $\obs_u(s \cdot \alpha) = \obs_u(s \cdot \beta)$.
    \label{thm:dp_characterization_purge}
  \item There exists a transitive unwinding for $M$ with the policy $\dpol$.
    \label{thm:dp_characterization_unwind}
  \end{enumerate}
\end{theorem}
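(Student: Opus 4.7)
The plan is to establish the cycle of implications $(1) \Rightarrow (2) \Rightarrow (3) \Rightarrow (1)$, which neatly uses each characterization to build the next. Each direction reduces to a short induction once the right auxiliary statement is identified.

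\textbf{Step 1: $(1) \Rightarrow (2)$.} The key intermediate fact I would establish is the \emph{purge invariance lemma}: if $M$ is \dpsecure, then for all $u \in \Dom$, $s \in \States$, and $\alpha \in \Actions^*$,
\begin{equation*}
  \obs_u(s \cdot \alpha) = \obs_u\bigl(s \cdot \dpurge(\alpha, u, s)\bigr).
\end{equation*}
I would prove this by induction on $|\alpha|$. For $\alpha = a \alpha'$ there are two cases matching the two branches in the definition of \dpurge. If $\dom(a) \dintrel{s} u$, then $\dpurge(a\alpha', u, s) = a \cdot \dpurge(\alpha', u, s \cdot a)$, and the induction hypothesis applied at the state $s \cdot a$ with sequence $\alpha'$ gives the result. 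If $\dom(a) \not\dintrel{s} u$, then $\dpurge(a\alpha', u, s) = \dpurge(\alpha', u, s)$; \dpsecty directly yields $\obs_u(s \cdot \alpha') = \obs_u(s \cdot a \alpha')$, and the induction hypothesis (applied to $\alpha'$ at $s$) finishes the case. Statement $(2)$ follows immediately from the lemma applied to both $\alpha$ and $\beta$.

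\textbf{Step 2: $(2) \Rightarrow (3)$.} I would exhibit an explicit unwinding by defining, for each agent $u$,
\begin{equation*}
  s \unwind_u t \quad :\Longleftrightarrow \quad \forall \gamma \in \Actions^*\colon\ \obs_u(s \cdot \gamma) = \obs_u(t \cdot \gamma).
\end{equation*}
Reflexivity, symmetry, and transitivity are immediate. Output consistency \ocdp is obtained by taking $\gamma = \epsilon$, and step consistency \scdp follows from the observation that $(s \cdot a) \cdot \gamma = s \cdot a\gamma$. For local respect \lrdp, assume $\dom(a) \not\dintrel{s} u$ and let $\gamma \in \Actions^*$ be arbitrary; then by definition of \dpurge we have $\dpurge(a\gamma, u, s) = \dpurge(\gamma, u, s)$, so assumption $(2)$ yields $\obs_u(s \cdot \gamma) = \obs_u(s \cdot a\gamma) = \obs_u((s \cdot a) \cdot \gamma)$, hence $s \unwind_u s \cdot a$.

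\textbf{Step 3: $(3) \Rightarrow (1)$.} Given an unwinding $(\unwind_u)_{u \in \Dom}$, fix $u$, $s$, $a$, $\alpha$ with $\dom(a) \not\dintrel{s} u$. Local respect gives $s \unwind_u s \cdot a$. A straightforward induction on $|\alpha|$ using step consistency then yields $s \cdot \alpha \unwind_u s \cdot a\alpha$, and output consistency delivers $\obs_u(s \cdot \alpha) = \obs_u(s \cdot a \alpha)$, which is exactly \dpsecty.

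\textbf{Anticipated obstacle.} The only subtle point is the purge invariance lemma in Step~1: the recursion for \dpurge shifts the ``current'' state from $s$ to $s \cdot a$ whenever an observable action is kept, so the induction has to be carried out with $s$ quantified universally (not fixed), so that the induction hypothesis can be instantiated at the successor state. Once this is handled, all three implications reduce to routine inductions.
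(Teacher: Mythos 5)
Your proposal is correct, but it traverses the cycle of implications in the opposite direction from the paper and hinges on a different key lemma. The paper proves $(1)\Rightarrow(3)\Rightarrow(2)\Rightarrow(1)$: it first builds the unwinding directly from \dpsecty (using the same canonical relation you use, namely $s\unwind_u t$ iff all continuations give equal observations, with \lrdp derived straight from the security definition), then shows $(3)\Rightarrow(2)$ by an induction on the combined length of $\alpha$ and $\beta$ establishing that $\dpurge(\alpha,u,s)=\dpurge(\beta,u,s)$ implies $s\cdot\alpha\unwind_u s\cdot\beta$, and finally closes the cycle by a contrapositive argument routed through \dpsimilarity. You instead prove $(1)\Rightarrow(2)\Rightarrow(3)\Rightarrow(1)$, with the purge invariance lemma $\obs_u(s\cdot\alpha)=\obs_u(s\cdot\dpurge(\alpha,u,s))$ as the workhorse for $(1)\Rightarrow(2)$, and you recover \lrdp for the canonical unwinding from statement $(2)$ rather than from $(1)$. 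Both arguments are sound; each step of yours checks out, including the careful handling of the state shift from $s$ to $s\cdot a$ in the induction. What your route buys is a self-contained normal-form statement (the purge invariance lemma) that is of independent interest --- the paper essentially reproves a restricted version of it later for uniform policies --- and a cleaner $(2)\Rightarrow(1)$-free closing step that avoids the forward reference to \dpsimilarity. What the paper's route buys is that the two-sequence induction on $\dpurge(\alpha,u,s)=\dpurge(\beta,u,s)$ makes explicit how purge-equality is witnessed by a chain of unwinding steps, which is the pattern reused in the intransitive setting.
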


Unwinding relations yield efficient verification procedure. For verifying t-security, it is sufficient to compute for  every $u \in \Dom$ the smallest equivalence relation satisfying \lrdp and \scdp and check that the function $\obs_u$ is constant on every equivalence class. This can be done with nearly the same algorithm as is used for global policies, described in~\cite{emsw11}. The above theorem directly implies that \dpsecty can be verified in nondeterministic logarithmic space.

\subsection{Useless Edges}

An ``allowed'' interference $v\dintrel su$ may contradict a ``forbidden'' interference $v\not\dintrel{s'}u$ in a state $s'$ that should be indistinguishable to $s$ for $u$. 
In this case, the edge $v\dintrel su$ is useless. What this means is that an edge $v\dintrel su$ in the policy may be deceiving and should not be interpreted as ``it is allowed that $v$ interferes with $u$'', rather, it should be interpreted as ``it is not explicitly forbidden that $v$ interferes with $u$''. To formalize this, we introduce the following notion:

\begin{definition}[\dpsimilarity]
  States $s$, $s'$ are \emph{\dpsimilar} for an agent $u \in \Dom$, denoted $s \indistinguishable_u s'$, if there exist $t \in \States$, $a \in \Actions$, and $\alpha \in \Actions^*$ such that $\dom(a) \not\dintrel{t} u$, $s = t \cdot a \alpha$, and $s' = t\cdot \alpha$.
\end{definition}
Observe that \dpsimilarity is identical with the smallest equivalence relation satisfying \lrdp and \scdp. Also observe that the system $M$ is \dpsecure if and only if for every agent $u$, if $s\indistinguishable_u s'$, then $\obs_u(s)=\obs_u(s')$. 

The notion of \dpsimilarity allows us to formalize the notion of a useless edge:

\begin{definition}[useless edge]
  An edge $v\dintrel su$ is \emph{useless} if there is a state~$s'$ with $s\indistinguishable_u s'$ and  $v\not\dintrel{s'}u$.
\end{definition}

For example, consider again the system in Fig.~\ref{fig:admin changes policy}. Here, the local policy in the initial state allows information flow from $H$ to $L$. However, if $L$ is allowed to observe $H$'s action in the initial state, then $L$ would know that the system is in the initial state, and would also know that $A$ has not performed an action. This is an information flow from $A$ to $L$, which is prohibited by the policy. 

Useless edges can be removed without any harm:

\begin{theorem}[removal of useless edges]
  \label{theorem:uniformpolicies}
  Let $\dpolprime$ be defined by 
  \begin{align*}
    {\dintrel s'} = {\dintrel s} \setminus \{v \dintrel s u \mid v
    \dintrel s u \text{ is useless}\} \qquad \text{for all $s \in S$.}
  \end{align*}
  Then $M$ is \dpsecure \wrt $\dpol$ iff $M$ is \dpsecure \wrt $\dpolprime$.
\end{theorem}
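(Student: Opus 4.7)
The easy direction is that $\dpsecty$ with respect to $\dpolprime$ implies $\dpsecty$ with respect to $\dpol$. Since $\dintrel{s}' \subseteq \dintrel{s}$ for every state $s$, the non-edges of $\dpol$ are a subset of the non-edges of $\dpolprime$: whenever $\dom(a) \not\dintrel{s} u$, we also have $\dom(a) \not\dintrel{s}' u$. So the hypothesis of Definition~\ref{def:local-dpurge-secure} applied with $\dpolprime$ subsumes the one applied with $\dpol$, and the $\dpolprime$-version of \dpsecty yields the $\dpol$-version.

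For the interesting direction, I assume $M$ is \dpsecure with respect to $\dpol$. The plan is to exploit the unwinding characterization of Theorem~\ref{thm:dp_characterizations}: it suffices to exhibit a transitive unwinding for $M$ with the policy $\dpolprime$. The natural candidate is the family $(\indistinguishable_u)_{u \in D}$ of \dpsimilarity relations induced by $\dpol$. Because $\dpol$-\dpsecty holds, the remark after Definition~\ref{def:local-dpurge-secure} (that \dpsimilarity is the smallest equivalence satisfying \lrdp and \scdp) together with Theorem~\ref{thm:dp_characterizations} gives us that $(\indistinguishable_u)_{u \in D}$ satisfies \scdp and \ocdp for free; these two conditions do not depend on the policy and therefore immediately transfer to $\dpolprime$.

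The only nontrivial step is to verify \lrdp for $\dpolprime$: whenever $\dom(a) \not\dintrel{s}' u$, we must show $s \indistinguishable_u s \cdot a$. I split this into two cases according to why the edge fails in $\dpolprime$. If $\dom(a) \not\dintrel{s} u$ already, then $s \indistinguishable_u s \cdot a$ is immediate from \lrdp for $\dpol$. Otherwise the edge $\dom(a) \dintrel{s} u$ is useless, so by definition there is a state $s^\ast$ with $s \indistinguishable_u s^\ast$ and $\dom(a) \not\dintrel{s^\ast} u$. Then \lrdp applied to $\dpol$ gives $s^\ast \indistinguishable_u s^\ast \cdot a$, \scdp propagates $s \indistinguishable_u s^\ast$ to $s \cdot a \indistinguishable_u s^\ast \cdot a$, and transitivity of $\indistinguishable_u$ closes the chain $s \indistinguishable_u s^\ast \indistinguishable_u s^\ast \cdot a \indistinguishable_u s \cdot a$.

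The main obstacle, such as it is, lies in keeping track of which policy governs each occurrence of $\indistinguishable_u$, $\dintrel{}$, and \dpsecty; once one fixes the convention that $\indistinguishable_u$ is always the one induced by the original policy $\dpol$, the argument reduces to a short diagram chase through the three unwinding conditions. Invoking Theorem~\ref{thm:dp_characterizations} a second time, the existence of the transitive unwinding $(\indistinguishable_u)_{u \in D}$ for $\dpolprime$ yields that $M$ is \dpsecure with respect to $\dpolprime$, completing the proof.
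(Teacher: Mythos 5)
Your proposal is correct and follows essentially the same route as the paper: both directions hinge on the observation that the t-similarity relation $\indistinguishable_u$ induced by $\dpol$ already satisfies \lrdp for $\dpolprime$, via the chain $s \indistinguishable_u s^\ast \indistinguishable_u s^\ast\cdot a \indistinguishable_u s\cdot a$ obtained from the definition of a useless edge together with \lrdp, \scdp, and transitivity. The paper phrases this as showing that the smallest equivalence relation satisfying \lrdp and \scdp for $\dpolprime$ is contained in the one for $\dpol$, but the key step is the identical diagram chase.
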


The policy $\dpolprime$ in Theorem~\ref{theorem:uniformpolicies} has no useless edges, hence every edge in one of its local policies represents an \emph{allowed} information flow---no edge contradicts an edge in another local policy. Another interpretation is that any information flow that is \emph{forbidden} is \emph{directly} forbidden via the absence of the corresponding edge. In that sense, the policy is closed under logical deduction. 

We call a policy $\dpol$ \emph{uniform} if $\infagents u{s}=\infagents u{s'}$ holds for all states $s$ and $s'$ with $s\indistinguishable_u s'$. In other words, in states that $u$ should not be able to distinguish, the exact same set of agents may interfere with $u$. Hence $u$ may ``know'' the set of agents that currently may interfere with him. Note that a policy is uniform if and only if it does not contain useless edges. (This is not true in the intransitive setting, hence the seemingly complicated definition of uniformity.) Uniform policies have several interesting properties, for example, with a uniform policy the function $\dpurge$ behaves very similarly to the setting with a global policy: it suffices to verify action sequences that start in the initial state of the system and $\dpurge$ satisfies a natural associativity condition on a uniform policy. 

\section{The Intransitive Setting}
\label{sect:intransitive case}

In this section, we consider the intransitive setting, where,
whenever an agent performs an action, this event may transmit information about
the actions the agent has performed himself as well as information about
actions by other agents that was previously transmitted to him. The
definition follows a similar pattern as that of \dpsecty:
if performing an action sequence $a\alpha$ starting in a state $s$ should
not transmit the action $a$ (possibly via several intermediate steps)
to the agent $u$, then $u$ should be unable to deduce from his
observations whether $a$ was performed. 
To formalize this, 
we use Leslie's extension~\cite{Leslie-DYNAMICNONINTERFERENCE-SSE-2006} of Rushby's definition~\cite{rushby92} of ${\texttt{sources}}$. 

\begin{definition}[sources]
  For an agent $u$ let 
$\dsrc{\epsilon}us  =\set{u}$
and for 
$a \in \Actions$, $\alpha \in \Actions^*$, if
      $\dom(a) \dintrel{s} v$ for some $v \in \dsrc{\alpha}{u}{s\cdot
        a}$, then let
    $\dsrc{a\alpha}us = \dsrc{\alpha}{u}{s\cdot a} \cup \set{dom(a)}$, and else let $\dsrc{a\alpha}us = \dsrc{\alpha}{u}{s\cdot a}$.
\end{definition}

The set $\dsrc{a\alpha}us$ contains the agents that ``may know'' whether the action~$a$ has been performed in state $s$ after the run $a\alpha$ is performed: initially, this is only the set of agents $v$ with $\dom(a)\dintrel sv$. The knowledge may be spread by every action performed by an agent ``in the know:'' if an action $b$ is performed in a later state $t$, and $\dom(b)$ already may know that the action $a$ was performed, then all agents $v$ with $\dom(b)\dintrel{t}v$ may obtain this information when $b$ is performed.
Following the discussion above, we obtain a natural definition of security:
 
\begin{definition}[\dipsecurity] 
  The system $M$ is \dipsecure iff for all $s \in \States$, $a\in\Actions$, and $\alpha\in\Actions^*$, the following implication holds.
\begin{align*}
  \text{If $\dom(a)\notin\dsrc{a\alpha}{u}{s}$, then $\obs_u(s\cdot
    a\alpha)=\obs_u(s\cdot\alpha)$.}
\end{align*}
\end{definition}

The definition formalizes the above: if, on the path $a\alpha$, the action $a$ is not transmitted to $u$,
then $u$'s observation must not depend on whether $a$ was performed; the runs $a\alpha$ and $\alpha$ must be indistinguishable for $u$.

Consider the example in 
Fig.~\ref{fig:admin changes policy}. 
The system remains insecure in the intransitive setting:
as $A$ must not interfere with any agent in any state, 
we have $\dom(a)\notin\dsrc{ah}L{\epsilon}$, where again, according to our convention, $\epsilon$ denotes the initial state.
So, the system is insecure, since $\obs_L(ah)\neq\obs_L(h)$.

\subsection{Characterizations and Complexity of \dipsecuritysection}\label{sect:intransitive unwinding exponential}\label{sect:dipsecty characterizations}

We now establish two characterizations of intransitive noninterference with local policies and study the complexity of verifying \dipsecurity. Our characterizations are analogous to the ones obtained for the transitive setting in Theorem~\ref{thm:dp_characterizations}. The first one is based on a purge function, the second one uses an unwinding condition. This demonstrates the robustness of our definition and strengthens our belief that \dipsecurity is indeed a natural notion. 

We first extend Rushby's definition of $\ipurge$ to systems with local policies. 

\begin{definition}[intransitive purge for local policies]
  For all $u \in \Dom$ and all $s \in \States$, let $\dipurge{\epsilon}us=\epsilon$
  and, for all 
  $a \in \Actions$ and $\alpha \in \Actions^*$, let
\begin{align*} 
\dipurge{a\alpha}{u}{s} & =
 \begin{cases}
  a \beforeOpSpace \dipurge{\alpha}u{s\cdot a} & \mathtext{ if }\dom(a)\in\dsrc{a\alpha}{u}{s}, \\
  \dipurge{\alpha}u{s} & \mathtext{ otherwise}.
 \end{cases}
\end{align*}
 \end{definition}

The crucial point is that in the case where $a$ must remain hidden from
agent~$u$, we define $\dipurge{a\alpha}us$ as $\dipurge\alpha us$ instead
of the possibly more intuitive choice $\dipurge\alpha u{s\cdot a}$, on which 
the security definition in~\cite{Leslie-DYNAMICNONINTERFERENCE-SSE-2006} is
based.

We briefly explain the reasoning behind this choice. To this end, let $\dipurgename'$ denote the alternative definition of \dipurgename\xspace outlined above. Consider the sequence $ah$, performed from the initial state in the system in Fig.~\ref{fig:admin changes policy}. Clearly, the action~$a$ is purged from the trace, thus the result of $\dipurgename'$ is the same as applying $\dipurgename'$ to the sequence $h$ starting in the upper left state. However, in this state, the action $h$ is invisible for $L$, hence $\dipurgename'$ removes it, and thus purging $ah$ results in the empty sequence. On the other hand, if we consider the sequence $h$ also starting in the initial state, then $h$ is not removed by $\dipurgename'$, since $H$ may interfere with $L$. Hence $ah$ and $h$ do not lead to the same purged trace---a security definition based on $\dipurgename'$ does not require $ah$ and $h$ to lead to states with the same observation. Therefore, the system is considered secure in the $\dipurgename'$-based security definition from~\cite{Leslie-DYNAMICNONINTERFERENCE-SSE-2006}.  However, a natural definition must require $ah$ and $h$ to lead to the same observation for agent $L$, as the action $a$ must always be hidden from $L$. 

We next define unwindings for \dipsecurity and then give a characterization of \dipsecurity based on them.

\begin{definition}[intransitive unwinding]
An \emph{intransitive unwinding} for the system $M$ with a policy $\dpol$ is a family of relations $(\direl{\Dom'})_{\Dom'\subseteq \Dom}$ such that $\direl{\Dom'}\subseteq \States \times \States$ and for all $\Dom'\subseteq \Dom$, all $s, t \in \States$ and all $a \in \Actions$, the following hold:
\begin{itemize}
 \item $s\direl{\set{u\in \Dom\ \vert\ \dom(a)\not\dintrel{s} u}} s\cdot a$.
  \hfill \gLR
 \item  If $s\direl{\Dom''} t$, then $s\cdot
   b\direl{\Dom''} t\cdot b$, where $\Dom''= \Dom'$ if
   $\dom(b)\in\Dom'$, 
\\
and
   else $\Dom''= \Dom'\cap\set{u\ \vert\ \dom(b)\not\dintrel{s} u}$.
    \hfill \gSC
 \item If $s\direl{\Dom'} t$ and $u\in\Dom'$, then $\obs_u(s)=\obs_u(t)$,
  \hfill \gOC
\end{itemize}
\end{definition}

Intuitively, $s\direl{\Dom'} t$ expresses that there is a common reason for
all agents in~$\Dom'$ to have the same observations in $s$ as in $t$,
i.e., if there is a state $\tilde s$, an action~$a$ and a sequence
$\alpha$ such that $s= \tilde s\cdot a\alpha$, $t=\tilde s
\cdot\alpha$, and $\dom(a)\notin\dsrc{a\alpha}u{\tilde s}$ for
\emph{all} agents $u\in\Dom'$. 

\begin{theorem}[characterization of \dipsecurity]
\label{theorem:intransitive unwinding and ipurge characterization}
 The following are equivalent:
 \begin{enumerate}
  \item The system $M$ is \dipsecure. 
  \item For all agents $u$, all states $s$, and all action sequences
    $\alpha$ and $\beta$ with 
    \\
$\dipurge\alpha us=\dipurge\beta us$, we
    have $\obs_u(s\cdot\alpha)=\obs_u(s\cdot\beta)$. 
  \item There exists an intransitive unwinding for $M$ and $\dpol$.
 \end{enumerate}
\end{theorem}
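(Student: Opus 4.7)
I plan to close the cycle $(1)\Rightarrow(2)\Rightarrow(3)\Rightarrow(1)$, with the two easy directions handled by an induction on $|\alpha|$ and the two hard directions handled by exhibiting, respectively, a concrete unwinding and a concrete derivation inside a given unwinding.

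For $(1)\Leftrightarrow(2)$, first establish the auxiliary identity $\obs_u(s\cdot\alpha)=\obs_u(s\cdot\dipurge{\alpha}{u}{s})$ by induction on $|\alpha|$: in the step $\alpha=a\alpha'$, case on whether $\dom(a)\in\dsrc{a\alpha'}{u}{s}$. In the positive case the definition of $\dipurge{}$ keeps $a$ and re-enters the IH at state $s\cdot a$; in the negative case the definition drops $a$ without advancing the state, so assumption $(1)$ equates $\obs_u(s\cdot a\alpha')$ with $\obs_u(s\cdot\alpha')$ and the IH applies at $s$. Chaining this identity for $\alpha$ and $\beta$ yields $(2)$. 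Conversely, the ``otherwise'' clause of the $\dipurge{}$-recursion gives $\dipurge{a\alpha}{u}{s}=\dipurge{\alpha}{u}{s}$ whenever $\dom(a)\notin\dsrc{a\alpha}{u}{s}$, so $(2)$ instantly produces $(1)$.

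For $(2)\Rightarrow(3)$, I propose the family $(\direl{D'})$ obtained as the smallest set of triples $(s,t,D')$ closed under \gLR, under \gSC, and under taking subsets of the index set; both closures preserve \gLR and \gSC by routine case analysis. The remaining condition \gOC is established by induction on the derivation of $s\direl{D'}t$, maintaining the invariant that every derived pair admits witnesses $\tilde s,\alpha,\beta$ with $s=\tilde s\cdot\alpha$, $t=\tilde s\cdot\beta$, and $\obs_u(\tilde s\cdot\alpha\delta)=\obs_u(\tilde s\cdot\beta\delta)$ for every $u\in D'$ and every suffix $\delta$; the base case follows from $(2)$ combined with the identities $\dipurge{\epsilon}{u}{\tilde s}=\dipurge{a}{u}{\tilde s}$ whenever $\dom(a)\not\dintrel{\tilde s}u$, and the \gSC step preserves the invariant by appending the new action to the witness suffix.

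For $(3)\Rightarrow(1)$, the plan is to first establish the compositional identity
\[
\dsrc{\gamma c}{v}{s}\;=\;\bigcup_{w\in\dsrc{c}{v}{s\cdot\gamma}}\dsrc{\gamma}{w}{s},
\]
by induction on $|\gamma|$, and then, after closing the given unwinding under monotonicity (which preserves all three unwinding conditions), to show by induction on $|\alpha|$ that
\[
s\cdot\alpha\;\direl{\{v\,:\,\dom(a)\notin\dsrc{a\alpha}{v}{s}\}}\;s\cdot a\alpha .
\]
The base $\alpha=\epsilon$ is exactly \gLR. In the step $\alpha=\alpha'c$, the sources identity shows that the target index set $D_0$ is contained in $D_0'=\bigcup_{v\in D_0}\dsrc{c}{v}{s\cdot a\alpha'}$, which in turn is contained in $\{w:\dom(a)\notin\dsrc{a\alpha'}{w}{s}\}$; applying the IH at $D_0'$ and then \gSC with action $c$ propagates the relation to $s\cdot\alpha'c$ and $s\cdot a\alpha'c$. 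Specialising to the singleton $D_0=\{u\}$ and invoking \gOC then yields $(1)$.

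\emph{Main obstacle.} I expect the crux to be the last inductive step. The rule \gSC trims the index using the local policy of the \emph{left} state $s\cdot\alpha'$, whereas the condition $\dom(a)\notin\dsrc{a\alpha'c}{v}{s}$ is phrased through the policy of the \emph{right} state $s\cdot a\alpha'$ appearing in $\dsrc{c}{v}{s\cdot a\alpha'}$. The compositional sources identity is precisely what bridges the mismatch: in the case $\dom(c)\in D_0'$ the \gSC index is left unchanged, while in the case $\dom(c)\notin D_0'$ the definition of $D_0'$ already guarantees the required policy separation, so that the inclusion $D_0\subseteq D_1$ survives at each stage.
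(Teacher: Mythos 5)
Your treatment of $(1)\Leftrightarrow(2)$ is correct (and the identity $\obs_u(s\cdot\alpha)=\obs_u(s\cdot\dipurge{\alpha}us)$ is a clean alternative to the paper's minimal-counterexample argument), and your compositional sources identity for $(3)\Rightarrow(1)$ is essentially the computation the paper carries out inline. The genuine gap is in $(2)\Rightarrow(3)$: the invariant you propose for verifying \gOC is too strong, and its base case is false. For the pair produced by \gLR, namely $s\direl{\set{u\,:\,\dom(a)\not\dintrel{s}u}}s\cdot a$, your invariant demands $\obs_u(s\cdot\delta)=\obs_u(s\cdot a\delta)$ for \emph{every} suffix $\delta$. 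That is the transitive condition, not the intransitive one: in an \dipsecure system containing a downgrader (say $\dom(a)\dintrel{s}\dom(d)$ and $\dom(d)\dintrel{s\cdot a}u$ but $\dom(a)\not\dintrel{s}u$) one may well have $\obs_u(s\cdot d)\neq\obs_u(s\cdot ad)$, because $\dom(a)\in\dsrc{ad}us$ even though $\dom(a)\not\dintrel{s}u$. Correspondingly, your justification of the base case only covers $\delta=\epsilon$: for longer $\delta$ the needed identity $\dipurge{a\delta}us=\dipurge{\delta}us$ requires $\dom(a)\notin\dsrc{a\delta}us$, which does not follow from $\dom(a)\not\dintrel{s}u$. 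The repair is to replace the observational invariant by a sources-based one, as the paper does: maintain that each derived triple $(s,t,\Dom')$ has witnesses $\tilde s$, $a$, $\alpha$ with $\set{s,t}=\set{\tilde s\cdot a\alpha,\tilde s\cdot\alpha}$ and $\dom(a)\notin\dsrc{a\alpha}u{\tilde s}$ for every $u\in\Dom'$; then \gOC at the current pair is exactly one instance of $(2)$, and no quantification over future suffixes is needed.

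A second, smaller problem sits in $(3)\Rightarrow(1)$, precisely at the point you flag as the main obstacle. With your orientation $s\cdot\alpha'\direl{\Dom'}s\cdot a\alpha'$, the rule \gSC trims the index set using $\dintrel{s\cdot\alpha'}$, whereas membership of $\dom(a)$ in $\dsrc{a\alpha'c}vs$ is governed by $\dintrel{s\cdot a\alpha'}$. Your set $D_0'$ only certifies $\dom(c)\not\dintrel{s\cdot a\alpha'}v$ for $v\in D_0$, and for a non-uniform policy this does not yield $\dom(c)\not\dintrel{s\cdot\alpha'}v$, so the required inclusion $D_0\subseteq\Dom''$ can fail. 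The mismatch disappears if you orient the pairs as the paper's proof does, keeping the run that contains $a$ in the first component, so that \gSC consults exactly the states traversed by the sources recursion; with that orientation your bridging argument via the compositional identity goes through.
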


In contrast to the transitive setting, the unwinding characterization of
\dipsecurity does not lead to a polynomial-time algorithm to verify
security of a system, because the number of relations needed to consider is exponential in the number of agents in the system. Unless $\PTIME=\NP$, we cannot do significantly better, because the verification problem is \NP-complete; our unwinding characterization, however, yields an FPT-algorithm.

\begin{theorem}[complexity of \dipsecurity]
  \label{theorem:intransitive case np complete} 
  Deciding whether a given system is \dipsecure with respect to a policy is \NP-complete and fixed-parameter tractable with the number of agents as parameter.
\end{theorem}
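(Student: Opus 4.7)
The plan is to establish three components: membership in \NP, \NP-hardness, and fixed-parameter tractability with $|\Dom|$ as parameter.

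For \NP-membership I would exploit the unwinding characterization of Theorem~\ref{theorem:intransitive unwinding and ipurge characterization}. An insecurity witness corresponds to a triple $(s',t',\Dom')$ reachable in the product ``graph'' over $\States \times \States \times 2^{\Dom}$ starting from a seed $(s, s\cdot a, \{u \mid \dom(a) \not\dintrel{s} u\})$ produced by~\gLR, closed under~\gSC, and satisfying $\obs_u(s') \neq \obs_u(t')$ for some $u \in \Dom'$ (violating~\gOC). The key observation is that along any such computation the set $\Dom'$ only shrinks by~\gSC, so there are at most $|\Dom|$ proper shrinking steps, and between two consecutive shrinkings the pair $(s,t)$ traverses a path in $\States\times\States$ that can be short-cut whenever it repeats. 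This yields a polynomial bound of at most $|\Dom|\cdot|\States|^2$ on the length of a shortest witness sequence~$\alpha$ for the definition of \dipsecurity. An \NP-machine thus guesses $u$, $s$, $a$, and such an $\alpha$, and verifies in polynomial time that $\dom(a)\notin\dsrc{a\alpha}{u}{s}$ (by a direct recursive computation of sources) and $\obs_u(s\cdot a\alpha)\neq\obs_u(s\cdot\alpha)$.

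For \NP-hardness, I would reduce from $3$-SAT. Given a formula $\phi$ over variables $x_1,\dots,x_n$ with clauses $C_1,\dots,C_m$, I would construct a system whose initial moves encode a guess of a truth assignment, and whose local policies at the states corresponding to each partial assignment are tailored so that an initial ``trigger'' action~$a$ by a designated high agent propagates through the sources function to a designated low agent~$u$ if and only if the guessed assignment satisfies $\phi$. The observation of~$u$ can then be used to distinguish runs starting with~$a$ from runs without~$a$ precisely in those branches corresponding to satisfying assignments, so that $M$ fails \dipsecurity iff $\phi$ is satisfiable. The role of local policies is essential here: since by~\cite{emsw11} the global-policy version is in~\PTIME, the reduction must exploit that the membership $\dom(a)\in\dsrc{a\alpha}{u}{s}$ depends on the policies encountered along $\alpha$, which are chosen to implement the clause-evaluation check.

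The FPT upper bound follows directly from the unwinding characterization. I would compute the smallest family $(\direl{\Dom'})_{\Dom'\subseteq\Dom}$ satisfying \gLR and \gSC by a standard fixpoint iteration over $\States\times\States$: the family consists of $2^{|\Dom|}$ binary relations on~$\States$, so its total size is bounded by $2^{|\Dom|}\cdot|\States|^2$, and each insertion of a new pair via \gSC or \gLR is polynomial. Thus the whole fixpoint is computed in time $2^{|\Dom|}\cdot \mathrm{poly}(|M|)$, after which \gOC is checked in polynomial time; Theorem~\ref{theorem:intransitive unwinding and ipurge characterization} gives correctness.

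The main obstacle I expect is the \NP-hardness reduction: encoding clause satisfaction so that it governs whether $\dom(a)$ belongs to $\dsrc{a\alpha}{u}{s_0}$ requires carefully orchestrating the interaction between varying local policies and the right-to-left recursion defining sources, and then lifting this into an observation gap for~$u$ without opening unintended information flows in unsatisfying branches.
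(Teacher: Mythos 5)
Your \NP-membership argument (bounding the witness length by tracking that $\Dom'$ only shrinks and short-cutting repeated state pairs between shrinkings) and your FPT argument (fixpoint computation of the $2^{|\Dom|}\cdot|\States|^2$ table from the unwinding characterization) are both sound and essentially match the paper, which bounds the witness by $|\States|^2$ via loop removal in the product automaton $F_s\times F_{s\cdot a}$ and uses the same dynamic programming for FPT.

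The genuine gap is in the \NP-hardness reduction, and it is not merely that the gadgets are left unspecified: the polarity of your sketch is inverted. You arrange that the trigger action $a$ \emph{propagates} to $u$ (i.e., $\dom(a)\in\dsrc{a\alpha}{u}{s}$) exactly on branches corresponding to satisfying assignments, and that $u$'s observations differ exactly on those branches. But \dipsecurity\ imposes \emph{no} constraint when $\dom(a)\in\dsrc{a\alpha}{u}{s}$; a violation requires a run on which $a$ is \emph{not} transmitted to $u$ and yet the observations differ. Under your setup, satisfying branches are allowed to differ (since $a$ is transmitted there) and unsatisfying branches have equal observations, so the constructed system is secure whether or not $\phi$ is satisfiable, and the reduction collapses. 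Flipping only the observation assignment does not rescue it either: then insecurity would witness the existence of an \emph{unsatisfying} assignment, which is the wrong problem. What the reduction must encode is the existence of a \emph{hiding} path --- one that reaches a distinguishing observation while keeping $\dom(a)$ out of $\dsrc{a\alpha}{u}{s}$ --- and the \NP-hard search is precisely for such a path. The paper does this by reducing from $3$-colorability: color-choice gadgets $\colorsys{u}$ and edge gadgets $\diffcols{u}{v}$ are wired (via auxiliary agents $u_{\neq i}$ that absorb the secret $h$ when $u$ commits to color $i$, and local policies that leak $u_{\neq i}$ to $L$ exactly when $u$ later deviates from that commitment) so that a path from $s_0$ to the distinguishing state $\last$ hides $h$ from $L$ if and only if the graph is $3$-colorable. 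Your sketch would need to be reworked around this ``hide the trigger while still reaching a distinguishing state'' structure, and the consistency-enforcement mechanism for the guessed assignment is the part that actually carries the hardness.
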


\subsection{Intransitively Useless Edges}

\begin{wrapfigure}[11]{r}{0pt}
\scalebox{0.75}{
 \begin{tikzpicture}[tikzglobal]  
    \newcommand{\agentshdl}{ 
      \node[agent] (h) [left] {$H$};
      \node[agent] (d) [right=2.5mm of h] {$D$};
      \node[agent] (l) [below=2.5mm of d] {$L$}; 
    }

    \newcommand{\agentshd}{ 
      \node[agent] (h) {$H$};
      \node[agent] (d) [below=2.5mm of h] {$D$};
    }

    \newcommand{\agentsdl}{ 
      \node[agent] (d) {$D$};
      \node[agent] (l) [below=2.5mm of d] {$L$}; 
    }

 \node[initial,systemstate] (q0) {
   \agentshd
   \\
   \hline
   $\obs_L \colon 0$ \\
  };
   \path[policy] (h) edge (d);

 \node[systemstate] (q1) [right= 10mm of q0] {
 \agentshdl
 \\ \hline
   $\obs_L \colon 0$ \\
  };
  \path[policy] (h) edge (l);
  \path[policy] (d) edge (l);

 \node[systemstate] (q3) [right=10mm of q1] {
  $\obs_L \colon 1$ \\
  };

 \node[systemstate] (q2) [below=5mm of q1] {
  \agentsdl  \\ 
  \hline
    $\obs_L \colon 0$ \\
  };
\path[policy] (d) edge (l);

 \node[systemstate] (q4) [right=10mm of q2] {
  $\obs_L \colon 2$ \\
  };

 \node[systemstate] (q5) [above=of q3] {
  $\obs_L \colon 0$ \\ };

 \node[systemstate] (q6) [left=20mm of q5] {
  $\obs_L \colon 0$ \\ };

 \path (q0) edge node {$h_1$} (q1);
 \path (q0) edge [bend right]  node {$h_2$} (q2);
 \path (q1) edge node {$d$} (q3);
 \path (q2) edge node {$d$} (q4);
 \path (q1) edge node {$h_1$} (q5);
 \path (q1) edge node {$h_2$} (q6);
\end{tikzpicture}}
\caption{Intransitively useless edge}\label{fig:redundant edge}
\end{wrapfigure}
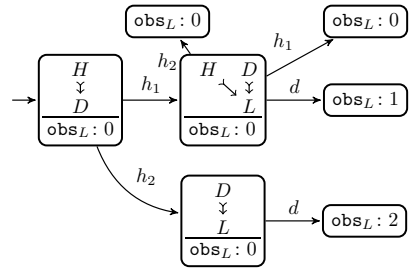

In our discussion of \dpsecty we observed that local policies may contain edges that can never be used. This issue also occurs in the intransitive setting, but the situation is more involved. In the transitive setting, it is sufficient to ``remove any incoming edge for $u$ that $u$ must not know about''  (see Theorem~\ref{theorem:uniformpolicies}). In the intransitive setting it is not: when the system in Fig.~\ref{fig:redundant edge} is in state $h_1$, then agent $L$ must not know that the edge $D\dintrel{} L$ is present, since states $\epsilon$ and $h_1$ should be indistinguishable for $L$, but clearly, the edge cannot be removed without affecting security. However, useless edges still exist in the intransitive setting, even in the system from Figure~\ref{fig:redundant edge}, as we will show below.

To formally define useless edges, we adapt \dpsimilarity to the intransitive setting in the natural way. 

\begin{definition}[\dipsimilarity]
  For an agent $u$, let $\approx^i_u$ be the smallest equivalence
  relation on the states of $M$ such that for all $s \in \States$, $a
  \in \Actions$, $\alpha\in \Actions^*$, if $\dom(a)\notin\dsrc{a\alpha}us$, then $s\cdot a\alpha\approx^i_us\cdot\alpha$. We call states $s$ and $s'$ with $s\approx^i_u s'$  \emph{\dipsimilar for $u$}.
\end{definition}

Using this, we can now define intransitively useless edges:

\begin{definition}[intransitively useless edge]
  Let $e$ be an edge in a local policy of \dpol and let $({\hat \rightarrowtail}_s)_{s\in S}$ be the policy obtained from \dpol by removing~$e$. Let $\approx^i_u$ and ${\hat \approx}^i_u$ be the respective i-similarity relations. Then $e$ is \emph{intransitively useless} if $s\approx^i_u s'$ if and only if $s {\hat \approx}^i_u s'$ for all states $s$ and $s'$ and all agents $u$.
\end{definition}

An edge is intransitively useless if removing it does not forbid any information flow that was previously allowed. In particular, such an edge itself cannot be used directly.
Whether an edge is useless does not depend on the observation function of the system, but only on the policy and the transition function, whereas a definition of security compares observations in different states.

If the policy does not contain any intransitively useless edges, then there is no edge in any of its local policies that is contradicted by other aspects of the policy. In other words, the set of information flows \emph{forbidden} by such  a policy is closed under logical deduction---every edge that can be shown to represent a forbidden information flow is absent in the policy. 

Fig.~\ref{fig:redundant edge} shows a secure system with an intransitively useless edge. The system is secure (agent $L$ knows whether in the initial state, $h_1$ or $h_2$ was performed, as soon as this information is transmitted by agent $D$). The edge $H\dintrel{h_1}L$ is intransitively useless, as explained in what follows. 

The edge allows $L$ to distinguish between the states $h_1, h_1h_1, h_1h_2$. However, one can verify that $h_2h_1\approx^i_L h_1$,  $h_2h_1h_1\approx^i_L h_2h_1$,  $h_2h_1h_1\approx^i_L h_1h_1$,  $h_2h_1h_2\approx^i_L h_2h_1$, and $h_2h_1h_2\approx^i_L h_1h_2$ all hold.
Symmetry and transitivity of $\approx^i_L$ imply that all the three states $h_1,h_1h_1,h_1h_2$ are $\approx^i_L$-equivalent. Hence the edge $H\dintrel{h_1}L$ is indeed intransitively useless (and the system would be insecure if $h_1$, $h_1h_1$, and $h_1h_2$ would not have the same observations).

Intransitively useless edges can be removed without affecting security: 

\begin{theorem}[removal of intransitively useless edges]
\label{theorem:redundant edges}
  Let $\dpolprime$ be obtained from $\dpol$ by removing a set of edges which are intransitively useless. Then $M$ is 
\dipsecure
with respect to $\dpol$ if and only if $M$ is 
\dipsecure
with respect to $\dpolprime$.
\end{theorem}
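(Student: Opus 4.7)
The plan is to reduce the theorem to showing that for each agent $u\in\Dom$, the \dipsimilarity relations $\approx^i_u$ computed with respect to $\dpol$ and with respect to $\dpolprime$ coincide. This reduction is sound: $M$ is \dipsecure with respect to a given policy if and only if $\obs_u$ is constant on every equivalence class of the corresponding $\approx^i_u$, since the defining clause of \dipsecurity matches exactly the generator condition of $\approx^i_u$, and observational equality is itself an equivalence relation, so constancy on generators propagates to the closure. Once $\approx^i_u(\dpol)=\approx^i_u(\dpolprime)$ is established, the security conditions with respect to the two policies become literally the same.

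One inclusion, $\approx^i_u(\dpol)\subseteq\approx^i_u(\dpolprime)$, is immediate from monotonicity: because $\dpolprime$ has fewer edges than $\dpol$, a short induction on the length of $a\alpha$ shows $\dsrc{a\alpha}us$ under $\dpolprime$ is contained in $\dsrc{a\alpha}us$ under $\dpol$, so every generator pair for $\dpol$ is a generator pair for $\dpolprime$.

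For the reverse inclusion I would show directly that every generator pair of $\approx^i_u(\dpolprime)$ lies in $\approx^i_u(\dpol)$. Given such a pair $(s\cdot a\alpha,s\cdot\alpha)$ with $\dom(a)\notin\dsrc{a\alpha}us(\dpolprime)$, choose a $\subseteq$-minimal $E'\subseteq E$ with $\dom(a)\notin\dsrc{a\alpha}us(\dpol\setminus E')$. If $E'=\emptyset$, the pair is already a generator under $\dpol$ and we are done. If $|E'|=1$, say $E'=\{e\}$, the pair is a generator under $\dpol\setminus\{e\}$, and the uselessness of $e$ (i.e.\ $\approx^i_u(\dpol)=\approx^i_u(\dpol\setminus\{e\})$) puts the pair in $\approx^i_u(\dpol)$. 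The substantive case is $|E'|\geq 2$, where no single edge of $E'$ alone blocks the transmission. Here I would unfold the recursive definition of $\dsrc$ along $\alpha$, locate a position at which a single edge of $E'$ is individually responsible for a sub-propagation step, invoke that edge's uselessness to certify a pair of intermediate states inside $\approx^i_u(\dpol)$, and splice these pairings together by transitivity and symmetry of $\approx^i_u(\dpol)$.

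The main obstacle is the splicing: $\approx^i_u$ is defined only as the symmetric-transitive closure of its generators and is not a priori compatible with prepending or appending actions, so the parallel-paths case cannot be reduced to the single-edge case by a direct uselessness invocation. I would therefore proceed by a nested induction on the cardinality of $E'$ and on the length of $\alpha$, each step producing a strictly smaller instance whose resolution can be combined by transitivity; carrying out the decomposition cleanly, and in particular showing that every intermediate pair arising in the chain is genuinely certified by the uselessness of one edge of $E'$ with respect to $\dpol$, is where the technical weight of the proof lies.
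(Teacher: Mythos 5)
Your core argument coincides with the paper's: both proofs rest on the observation that \dipsecurity with respect to a policy is equivalent to $\obs_u$ being constant on the classes of the corresponding relation $\approx^i_u$ (the defining clause of \dipsecurity is exactly the generator condition, and observational equality is itself an equivalence relation), and both then discharge the case of a single removed edge directly from the definition of an intransitively useless edge, which literally asserts that $\approx^i_u$ is unchanged by the removal. Your monotonicity argument for the inclusion $\approx^i_u(\dpol)\subseteq\approx^i_u(\dpolprime)$ likewise matches the paper's easy direction, where removing edges only shrinks the $\sources$ sets and hence only strengthens the security requirement.

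The gap is exactly where you flag it: for a set of removed edges, your case $|E'|\geq 2$ is a plan, not a proof. You correctly observe that $\approx^i_u$ is only the symmetric--transitive closure of its generators and is not a priori compatible with prepending or appending actions, so the splicing of intermediate pairs certified by the uselessness of individual edges is not routine, and you do not carry it out; as written, the reverse inclusion is established only when a single removed edge suffices to block the transmission. For what it is worth, the paper's own proof does not do better on this point: its entire treatment of the multi-edge case is the sentence that ``using induction, we can assume that $\dpolprime$ arose from $\dpol$ by removing a single intransitively useless edge,'' which silently presupposes that an edge that is intransitively useless with respect to $\dpol$ remains intransitively useless after another useless edge has already been removed---a claim of essentially the same difficulty as your splicing step, and one the paper neither states nor proves. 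So your proposal is more candid about a difficulty that the published proof glosses over, but neither text closes it; to complete your argument you would need precisely the preservation lemma that the paper's one-line induction presupposes, or a fully worked-out version of your decomposition.
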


This theorem implies that for every policy $\dpol$, a policy $\dpolprime$ without intransitively useless edges that is equivalent to $\dpol$ can be obtained from $\dpol$ by removing all intransitively useless edges. 

\subsection{Sound Unwindings and Uniform Intransitive Policies}

The exponential size unwinding of \dipsecurity given in Section~\ref{sect:intransitive unwinding exponential} does not yield a polynomial-time algorithm for security
verification. Since the problem
is \NP-complete, such an algorithm---and hence an unwinding that is both small and easy to compute---does not exist, unless $\PTIME=\NP$. In this section, we define unwinding conditions that lead to a polynomial-size unwinding and are \emph{sound} for 
\dipsecurity, and are \emph{sound and complete} for 
\dipsecure in the case of uniform policies. Uniform policies are (as in the transitive case) policies in which every agent ``may know'' the set of agents who may currently interfere with him, that is, if an agent $u$ must not distinguish two states by the security definition, then the set of agents that may interfere with $u$ must be identical in these two states. 
Formally, we define this property as follows.

\begin{definition}[intransitive uniform]
  A policy $\dpol$ is \emph{intransitively uniform}, if for all agents $u$ and states $s$, $s'$ with $s\approx^i_u s'$, we have that $\infagents{u}{s}=\infagents{u}{s'}$.
\end{definition}

Note that this definition is very similar to the uniformity condition for the transitive setting, but while in the transitive setting, uniform policies and policies without useless edges coincide, this is not true for intransitive noninterference (in fact, neither implication holds). 

Uniformity, on an abstract level, is a natural requirement and often met in concrete systems, since an agent usually knows the sources of information available to him. In the uniform setting, many of the subtle issues with local policies do not occur anymore; as an example, \dipsecurity\ and the security definition from~\cite{Leslie-DYNAMICNONINTERFERENCE-SSE-2006} coincide for uniform policies. Uniformity also has nice algorithmic properties, as both, checking whether a system has a uniform policy and checking whether a system with a uniform policy satisfies \dipsecurity, can be performed in polynomial time. This follows from the characterizations of i-security in terms of the unwindings we define next.

\begin{definition}[uniform intransitive unwinding]
 A \emph{uniform intransitive unwinding} for $M$ with a policy \dpol\ is a family of equivalence relations $\sim^{\tilde s,v}_u$ for each
 choice of states $\tilde s$ and agents $v$ and $u$,
 such that for all $s, t \in \States$, and all $a \in \Actions$, the following holds:
 \begin{itemize}
  \item  If $s\sim^{\tilde s,v}_ut$, then $\obs_u(s)=\obs_u(t)$. 
    \hfill \dOC
  \item  If $s\sim^{\tilde s,v}_ut$, then $\infagents{u}{s}=\infagents{u}t$. 
    \hfill \dPC
  \item If $s\sim^{\tilde s,v}_ut$ and $a\in
    A$ with $v\not\dintrel{\tilde s}\dom(a)$, then $s\cdot
    a\sim^{\tilde s,v}_ut\cdot a$.
      \hfill \dSC
  \item If $\dom(a)\not\dintrel{\tilde s}u$,
    then $\tilde s\sim^{\tilde s,\dom(a)}_u \tilde s\cdot a$.
    \hfill \dLR
 \end{itemize}
\end{definition}

In the following theorem intransitive uniformity and \dipsecurity (for uniform policies) are characterized by almost exactly the same unwinding. The only difference is that for uniformity we require policy consistency \dPC, since we are concerned with having the same \emph{local policies} in certain states, while for security, we require \dOC, since we are interested in \emph{observations}.

\pagebreak
\begin{theorem}[uniform unwinding characterizations]
  \label{theorem:polynomial unwinding characterization of intransitive uniformity and security}
  \begin{enumerate}
  \item The policy $\dpol$ is intransitively uniform if and only if there is a uniform intransitive unwinding for $M$ and $\dpol$ that satisfies \dPC, \dSC, and \dLR.
  \item If $\dpol$ is intransitively uniform, then $M$ is \dipsecure if and only if there is a uniform intransitive unwinding  that satisfies \dOC, \dSC and \dLR.
  \end{enumerate}
\end{theorem}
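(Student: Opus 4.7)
Both parts follow the same scheme. For each forward direction I would define, for each $\tilde s, v, u$, the relation $\sim^{\tilde s, v}_u$ to be the smallest equivalence relation containing the pairs required by \dLR\ and closed under \dSC; then \dLR\ and \dSC\ hold by construction, and the task reduces to verifying \dPC\ (Part 1) or \dOC\ (Part 2). The pairs in this minimal relation form the equivalence closure of pairs of the form $(\tilde s \cdot \gamma,\, \tilde s \cdot a \gamma)$ with $\dom(a) = v \not\dintrel{\tilde s} u$ and with every action $c$ appearing in $\gamma$ satisfying $v \not\dintrel{\tilde s} \dom(c)$. A short backward induction along the sources recursion shows $\dom(a) \notin \dsrc{a\gamma}{u}{\tilde s}$ for every such pair: reflexivity of $\dintrel{\tilde s}$ combined with these assumptions forces $v \neq u$ and $v \neq \dom(c)$, so $v$ never enters the accumulated sources set, and the initial step likewise excludes $v$ because $v$ has no $\dintrel{\tilde s}$-edge into that set. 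Hence every pair in $\sim^{\tilde s, v}_u$ is $\approx^i_u$-related, and the appropriate hypothesis---intransitive uniformity of $\dpol$ in Part 1, \dipsecurity\ in Part 2---supplies the required equality of $\infagents{u}$ or $\obs_u$.

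For the converse directions, I would show that every generating pair $(\tilde s \cdot a\alpha,\, \tilde s \cdot \alpha)$ of $\approx^i_u$---one with $\dom(a) \notin \dsrc{a\alpha}{u}{\tilde s}$---lies in the transitive closure $\equiv_u$ of $\bigcup_{\tilde s', v'} \sim^{\tilde s', v'}_u$; then \dPC\ yields uniformity (Part 1) and \dOC\ yields \dipsecurity\ (Part 2). I would proceed by strong induction on $|\alpha|$. The base case $\alpha = \epsilon$ is handled directly by \dLR. For the inductive step, write $\alpha = b\alpha'$ and split on whether $\dom(b) \in T_1 := \dsrc{b\alpha'}{u}{\tilde s \cdot a}$. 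If $\dom(b) \notin T_1$, then the sources recursion also yields $\dom(a) \notin \dsrc{a\alpha'}{u}{\tilde s}$, and two applications of the inductive hypothesis give $\tilde s \cdot ab\alpha' \equiv_u \tilde s \cdot a\alpha' \equiv_u \tilde s \cdot \alpha'$. If $\dom(b) \in T_1$, the sources constraint forces $\dom(a) \not\dintrel{\tilde s} \dom(b)$, so \dLR\ combined with \dSC\ applied at $\tilde s$ with action $b$ yields $\tilde s \cdot b \sim^{\tilde s, \dom(a)}_u \tilde s \cdot ab$, and one reduces the remaining suffix via further applications of the inductive hypothesis.

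The main obstacle is closing the chain at the intended endpoint $\tilde s \cdot b\alpha'$ in the first case above. For the converse of Part 2 this is straightforward: the assumed uniformity implies that the sources computations at $\tilde s$ and at $\tilde s \cdot a$ agree on the relevant edges, so $\dom(b) \notin T_1$ transfers to $\dom(b) \notin \dsrc{b\alpha'}{u}{\tilde s}$, and a third application of the inductive hypothesis connects $\tilde s \cdot \alpha'$ to $\tilde s \cdot b\alpha'$. The converse of Part 1 is the delicate point, since uniformity is precisely what is being proven and cannot be invoked. To handle it I would strengthen the inductive claim so as to track, at every level of the sources recursion, exactly which agents are guaranteed to lie outside the accumulating sources sets; the additional bookkeeping justifies applications of \dLR\ and \dSC\ at appropriately chosen intermediate base states rather than only at $\tilde s$, and the transitive closure $\equiv_u$ then glues the individual segments into a chain connecting $\tilde s \cdot a\alpha$ and $\tilde s \cdot \alpha$.
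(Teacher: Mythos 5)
Your forward directions (uniformity, respectively \dipsecurity, implies the existence of the unwinding) are fine: the smallest equivalence relations generated by \dLR\ and closed under \dSC\ are exactly the equivalence closures of the pairs $(\tilde s\cdot a\gamma,\tilde s\cdot\gamma)$ with $v=\dom(a)\not\dintrel{\tilde s}u$ and $v\not\dintrel{\tilde s}\dom(c)$ for every $c$ in $\gamma$, and since $\dsrc{\gamma}{u}{\tilde s\cdot a}\subseteq\set{u}\cup\set{\dom(c)\mid c\text{ in }\gamma}$, each such pair satisfies $\dom(a)\notin\dsrc{a\gamma}{u}{\tilde s}$; the hypothesis then delivers \dPC\ or \dOC. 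This is a legitimate dual of the paper's construction, which takes the largest relations satisfying \dSC\ together with \dPC/\dOC\ and verifies \dLR\ instead. The converse directions, however, contain real gaps. First, in your case $\dom(b)\in T_1$, the step ``reduce the remaining suffix via further applications of the inductive hypothesis'' does not go through: \dSC\ for $\sim^{\tilde s,\dom(a)}_u$ only lets you append actions $c$ with $\dom(a)\not\dintrel{\tilde s}\dom(c)$, and $\dom(a)\notin\dsrc{a\alpha}{u}{\tilde s}$ guarantees this only for actions whose domains actually enter the sources set; an action $c$ in $\alpha'$ that is \emph{not} transmitted to $u$ may satisfy $\dom(a)\dintrel{\tilde s}\dom(c)$ and blocks \dSC. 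Your inductive hypothesis removes the \emph{first} action of a generating pair, so it cannot be used to eliminate such a blocking action from the middle of $\alpha'$ without changing all subsequent states and hence all subsequent sources computations. Second, in the case $\dom(b)\notin T_1$, the claim that ``the sources recursion also yields $\dom(a)\notin\dsrc{a\alpha'}{u}{\tilde s}$'' requires identifying $\dsrc{\alpha'}{u}{\tilde s\cdot a}$ with $\dsrc{\alpha'}{u}{\tilde s\cdot ab}$, the same suffix evaluated along two different state sequences; this transfer is not a formal consequence of the recursion but needs uniformity (it is Lemma~\ref{lemma:same dsources on paths in uniform policies}).

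For Part~2's converse you may invoke the assumed uniformity to repair the second point, but the first point remains open even there. For Part~1's converse both points are unresolved: you correctly identify the circularity, but ``strengthen the inductive claim so as to track which agents lie outside the accumulating sources sets'' is not specified to the point where one can check that the induction closes, and this is exactly where the difficulty of the theorem lives. The paper takes a structurally different and lighter route for the converses: instead of placing \emph{every} generating pair of $\approx^i_u$ into the union of the unwinding relations, it argues by contraposition from a \emph{minimal} violating witness (Lemmas~\ref{lemma:minimal path different observations} and~\ref{lemma:minimal path different policies}), showing that minimality forces the witness $\alpha$ to contain no action $b$ with $\dom(a)\dintrel{s}\dom(b)$, whereupon \dLR\ followed by repeated \dSC\ applies verbatim and contradicts \dPC\ or \dOC. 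In Lemma~\ref{lemma:minimal path different policies} the minimality of the counterexample itself supplies the uniformity needed for the source-transfer on strictly shorter sub-witnesses; this bootstrap is the missing ingredient in your sketch, and without it (or a fully worked-out simultaneous induction over all agents and suffix lengths) the converse of Part~1 is not proved.
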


In particular, if an unwinding satisfying all four conditions exists, then a system is secure.
Due to Theorem~\ref{theorem:intransitive case np complete}, we cannot hope that the above unwindings completely characterize \dipsecurity, and indeed the system in Fig.~\ref{fig:redundant edge} is 
\dipsecure but not intransitively uniform.
However, for uniform policies, Theorem~\ref{theorem:polynomial unwinding characterization of intransitive uniformity and security} immediately yields efficient algorithms to verify the respective conditions via a standard dynamic programming approach:

\begin{corollary}[uniform unwinding verification]
\begin{enumerate}
 \item Verifying whether a policy is intransitively uniform can be performed in nondeterministic logarithmic space.
 \item For systems with intransitively uniform policies, verifying whether a system is 
   \dipsecure
can be performed in nondeterministic logarithmic space.
\end{enumerate}
 \end{corollary}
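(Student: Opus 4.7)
The plan is to apply Theorem~\ref{theorem:polynomial unwinding characterization of intransitive uniformity and security} and reduce both tasks to membership queries on a single family of equivalence relations $\sim^{\tilde s, v}_u$, which I will show is decidable by $\NL$ reachability on an implicit product graph. The first step is to observe that, for each triple $(\tilde s, v, u)$, both \dLR and \dSC are closure conditions preserved under pointwise intersection of families of equivalence relations on $S$, so a pointwise-smallest family satisfying them exists. Since \dPC and \dOC are anti-monotone (adding more pairs can only violate them), a uniform intransitive unwinding satisfying \dPC, \dSC, \dLR (respectively \dOC, \dSC, \dLR) exists if and only if the smallest family satisfying \dSC and \dLR also satisfies \dPC (respectively \dOC). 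By Theorem~\ref{theorem:polynomial unwinding characterization of intransitive uniformity and security}, intransitive uniformity and \dipsecurity thus become universal statements over this smallest family.

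The second step is to show that membership in $\sim^{\tilde s, v}_u$ lies in $\NL$. For each fixed $(\tilde s, v, u)$, I would build an implicit graph on $S \times S$ with an edge from $(s,t)$ to $(s \cdot a, t \cdot a)$ for every action $a$ satisfying $v \not\dintrel{\tilde s} \dom(a)$, reflecting \dSC. The base vertices are all pairs $(\tilde s, \tilde s \cdot a)$ with $\dom(a) = v$ and $v \not\dintrel{\tilde s} u$ (from \dLR), together with the diagonal $\{(s,s) : s \in S\}$ (from reflexivity). After symmetrizing edges, $(s,t)$ lies in $\sim^{\tilde s, v}_u$ if and only if $(s,t)$ is reachable from a base vertex, which is a standard $\NL$ reachability problem; the edges are computable in logspace from a single $\mathtt{step}$ lookup and a single policy query, so the construction fits within the logspace budget.

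The third step is to decide the remaining conditions by guessing counterexamples. For part~1, non-uniformity is witnessed by a tuple $(\tilde s, v, u, s, t, w)$ such that $s \sim^{\tilde s, v}_u t$ but $w \dintrel{s} u$ exclusive-or $w \dintrel{t} u$ holds; guessing the tuple, verifying reachability in $\NL$ as above, and checking the local discrepancy in logspace shows non-uniformity is in $\NL$. For part~2, with the policy assumed intransitively uniform, insecurity is similarly witnessed by $(\tilde s, v, u, s, t)$ with $s \sim^{\tilde s, v}_u t$ but $\obs_u(s) \neq \obs_u(t)$. Both complements lie in $\NL$, hence so do uniformity and \dipsecurity themselves by closure of $\NL$ under complement (Immerman--Szelepcs\'enyi).

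The main obstacle is handling the equivalence closure within logspace; materialising the reflexive, symmetric, and transitive closure explicitly would be infeasible, but the construction above sidesteps this by folding reflexivity and symmetry into the base set and edge set of the product graph and letting transitivity be subsumed by ordinary reachability. A small point still requiring verification is that the set of pairs reachable from the base in the described graph really coincides with the smallest family satisfying \dLR and \dSC; this follows from monotonicity of the composed closure operator and the observation that any pair forced in by \dLR or \dSC corresponds to either a base vertex or an outgoing edge in the product graph.
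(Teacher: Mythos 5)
Your overall strategy---reduce to the smallest family of relations satisfying \dSC and \dLR via monotonicity, decide membership by reachability in an implicit product graph on $S\times S$, guess a violating pair, and close under complement with Immerman--Szelepcs\'enyi---is exactly the ``straight-forward manner'' the paper alludes to (it gives no detailed proof of this corollary beyond invoking Theorem~\ref{theorem:polynomial unwinding characterization of intransitive uniformity and security}). However, the step you yourself flag as ``still requiring verification'' is, as stated, false: the set of pairs reachable from the base vertices in your product graph does \emph{not} coincide with the smallest family satisfying \dLR and \dSC. That family consists of \emph{equivalence relations}, hence is transitively closed, whereas reachability along component-wise $\mathtt{step}$-edges never combines two independent base pairs. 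For instance, with two \dLR-pairs $(\tilde s,\tilde s\cdot a)$ and $(\tilde s,\tilde s\cdot a')$ sharing $\dom(a)=\dom(a')=v$, transitivity forces $\tilde s\cdot a\sim^{\tilde s,v}_u\tilde s\cdot a'$, but this pair is in general not reachable from any base vertex. Your closing justification (``any pair forced in by \dLR or \dSC corresponds to either a base vertex or an outgoing edge'') overlooks precisely the pairs forced in by transitivity, and symmetrizing the edge relation does not repair this.

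The algorithm nevertheless survives, but for a different reason, which you need to supply. The correct statement is that the smallest admissible $\sim^{\tilde s,v}_u$ equals the \emph{equivalence closure} of your reachable set $R_0$: one checks that the equivalence closure of $R_0$ is still closed under \dSC, because $R_0$ is closed under forward \dSC-edges and so any zigzag chain witnessing $s\sim t$ maps to a chain witnessing $s\cdot a\sim t\cdot a$. Then observe that \dPC and \dOC each assert that a function ($\infagents{u}{\cdot}$, respectively $\obs_u$) is constant on equivalence classes; such a condition fails on the equivalence closure of $R_0$ if and only if it already fails on some pair of $R_0$ itself, since the function values propagate along chains. With this substitution your counterexample-guessing procedure is sound and complete, and the \NL\ bound goes through as you describe.
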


The above shows that the complexity of intransitive noninterference with local policies comes from the \emph{combination} of local policies that do not allow agents to ``see'' their allowed sources of information with an intransitive security definition. In the transitive setting, this interplay does not arise, since there a system always can allow agents to ``see'' their incoming edges (see Theorem~\ref{theorem:uniformpolicies}).

\subsection{Unwinding for \ipsecuritysection}

In the setting with a global policy, \dipsecurity
is equivalent to IP-security as defined in~\cite{HY87}. For
IP-security, Rushby gave unwinding conditions that are sufficient, but
not necessary. 
This left open the question whether there is an unwinding condition that \emph{exactly} characterizes IP-security, which we can now answer positively as follows. Clearly, a policy that assigns the same local policy to every state is intransitively uniform. Hence our results immediately yield a characterization of IP-security with the above unwinding conditions, and from these, an algorithm verifying IP-security in nondeterministic logarithmic space can be obtained in the straight-forward manner.

\begin{corollary}[unwinding for \ipsecurity]
 \begin{enumerate}
  \item A system is IP-secure if and only if it has an intransitive unwinding satisfying \dOC, \dSC, and \dLR.
  \item IP-security can be verified in nondeterministic logarithmic space.
 \end{enumerate}
\end{corollary}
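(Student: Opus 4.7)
The plan is to derive both parts of the corollary as direct specializations of the uniform-policy theory to the global-policy setting. The single observation that unlocks everything is that any global policy $\dpol$ is intransitively uniform: since $\dintrel{s}$ does not depend on $s$, the set $\infagents{u}{s}=\set{v\mid v\dintrel{s}u}$ is the same in every state, so the defining implication of intransitive uniformity (if $s\approx^i_u s'$ then $\infagents{u}{s}=\infagents{u}{s'}$) holds vacuously. I would make this explicit in a single line before invoking any further machinery.

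For part~(1), I would combine this uniformity with the fact, noted in the paragraph preceding the corollary, that in the global-policy setting \dipsecurity and IP-security coincide. Applying Theorem~\ref{theorem:polynomial unwinding characterization of intransitive uniformity and security}(2) then yields that the system is IP-secure iff there is a family of equivalence relations $\sim^{\tilde s,v}_u$ satisfying \dOC, \dSC, and \dLR, which is exactly the claimed unwinding characterization. For part~(2), I would appeal to the preceding corollary on verification of uniform unwindings: since global policies are automatically intransitively uniform, no preliminary uniformity check is necessary, and \dipsecurity under a uniform policy is decidable in \NL. Concretely, the algorithm searches for a counterexample by guessing $\tilde s$, $v$, $u$ together with a pair $s,t$ and a reachability-style certificate that $s\sim^{\tilde s,v}_u t$ (using \dLR as the base case and \dSC as the closure rule), then verifies $\obs_u(s)\neq\obs_u(t)$; by Immerman--Szelepcs\'enyi, \NL is closed under complement, giving membership of IP-security itself in \NL.

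No substantive obstacle is anticipated; the work was already done in Theorem~\ref{theorem:polynomial unwinding characterization of intransitive uniformity and security} and in the corollary on uniform unwinding verification. The only bookkeeping point is to interpret the phrase ``intransitive unwinding'' in the corollary statement as the \emph{uniform} intransitive unwinding introduced in the previous subsection, which is unambiguously signaled by the reference to conditions \dOC, \dSC, and \dLR and is the only notion of unwinding for which these three conditions are defined. In summary, the entire proof is a matter of citing the two preceding results and pointing out that the uniformity hypothesis is satisfied ``for free'' in the global-policy setting.
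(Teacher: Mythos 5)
Your proposal is correct and follows exactly the paper's own route: observe that a global policy is automatically intransitively uniform (the sets $\infagents{u}{s}$ are constant in $s$), identify \dipsecurity with IP-security in the global setting, and invoke Theorem~\ref{theorem:polynomial unwinding characterization of intransitive uniformity and security} together with the uniform unwinding verification corollary for the \NL bound. The only cosmetic remark is that the uniformity implication holds because its conclusion is always true, not ``vacuously''; this does not affect the argument.
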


\section{Conclusion}

We have shown that noninterference with local policies is considerably different from noninterference with a global policy: an allowed interference in one state may contradict a forbidden interference in another state. Our new definitions address this issue. Our purge- and unwinding-based characterizations show that our definitions are natural, and directly lead to our complexity results. 

We have studied generalizations of Rusby's IP-security~\cite{rushby92}. An interesting question is to study van der Meyden's TA-security~\cite{meyden2007} in a setting with local policies. Preliminary results indicate that such a generalization needs to use a very different approach from the one used in this paper.

\bibliographystyle{alpha}

\section{Additional Results}

In this Section we present and prove additional results which were informally mentioned in the main paper. 
 
\subsection{Initial-State Verification Suffices for Uniform Policies}

One noteworthy difference to the case of a system with a global policy is that it is
necessary to evaluate the $\dpurge$-function in every state,
and not only in the initial state:
The system in Figure~\ref{fig:dpurge_in_all_states}
is secure with respect to the \dpurge-based characterization of \dpsecty, if we only consider traces starting in the initial
state, but can easily be seen to not be \dpsecure. 

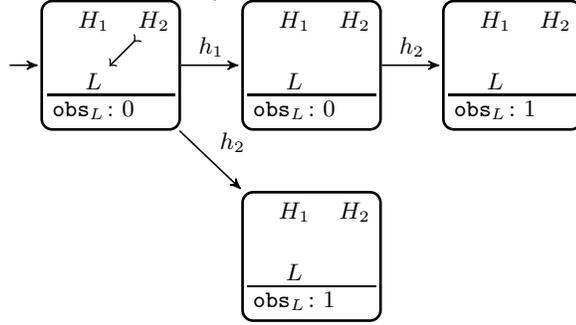
\begin{figure}[h]
 \begin{tikzpicture}[tikzglobal]
   
    \newcommand{\agents}{ 
      % defines the agents and their positions in the policy
      \node[agent] (h1) {$H_1$};
      \node[agent] (h2) [right of=h1] {$H_2$};
      \node[agent] (l) [below of=h1] {$L$}; 
}

 \node[initial,systemstate] (q0) {
   \agents
   \\
   \hline
   $\obs_L \colon 0$ \\
};
   \path[policy] (h2) edge node {} (l);
   
 \node[systemstate] (q1) [right=of q0] {
   \agents
  \\
   \hline
   $\obs_L \colon 0$ \\
};
 \node[systemstate] (q2) [right=of q1] {
   \agents
  \\
   \hline
   $\obs_L \colon 1$ \\
};
\node[systemstate] (q3) [below=of q1] {
   \agents
  \\
   \hline
   $\obs_L \colon 1$ \\
};

\path (q0) edge node {$h_1$} (q1) 
edge node {$h_2$} (q3)
(q1) edge node {$h_2$} (q2)
;
\end{tikzpicture}
\caption{System with a non-uniform policy}
\label{fig:dpurge_in_all_states}
\end{figure}

However, in the case of a uniform policy, it suffices to consider traces starting in the initial state, as we now show.

\begin{theorem}
  \label{theorem:dpurge_uniform_security}
  Let $M$ be a system with a uniform policy. 
  Then $M$ is 
  \dpsecure
 iff for all $u \in \Dom$ and all $\alpha \in \Actions^*$: 
  $\obs_u(s_0 \cdot \alpha) = \obs_u(s_0 \cdot \dpurge(\alpha, u, s_0))$. 
\end{theorem}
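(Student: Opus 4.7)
The plan is to establish the two directions separately, with uniformity playing a role only in the backward direction. For the forward direction, I would use the purge-based characterization of \dpsecty from Theorem~\ref{thm:dp_characterizations} together with a short idempotence lemma: $\dpurge(\dpurge(\alpha,u,s),u,s)=\dpurge(\alpha,u,s)$ for every (not necessarily uniform) system. Idempotence is immediate from the definition, because the state argument of $\dpurge$ advances only on \emph{kept} actions, so if $\dpurge(\alpha,u,s)=a_{j_1}\cdots a_{j_m}$ with tracked states $s,\,s\cdot a_{j_1},\,\ldots,\,s\cdot a_{j_1}\cdots a_{j_{m-1}}$, then re-purging $a_{j_1}\cdots a_{j_m}$ from $s$ reproduces exactly these tracked states and keeps every letter. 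Applying Theorem~\ref{thm:dp_characterizations} to $\alpha$ and $\beta:=\dpurge(\alpha,u,s_0)$ then yields $\obs_u(s_0\cdot\alpha)=\obs_u(s_0\cdot\dpurge(\alpha,u,s_0))$.

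For the backward direction, fix $u\in\Dom$, $a\in\Actions$, $\alpha\in\Actions^*$ and a state $s=s_0\cdot\gamma$ reachable via some $\gamma$, with $\dom(a)\not\dintrel{s}u$. I would reduce the goal $\obs_u(s\cdot\alpha)=\obs_u(s\cdot a\alpha)$ to the syntactic identity $\dpurge(\gamma\alpha,u,s_0)=\dpurge(\gamma a\alpha,u,s_0)$; applying the theorem hypothesis to $\gamma\alpha$ and $\gamma a\alpha$ then yields the desired observation equality. To prove this identity, I would establish two auxiliary facts that use uniformity:
\begin{enumerate}
\item \emph{Tracked-state lemma.} After processing any prefix $\delta$ of $\gamma$ in the computation of $\dpurge(\gamma,u,s_0)$, the tracked state $s^\star_\delta:=s_0\cdot\dpurge(\delta,u,s_0)$ satisfies $s^\star_\delta\approx_u s_0\cdot\delta$. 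The proof is by induction on $|\delta|$: in the \emph{kept} case \scdp propagates the inductive similarity, while in the \emph{skipped} case uniformity converts the $\dpurge$-check $\dom(b)\not\dintrel{s^\star_{\delta'}}u$ into $\dom(b)\not\dintrel{s_0\cdot\delta'}u$, after which \lrdp together with transitivity close the step.
\item \emph{Similarity-invariance of $\dpurge$.} If $s\approx_u s'$, then $\dpurge(\beta,u,s)=\dpurge(\beta,u,s')$ for every $\beta\in\Actions^*$. The proof is a straightforward induction on $|\beta|$: uniformity forces the two computations to make the same keep/skip decision on the first letter, and \scdp preserves the similarity of the tracked states in the keep case.
\end{enumerate}
Combining these facts, in both $\dpurge(\gamma\alpha,u,s_0)$ and $\dpurge(\gamma a\alpha,u,s_0)$ the tracked state after $\gamma$ is the same $s^\star$ with $s^\star\approx_u s$. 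In the second computation, uniformity lifts $\dom(a)\not\dintrel{s}u$ to $\dom(a)\not\dintrel{s^\star}u$, so $a$ is skipped and the tracked state stays $s^\star$. By similarity-invariance, the continuation produces $\dpurge(\alpha,u,s^\star)=\dpurge(\alpha,u,s)$ in both cases, so the two full purges coincide.

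The main obstacle I expect is the tracked-state lemma. The definition of $\dpurge$ operates on an ``imaginary'' state obtained by advancing only along kept actions, which can drift arbitrarily far from the true state $s_0\cdot\delta$ after a long run of skipped actions; it is precisely uniformity that bridges the policies in these two states and thus the keep/skip decisions $\dpurge$ actually makes versus those it would make if it consulted the real state. Once the tracked-state lemma is available, the similarity-invariance of $\dpurge$ is a routine induction and the reduction to the purge equivalence $\dpurge(\gamma\alpha,u,s_0)=\dpurge(\gamma a\alpha,u,s_0)$ completes the argument.
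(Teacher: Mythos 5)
Your proposal is correct and follows essentially the same route as the paper: both directions reduce to the purge characterization of Theorem~\ref{thm:dp_characterizations}, and your two auxiliary facts (the tracked-state lemma and similarity-invariance of $\dpurge$) are precisely the content of Lemma~\ref{lemma:dpurge_properties}(3), with idempotence and concatenation being parts (1) and (2) of that lemma. The only difference is cosmetic — you spell out the uniformity induction that the paper defers to Lemma~\ref{lemma:dpurge_properties} and target the definition of \dpsecty directly in the backward direction rather than re-deriving the all-states purge condition.
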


\begin{proof}
  Assume that $M$ is a secure system. 
  Then from $s_0 \cdot \alpha \unwind_u s_0 \cdot \dpurge(\alpha, u,
  s_0)$ follows from the output consistency that
  $\obs_u(s_0 \cdot \alpha) = \obs_u(s_0 \cdot \dpurge(\alpha, u,
  s_0))$. 
  
  For the other direction of the proof, we consider $\alpha, \beta
  \in\Actions^*$ with $\dpurge(\alpha, u, s)$ $= \dpurge(\beta, u, s)$. 
  Then it exists $\gamma \in \Actions^*$ with $s = s_0 \cdot \gamma$. 
  It follows that $s_0 \cdot \gamma \unwind_u \dpurge(\gamma, u,
  s_0)$. 
  This gives
  \begin{align*}
    \obs_u(s \cdot \alpha) 
    & = \obs_u(s_0 \cdot \gamma \alpha) \\
    & = \obs_u(s_0 \cdot \dpurge(\gamma \alpha, u, s_0)) \\
    & = \obs_u(s_0 \cdot \dpurge(\gamma, u, s_0) \dpurge(\alpha, u,
    s_0 \cdot \dpurge(\gamma, u, s_0))) \\
    & = \obs_u(s_0 \cdot \dpurge(\gamma,u, s_0) \dpurge(\alpha, u, s_0
    \cdot \gamma)) \\
    & = \obs_u(s_0 \cdot \dpurge(\gamma,u, s_0) \dpurge(\beta, u, s_0
    \cdot \gamma)) \\
    & = \obs_u(s_0 \cdot \beta)
    \enspace. 
  \end{align*}
\end{proof}

\subsection{Some Properties of the \dpurge\  Function}

Here we show that our \dpurge\ function in the transitive setting behaves very naturally in the case of a uniform policy.

\begin{lemma}
  \label{lemma:dpurge_properties}
  Let $M$ be a system with a policy $\dpol$.
  For every $u \in \Dom$, $s, t \in \States$ and $\alpha, \beta \in \Actions^*$,
  we have
  \begin{enumerate}
  \item $\dpurge(\dpurge(\alpha, u, s), u, s) = \dpurge(\alpha, u,
    s)$,
  \item $\dpurge(\alpha \beta, u, s) = \dpurge(\alpha, u, s)
    \dpurge(\beta, u, s \cdot \dpurge(\alpha, u, s))$,
  \item if $\dpol$ is uniform and if $\unwind_u$ is an equivalence relation on $\States$
    that satisfies \lrdp and \scdp and if $s \unwind_u t$, then 
    $s \cdot \alpha \unwind_u t \cdot \dpurge(\alpha, u, t)$ and 
    $\dpurge(\alpha, u, s) = \dpurge(\alpha, u, t)$. 
  \end{enumerate}
\end{lemma}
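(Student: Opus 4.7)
All three parts are proved by structural induction on $|\alpha|$; only Part 3 invokes the uniformity assumption.

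For Part 1, the base case $\alpha=\epsilon$ is immediate, and in the inductive step $\alpha=a\alpha'$ I would case-split on whether $\dom(a)\dintrel{s}u$: if not, the outer $\dpurge$ does not see $a$ and the IH on $\alpha'$ applies directly; if so, the outer $\dpurge$ keeps the leading $a$ (the same edge is still present in state $s$) and advances to $s\cdot a$, again reducing to the IH. Part 2 uses the same induction and case split: expanding both sides by the recursive definition of $\dpurge$, each case reduces to the IH on the shorter $\alpha'$, with the only bookkeeping being that in the keep branch $s\cdot\dpurge(a\alpha',u,s) = (s\cdot a)\cdot\dpurge(\alpha',u,s\cdot a)$, which matches the state at which $\beta$ is purged on the right-hand side.

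For Part 3 I would prove the two claims jointly by induction on $|\alpha|$. The crucial preliminary is that under uniformity, any equivalence $\unwind_u$ satisfying \lrdp and \scdp forces $\infagents{u}{s}=\infagents{u}{t}$ whenever $s\unwind_u t$; this makes the $\dpurge$ ``keep vs.~drop'' choice at the head of $\alpha$ agree in $s$ and in $t$. In the keep branch, \scdp promotes $s\unwind_u t$ to $s\cdot a\unwind_u t\cdot a$ and both purges begin with $a$, so the IH on $\alpha'$ yields both conclusions. In the drop branch, \lrdp gives $s\cdot a\unwind_u s$, hence $s\cdot a\unwind_u t$; combined with $\dpurge(a\alpha',u,s)=\dpurge(\alpha',u,s)$ and $\dpurge(a\alpha',u,t)=\dpurge(\alpha',u,t)$, two applications of the IH (on $(\alpha',u,s\cdot a,t)$ for the $\unwind_u$-statement and on $(\alpha',u,s,t)$ for the purge equality) close the case.

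The main obstacle is the preliminary step. Uniformity is phrased via $\indistinguishable_u$, the smallest equivalence satisfying \lrdp and \scdp, so $\infagents{u}{s}=\infagents{u}{t}$ holds along $\indistinguishable_u$ directly, but an arbitrary $\unwind_u$ may be strictly larger and does not enjoy this for free. My plan is to replace $\unwind_u$ with the restriction $\unwind_u'=\unwind_u\cap\{(s,t):\infagents{u}{s}=\infagents{u}{t}\}$ and check, using uniformity, that $\unwind_u'$ is still an equivalence satisfying \lrdp and \scdp, so that on the pairs actually produced during the induction $\unwind_u'$ and $\unwind_u$ coincide. Once this preliminary is secured, the remaining reasoning is routine bookkeeping on the state argument of $\dpurge$.
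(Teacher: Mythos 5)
Parts 1 and 2 of your proposal coincide with the paper's own proof (induction on $\card\alpha$ with a case split on whether $\dom(a)\dintrel{s}u$) and are correct. For part 3 the paper writes only ``induction on the length of $\alpha$'', and you have correctly identified the real difficulty this hides: uniformity guarantees $\infagents{u}{s}=\infagents{u}{t}$ only for \dpsimilar pairs, i.e.\ along the \emph{smallest} equivalence relation satisfying \lrdp and \scdp, whereas the lemma's hypothesis admits an arbitrary, possibly much larger, $\unwind_u$. Your induction skeleton (keep branch via \scdp, drop branch via \lrdp plus transitivity, with the head keep/drop decision agreeing in $s$ and $t$) is the right one once that preliminary is secured.

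The repair you propose does not secure it, however. The relation $\unwind_u'=\unwind_u\cap\set{(s,t)\ \vert\ \infagents{u}{s}=\infagents{u}{t}}$ is an equivalence relation and does satisfy \lrdp (uniformity applied to the \dpsimilar pair $s$, $s\cdot a$), but \scdp can fail: from $s\unwind_u t$ and $\infagents{u}{s}=\infagents{u}{t}$ alone one cannot conclude $\infagents{u}{s\cdot a}=\infagents{u}{t\cdot a}$, because uniformity says nothing about pairs that are not \dpsimilar. Concretely, take $\unwind_u=\States\times\States$, which satisfies \lrdp and \scdp for every policy, and a uniform policy with two non-\dpsimilar states $s,t$ such that $\infagents us=\infagents ut$ but $\infagents u{s\cdot a}\neq\infagents u{t\cdot a}$; such systems exist, $\unwind_u'$ is then not step consistent, and indeed $\dpurge(aa,u,s)\neq\dpurge(aa,u,t)$, so the purge-equality conclusion itself is false for this $\unwind_u$---no preliminary lemma can rescue the statement read literally for arbitrary $\unwind_u$. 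The claim is true, and is all the paper needs, for pairs with $s\indistinguishable_u t$: run your induction along $\indistinguishable_u$ itself, which \emph{is} closed under \scdp, so uniformity applies at every step of $\alpha$ and yields both $s\cdot\alpha\indistinguishable_u t\cdot\dpurge(\alpha,u,t)$ and $\dpurge(\alpha,u,s)=\dpurge(\alpha,u,t)$; then $\indistinguishable_u\subseteq\unwind_u$ transfers the first conclusion to any $\unwind_u$ satisfying \lrdp and \scdp. This is exactly the form in which the lemma is used in Theorem~\ref{theorem:dpurge_uniform_security}, where the pairs fed into part 3, namely $(s_0\cdot\gamma,\ s_0\cdot\dpurge(\gamma,u,s_0))$, are \dpsimilar.
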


\begin{proof}
  \begin{enumerate}
  \item We show this by an induction on the length of $\alpha$. 
    Since the base case is obvious, we proceed with the inductive
    step. 
    We consider $a \alpha$ with $a \in \Actions$ and $\alpha \in
    \Actions^*$ and assume that the claim holds for $\alpha$. 
    In the following two cases, we get
    \begin{enumerate}
    \item If $\dom(a) \dintrel{s} u$,
      we have
      \begin{align*}
        \dpurge(\dpurge(a \alpha, u, s), u, s) 
        & = \dpurge(a \dpurge(\alpha, u, s\cdot a), u s)\\
        & = a \dpurge(\dpurge(\alpha, u, s\cdot a), u, s\cdot a) \\
        & \stackrel{\text{I.H.}}{=} a \dpurge(\alpha, u, s\cdot a)\\
        & = \dpurge(a \alpha, u, s)
          \enspace.
        \end{align*}
      \item If $\dom(a)\not\dintrel{s} u$, 
        we have
        \begin{align*}
          \dpurge(\dpurge(a \alpha, u, s), u, s)
          & = \dpurge(\dpurge(\alpha, u, s), u, s) \\
          &  \stackrel{\text{I.H.}}{=} \dpurge(\alpha, u, s)
          \enspace.
        \end{align*}
    \end{enumerate}
  \item We show this claim by an induction on the length of $\alpha$
    and consider again $a \alpha$. 
    We get the following two cases
    \begin{enumerate}
    \item If $\dom(a) \dintrel{s} u$, 
      we have
      \begin{align*}
        \dpurge(a \alpha \beta, u, s) 
        & = a \dpurge(\alpha \beta, u, s \cdot a) \\
        & \stackrel{\text{I.H.}}{=} a \dpurge(\alpha, u, s\cdot a)
        \dpurge(\beta, u, s \cdot a \dpurge(\alpha, u, s\cdot a)) \\
        & = \dpurge(a \alpha, u, s) \dpurge(\beta, u, s \cdot
        \dpurge(a \alpha, u, s)) 
        \enspace. 
      \end{align*}
    \item If $\dom(a) \not\dintrel{s} u$, 
      we have
      \begin{align*}
        \dpurge(a \alpha \beta, u, s) 
        & = \dpurge(\alpha \beta, u, s) \\
        & \stackrel{\text{I.H.}}{=} \dpurge(\alpha, u,
        s)\dpurge(\beta, u, s \cdot \dpurge(\alpha, u, s)) \\
        & = \dpurge(a \alpha, u, s) \dpurge(\beta, u, s \cdot \dpurge(a
        \alpha, u, s))
        \enspace. 
      \end{align*}
    \end{enumerate}
  \item 
    This can be shown by an induction on the length of $\alpha$. 
  \end{enumerate}
\end{proof}

\subsection{Equivalence of Intransitive Security Definitions for Uniform Policies}

We now show that in case of an intransitively uniform policy, a system is secure with respect to the definition of~\cite{Leslie-DYNAMICNONINTERFERENCE-SSE-2006} if and only if it is \dipsecure.

We first show the following Lemma, which intuitively says that if the first action of $a\alpha$ is not transmitted to $u$ on the path $a\alpha$, then the same actions on the remaining path $\alpha$ are transmitted to $u$ when evaluating $\alpha$ from the state $s$ or from the state $s\cdot a$ in the case of a uniform policy. This is the key reason why, for uniform policies, the difference between Leslie's function $\dipurgeprimename$ and our $\dipurgename$ is irrelevant.

\begin{lemma}\label{lemma:same dsources on paths in uniform policies}
 Let $M$ be a system with an intransitively uniform policy $\dpol$. Let $\dom(a)\notin\dsrc{a\alpha}us$, where $\alpha=\beta b\beta'$. Then 
$$\dom(b)\in\dsrc{b\beta'}u{s\cdot\beta}\mathtext{ iff
}\dom(b)\in\dsrc{b\beta'}{u}{s\cdot a \beta}.$$
\end{lemma}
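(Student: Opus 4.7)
My plan is to prove a strengthening of the lemma by induction on a suffix of $\alpha$. Specifically, I will show that for every factorization $\alpha=\gamma_1\gamma_2$, the sources sets $\dsrc{\gamma_2}{u}{s\cdot a\gamma_1}$ and $\dsrc{\gamma_2}{u}{s\cdot\gamma_1}$ coincide as sets. Taking $\gamma_1=\beta$ and $\gamma_2=b\beta'$ then immediately yields the stated ``iff'' (and in fact the stronger statement that the two entire source sets are equal).

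The key tool will be the following decomposition identity, which I verify by a short induction on $|\gamma_1|$ from the recursive definition of $\texttt{src}$ (using the trivial fact that iterating the definition of $\texttt{src}$ starting from a set $W$ of agents agrees with the union over $w\in W$ of the iteration started from $\{w\}$):
\begin{equation*}
\dsrc{\gamma_1\gamma_2}{u}{t} \;=\; \bigcup_{w\in\dsrc{\gamma_2}{u}{t\cdot\gamma_1}}\dsrc{\gamma_1}{w}{t}.
\end{equation*}
Applied to the trace $a\alpha$ with any split $(a\gamma_1)\cdot\gamma_2$, the hypothesis $\dom(a)\notin\dsrc{a\alpha}{u}{s}$ forces $\dom(a)\notin\dsrc{a\gamma_1}{v}{s}$ for every $v\in\dsrc{\gamma_2}{u}{s\cdot a\gamma_1}$.

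For the induction itself, the base case $\gamma_2=\epsilon$ gives $\{u\}=\{u\}$ trivially. In the inductive step, write $\gamma_2=c\gamma_2'$ and let $\gamma_1'=\gamma_1 c$. By the induction hypothesis the sets $V:=\dsrc{\gamma_2'}{u}{s\cdot a\gamma_1'}=\dsrc{\gamma_2'}{u}{s\cdot\gamma_1'}$ agree. Unfolding one step of the $\texttt{src}$ definition, both candidate sets $\dsrc{\gamma_2}{u}{s\cdot a\gamma_1}$ and $\dsrc{\gamma_2}{u}{s\cdot\gamma_1}$ are equal to $V$, possibly augmented by $\dom(c)$; the deciding condition is whether $\dom(c)\dintrel{s\cdot a\gamma_1}v$, respectively $\dom(c)\dintrel{s\cdot\gamma_1}v$, holds for some $v\in V$. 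Now $V\subseteq\dsrc{\gamma_2}{u}{s\cdot a\gamma_1}$ because adding more actions on the left only enlarges the sources set, so the decomposition identity (with split $(a\gamma_1)\gamma_2$) yields $\dom(a)\notin\dsrc{a\gamma_1}{v}{s}$ for every $v\in V$. By the defining property of $\approx^i_v$, this means $s\cdot a\gamma_1\approx^i_v s\cdot\gamma_1$, and intransitive uniformity of $\dpol$ then ensures $\infagents{v}{s\cdot a\gamma_1}=\infagents{v}{s\cdot\gamma_1}$. Consequently the two edge conditions above coincide for each $v\in V$, so the augmenting decisions are identical and $\dsrc{\gamma_2}{u}{s\cdot a\gamma_1}=\dsrc{\gamma_2}{u}{s\cdot\gamma_1}$.

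The main delicate point is bookkeeping: setting up the decomposition identity precisely, making sure the inclusion $V\subseteq\dsrc{\gamma_2}{u}{s\cdot a\gamma_1}$ (needed to invoke it) is justified, and correctly transferring the $\approx^i_v$-relation between the two states via uniformity. Once those are in place the argument is essentially a single application of intransitive uniformity at each inductive step.
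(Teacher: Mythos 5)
Your proof is correct, and its engine is the same one the paper uses: for every agent $v$ in the source set of the remaining suffix, $\dom(a)$ is not a source of the prefix $a\gamma_1$, hence $s\cdot a\gamma_1\approx^i_v s\cdot\gamma_1$, and intransitive uniformity forces $\infagents{v}{s\cdot a\gamma_1}=\infagents{v}{s\cdot\gamma_1}$, so both runs make the same decision about augmenting the source set at the next step. The packaging, however, differs genuinely. The paper proves only the stated membership equivalence, argues by a minimal counterexample on the length of $b\beta'$, and must treat two symmetric cases (one per direction of the ``iff''), each time locating the action $c$ at which the agent ``learns'' and appealing to minimality; the key fact $\dom(a)\notin\dsrc{a\beta}{\dom(c)}{s}$ is asserted there without justification. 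You instead strengthen the induction hypothesis to full equality of the source sets, $\dsrc{\gamma_2}{u}{s\cdot a\gamma_1}=\dsrc{\gamma_2}{u}{s\cdot\gamma_1}$ for every split $\alpha=\gamma_1\gamma_2$, and isolate the decomposition identity $\dsrc{\gamma_1\gamma_2}{u}{t}=\bigcup_{w\in\dsrc{\gamma_2}{u}{t\cdot\gamma_1}}\dsrc{\gamma_1}{w}{t}$ as an explicit tool. This buys a single clean induction with no case split, makes rigorous the step the paper leaves implicit, and yields a stronger conclusion (the whole source sets coincide, not merely the membership of $\dom(b)$) which would also streamline the paper's subsequent induction showing that \dipurgename\ and \dipurgeprimename\ agree on uniform policies. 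The one delicate point you rightly flag---the inclusion $V\subseteq\dsrc{\gamma_2}{u}{s\cdot a\gamma_1}$ needed before invoking the identity---is justified correctly, since prepending an action never shrinks a source set.
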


\begin{proof}
 Assume this is not the case, and let $b\beta'$ be a minimal
 counter-example. First assume that
 $\dom(b)\in\dsrc{b\beta'}{u}{s\cdot a\beta}$ and
 $\dom(b)\notin\dsrc{b\beta'}u{s\cdot \beta}$. Then there is some
 $\dom(c)\in\dsrc{\beta'}u{s\cdot a\beta  b}$ with $\dom(b)\dintrel{s\cdot a\beta}\dom(c)$, and due to
 minimality of $b\beta'$ it follows that $\dom(c)\in\dsrc{\beta'}u{s\cdot \beta b}$. Since $\dom(b)\notin\dsrc{b\beta'}u{s\cdot\beta}$, it thus follows that $\dom(b)\not\dintrel{s\cdot\beta}\dom(c)$. This is a contradiction to the intransitive uniformity of $\dpol$, since $\dom(a)\notin\dsrc{a\beta}{\dom(c)}s$, and hence $s\cdot a\beta\approx^i_{\dom(c)}s\cdot\beta$.

 The second case is essentially identical: Assume that $\dom(b)\in\dsrc{b\beta'}u{s\cdot\beta}$ and $\dom(b)\notin\dsrc{b\beta'}{u}{s\cdot a\beta}$. Then there is some $\dom(c)\in\dsrc{\beta'}u{s\cdot\beta b}$ with $\dom(b)\dintrel{s\cdot\beta}\dom(c)$. Due to the minimality of $b\beta'$, it follows that $\dom(c)\in\dsrc{\beta'}u{s\cdot a\beta b}$, hence $\dom(b)\not\dintrel{s\cdot a\beta}\dom(c)$. Since $s\cdot a\beta\approx^i_{\dom(c)}s\cdot\beta$ due to the above, we have a contradiction to the uniformity of $\dpol$.
\end{proof}

From the above Lemma, we can now easily show that for uniform policies, \dipsecurity\ and security in the sense of~\cite{Leslie-DYNAMICNONINTERFERENCE-SSE-2006} coincide:

\begin{theorem}
 Let $M$ be a system with an intransitively uniform policy $\dpol$. Then $M$ is \dipsecure\ if and only if $M$ is secure with respect to the definition in~\cite{Leslie-DYNAMICNONINTERFERENCE-SSE-2006}.
\end{theorem}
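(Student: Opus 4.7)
The plan is to route the whole argument through $\dipurgeprimename$ and reduce both directions to a single auxiliary invariance. Specifically, I would first show that for any uniform $\dpol$, whenever $\dom(a) \notin \dsrc{a\alpha}{u}{s}$, we have $\dipurgeprime{\alpha}{u}{s} = \dipurgeprime{\alpha}{u}{s \cdot a}$. This invariance is the engine of the proof, because the recursive definition of $\dipurgeprimename$ gives $\dipurgeprime{a\alpha}{u}{s} = \dipurgeprime{\alpha}{u}{s\cdot a}$ in the purge case, so the invariance collapses $\dipurgeprime{a\alpha}{u}{s}$ to $\dipurgeprime{\alpha}{u}{s}$ precisely when $a$ must be hidden from $u$.

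To prove the invariance, write $\alpha = a_1 \ldots a_n$. Since $\dipurgeprimename$ advances the state at every position regardless of the purge decision, its output is determined by the bit vector whose $i$-th bit records whether $\dom(a_i) \in \dsrc{a_i \ldots a_n}{u}{s \cdot a_1 \ldots a_{i-1}}$; the corresponding bit vector for starting state $s\cdot a$ replaces each state $s \cdot a_1 \ldots a_{i-1}$ by $s \cdot a \cdot a_1 \ldots a_{i-1}$. For each $i$, Lemma~\ref{lemma:same dsources on paths in uniform policies}, applied to the decomposition $\alpha = (a_1 \ldots a_{i-1}) \cdot a_i \cdot (a_{i+1} \ldots a_n)$, says exactly that these two bits agree (its precondition is exactly our hypothesis $\dom(a) \notin \dsrc{a\alpha}{u}{s}$). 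Taking the conjunction over all $i$, the two bit vectors coincide and so do the outputs.

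For the direction ``Leslie-secure implies \dipsecure,'' suppose $\dom(a) \notin \dsrc{a\alpha}{u}{s}$. By the recursive definition of $\dipurgeprimename$ and the invariance, $\dipurgeprime{a\alpha}{u}{s} = \dipurgeprime{\alpha}{u}{s\cdot a} = \dipurgeprime{\alpha}{u}{s}$, so Leslie's definition yields $\obs_u(s \cdot a\alpha) = \obs_u(s \cdot \alpha)$, which is exactly \dipsecurity. For the converse, I would first establish, by induction on $\card{\alpha}$, the equality $\obs_u(s \cdot \alpha) = \obs_u(s \cdot \dipurgeprime{\alpha}{u}{s})$ in every \dipsecure\ system. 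When $\alpha = a\alpha'$ and $a$ is kept, the IH applied at state $s \cdot a$ suffices. When $a$ is purged, \dipsecurity\ gives $\obs_u(s \cdot a\alpha') = \obs_u(s \cdot \alpha')$, the IH applied at $s$ gives $\obs_u(s \cdot \alpha') = \obs_u(s \cdot \dipurgeprime{\alpha'}{u}{s})$, and the invariance identifies $\dipurgeprime{\alpha'}{u}{s} = \dipurgeprime{\alpha'}{u}{s\cdot a} = \dipurgeprime{a\alpha'}{u}{s}$, chaining the three equalities into the required conclusion. With this invariant, Leslie's definition follows immediately by composing $\obs_u(s\cdot\alpha) = \obs_u(s\cdot\dipurgeprime{\alpha}{u}{s}) = \obs_u(s\cdot\dipurgeprime{\beta}{u}{s}) = \obs_u(s\cdot\beta)$.

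The main subtlety is the invariance step: Lemma~\ref{lemma:same dsources on paths in uniform policies} compares sources at the pair of states $s \cdot \beta$ and $s \cdot a \cdot \beta$ for a \emph{prefix} $\beta$ of the original sequence, which is exactly the shape of the states visited along the $\dipurgeprimename$-trajectory. It is precisely because $\dipurgeprimename$ does not revert the state at purged positions that the Lemma applies directly. The trajectory of $\dipurgename$, by contrast, visits ``kept-subsequence'' states $s \cdot \sigma$ for subsequences $\sigma$ of $a_1 \ldots a_{i-1}$, which are not of the form covered by the Lemma as stated; this is why the argument is cleaner when routed through $\dipurgeprimename$ instead of trying to prove the pointwise identity $\dipurge{\alpha}{u}{s} = \dipurgeprime{\alpha}{u}{s}$ directly.
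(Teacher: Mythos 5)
Your proposal is correct, and it takes a genuinely different route from the paper's. The paper reduces the theorem, via Theorem~\ref{theorem:intransitive unwinding and ipurge characterization}, to the pointwise identity $\dipurgename=\dipurgeprimename$ for uniform policies, which it establishes by a minimal-counterexample argument whose crux is the equality $\dipurge{\alpha'}us=\dipurge{\alpha'}u{s\cdot a}$, asserted to ``follow easily by induction on Lemma~\ref{lemma:same dsources on paths in uniform policies}.'' You never prove that the two purge functions coincide; instead you work entirely with $\dipurgeprimename$, proving the invariance $\dipurgeprime{\alpha}{u}{s}=\dipurgeprime{\alpha}{u}{s\cdot a}$ under the hypothesis $\dom(a)\notin\dsrc{a\alpha}us$ and the purge-correctness identity $\obs_u(s\cdot\alpha)=\obs_u(s\cdot\dipurgeprime{\alpha}{u}{s})$ in \dipsecure\ systems, and assembling both directions from these. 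The engine is the same Lemma~\ref{lemma:same dsources on paths in uniform policies} in both cases, but your observation about \emph{why} the lemma meshes better with $\dipurgeprimename$ is accurate and worth having on record: since Leslie's purge advances the state at every position, its keep/purge decisions are exactly the bits $\dom(a_i)\in\dsrc{a_i\dots a_n}{u}{s\cdot a_1\dots a_{i-1}}$ at full-prefix states, which is precisely the shape the lemma addresses, so your invariance needs no induction at all; the paper's $\dipurgename$-trajectory visits kept-subsequence states, so its ``easy induction'' actually needs an extra argument (or an appeal to the minimality hypothesis to swap in $\dipurgeprimename$) that the paper leaves implicit. What you give up is the byproduct that the paper obtains explicitly, namely that $\dipurgename$ and $\dipurgeprimename$ literally coincide on uniform policies; your argument only shows the two induced security notions agree. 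Both proofs share the same implicit reading of Leslie's definition as the all-states, all-pairs $\dipurgeprimename$-characterization, so you are no worse off there.
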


\begin{proof}
 Due to Theorem~\ref{theorem:intransitive unwinding and ipurge characterization}, it suffices to show that in the case of a uniform policy, the functions $\dipurgename$ and $\dipurgeprimename$ coincide. Assume indirectly that this is not the case, and let $\alpha$ be a minimal sequence such that there exists a state $s$ and an agent $u$ with $\dipurge{\alpha}us\neq\dipurgeprime{\alpha}us$. Clearly $\alpha\neq\epsilon$, hence assume that $\alpha=a\alpha'$.

 First assume that $\dom(a)\in\dsrc{a\alpha'}us$. In this case, we have (by definition and minimality of $\alpha$), that

$$
 \begin{array}{llllllllllllll}
  \dipurge{a\alpha'}us & = & a \beforeOpSpace \dipurge{\alpha'}u{s\cdot a}  \\
                       & = & \dipurgeprime{\alpha'}{u}{s\cdot a} & = &\dipurge{a\alpha'}{u}{s} \enspace,
 \end{array}
$$

 which is a contradiction to the choice of $\alpha$.

 Hence assume that $\dom(a)\notin\dsrc{a\alpha'}us$. By definition, it follows that $\dipurge{a\alpha'}us=\dipurge{\alpha'}us$ and $\dipurgeprime{a\alpha'}{u}{s}=\dipurgeprime{\alpha'}{u}{s\cdot a}=\dipurge{\alpha'}u{s\cdot a}$ (the final equaility is due to the minimality of $\alpha$).

 It hence suffices to show that $\dipurge{\alpha'}us=\dipurge{\alpha'}u{s\cdot a}$. This easily follows by induction on Lemma~\ref{lemma:same dsources on paths in uniform policies}: The same actions of $\alpha'$ are transmitted to $u$ when evaluating $\alpha'$ starting in the state $s$ and in $s\cdot a$.
\end{proof}

\section{Proofs}

In this section we give proofs for the results claimed in the paper.

\subsection{Proof of Theorem~\ref{thm:dp_characterizations}}
\begin{proof}
  First, we will show that  \ref{thm:dp_characterization_def}. implies \ref{thm:dp_characterization_unwind}.. 
  Let $M$ be a \dpsecure system. 
  Let $u \in \Dom$. Define for every $s, t \in \States$:
  \begin{equation*}
    s \unwind_u t \text{ iff for all } \alpha \in \Actions^* : 
    \obs_u(s \cdot \alpha) = \obs_u(t \cdot \alpha) \enspace. 
  \end{equation*}
  The condition \ocdp is satisfied if $\alpha = \epsilon$. 
  For the condition \scdp, we consider $s, t \in \States$ with $s \unwind_u t$ and let $a \in \Actions$. 
  Then for all $\alpha \in \Actions^*$, we have $s \cdot \alpha
  \unwind_u t \cdot\alpha$ and also $s\cdot a \alpha \unwind_u t \cdot
  a \alpha$. Therefore, $s \cdot a \unwind_u t\cdot a$.
  For the condition \lrdp, we consider $a \in \Actions$ and $s \in \States$ with $\dom(a) \not\dintrel{s} u$. Since $s$ is a reachable state, it exists $\alpha \in \Actions^*$ with $s = s_0 \cdot \alpha$. The definition of \dpsecurity states, that for every $\beta \in \Actions^*$ the equality of $\obs_u(s\cdot a \beta)$ and $\obs_u(s \cdot \beta)$ holds. 
 Therefore, $s \unwind_u s\cdot a$. 

 We assume that \ref{thm:dp_characterization_unwind}. holds and 
 will proof \ref{thm:dp_characterization_purge}.. 
 Let $u\in \Dom$ and assume that there exists a transitive unwinding $\unwind_u$ that satisfies \lrdp,
  \scdp and \ocdp. 
  We will show by an induction on the combined length of $\alpha$ and
  $\beta$, that for every state $s \in \States$: 
  $\dpurge(\alpha, u, s) = \dpurge(\beta, u, s)$ implies $s \cdot
  \alpha \unwind_u s \cdot \beta$. 
  The base case with $\alpha = \beta = \epsilon$ is clear. 
  For the inductive step consider $\alpha$ and $\beta$ with
  $\dpurge(\alpha, u, s) = \dpurge(\beta, u, s)$ for some state $s$. 
  We have to consider two cases:
  \begin{enumerate}[{Case }1:]
  \item 
    $\alpha = a \alpha'$ for some $a \in \Actions$, $\alpha' \in
    \Actions^*$ and $\dom(a) \not\dintrel{s} u$. 
    Then we have $\dpurge(a \alpha', u ,s) = \dpurge(\alpha', u, s)$.
    From the property \lrdp  follows that $s \unwind_u
    s \cdot a$ and from \lrdp follows $s \cdot \alpha' \unwind_u s
    \cdot a \alpha'$. 
    Applying the induction hypothesis gives $s\cdot \alpha' \unwind_u
    s \cdot \beta$ which can be combined to $s \cdot \alpha \unwind_u
    s \cdot \beta$.
  \item 
    $\alpha = a \alpha'$ and $\beta = b \beta'$ with 
    $\dom(a) \dintrel{s} u$ and $\dom(b) \dintrel{s} u$. 
    From 
    \begin{align*}
      a \beforeOpSpace \dpurge(\alpha', u, s\cdot a) 
      & = \dpurge(a \alpha', u, s) \\
      & = \dpurge(\alpha, u, s) \\
      & = \dpurge(\beta, u, s) \\
      & = b \beforeOpSpace \dpurge(\beta', u, s \cdot b)
    \end{align*}
    follows that $a = b$ and $\dpurge(\alpha', u, s\cdot a) =
    \dpurge(\beta', u, s\cdot a)$. 
    Applying the induction hypothesis gives
    $s \cdot a \alpha' \unwind_u s \cdot b \beta'$. 
  \end{enumerate}
  In both cases follows from \ocdp that $\obs_u(s \cdot \alpha) =
  \obs_u(s \cdot \beta)$. 

  For proofing the implication
  from~\ref{thm:dp_characterization_purge}
  to~\ref{thm:dp_characterization_def}, we assume, that $M$ does
  not satisfy \dpsecurity.
  Therefore, there exists an agent $u \in \Dom$ and states $s, s'\in
  \States$ with $s \approx_u s'$ and $\obs_u(s) \neq \obs_u(s')$. 
  By the definition of \dpsecurity, there exists $t \in \States$, $a
  \in \Actions$ and $\alpha \in \Actions^*$ with $\dom(a)
  \not\dintrel{t} u$, $s = t \cdot a \alpha$ and $s' = t \cdot
  \alpha$. 
  By applying of $\dpurge$, we have $\dpurge(a \alpha, u, t) =
  \dpurge(\alpha, u, t)$ and from $\obs_u(t \cdot a \alpha) \neq
  \obs_u(t \cdot \alpha)$, follows
  that~\ref{thm:dp_characterization_purge} does not hold. 

For proofing the missing implication, we assume that  \ref{thm:dp_characterization_def}. does not hold. 
Therefore, it exists
$u \in \Dom$, 
$s \in \States$, 
$a \in \Actions$ and
$\alpha \in \Actions^*$ 
with $\dom(a) \not\dintrel{s} u$ and 
$\obs_u(s \cdot a \alpha) \neq \obs_u(s \cdot \alpha)$. 
Therefore, $s\cdot a \alpha \approx_u s \cdot \alpha$
and~\ref{thm:dp_characterization_def} does not hold.
\end{proof}

\subsection{Proof of Theorem~\ref{theorem:uniformpolicies}}
\begin{proof}
Let $M$ be a \dpsecure system with respect to the policy \dpol. 
Then there exists  a transitive unwinding $(\unwind_u)_{u \in
  \Dom}$  for $M$. 
Note, that for every $u \in \Dom$, the smallest eqivalence relation
$\unwind_u$ that satisfies \lrdp and \scdp is equal to the
smallest equivalence relation on $\States$ that includes $\approx_u$.
Let $\unwind_u'$ be the a smallest equivalence relation that satisfies \scdp and \lrdp with respect to the policy \dpolprime. 
We will show that ${\unwind_u'} \subseteq {\unwind_u}$.
Let $s, t \in \States$ with $s \unwind_u' t$ and $t = s \cdot a$ form some $a \in \Actions$ with $\dom(a) \not\dintrel{s}' u$. 
Therefore, there exists $s' \in \States$ with $s' \unwind_u s$ and $\dom(a) \not\dintrel{s'} u$. 
From $s' \unwind_u s' \cdot a$ and $s' \cdot a \unwind_u s \cdot a$ follows $s \unwind_u t$. 

The other direction of the proof follows directly from the fact, that the policy \dpolprime is at least as restrictive as the policy \dpol. 
\end{proof}

\subsection{Proof of Theorem~\ref{theorem:intransitive unwinding and ipurge characterization}}

\begin{proof}
  We first consider the \dipurgename-characterization and then the intransitive unwinding characterization.
 \begin{enumerate}
  \item 
 We first show that \dipsecurity\ implies the \dipurgename-characterization. 
 Hence indirectly assume that the system is \dipsecure, and indirectly assume that the \dipurgename-condition is not satisfied. 
Then there exists a state $s$, an agent $u$, and sequences $\alpha$ and $\beta$ with $\dipurge\alpha us=\dipurge\beta us$, and $\obs_u(s\cdot\alpha)\neq\obs_u(s\cdot\beta)$. We choose $\alpha$ and $\beta$ such that $\card\alpha+\card\beta$ is minimal among all such examples. Clearly, if \emph{both} $\alpha$ and $\beta$ start with an action that is transmitted to $u$, then this action must be the same: If $\alpha=a\alpha'$ with $\dom(a)\in\dsrc{a\alpha'}us$ and $\beta=b\beta'$ with $\dom(b)\in\dsrc{b\beta'}us$, then $\dipurge\alpha us$ starts with $a$, and $\dipurge\beta us$ starts with $b$. It thus follows that $a=b$, and hence we could use the state $s'=s\cdot a$ and the sequences $\alpha'$ and $\beta'$ as a counter-example, which contradicts the minimality of $\alpha$ and $\beta$. Hence we can, without loss of generality, assume that $\alpha=a\alpha'$ for some $a$ with $\dom(a)\notin\dsrc{a\alpha'}us$. It thus follows that 
$\dipurge{\alpha'}us=\dipurge\alpha us=\dipurge\beta us$. Since the system is secure, we also have $\obs_u(s\cdot\alpha')=\obs_u(s\cdot a\alpha')=\obs_u(s\cdot\alpha)\neq\obs_u(s\cdot\beta)$, and hence we again obtain a contradiction to the minimality of $\alpha$ and $\beta$ (with choosing $\alpha'$ instead of $\alpha$).

 We now show the converse, i.e., that the \dipurgename-characterization implies \dipsecurity. Hence assume that the system satisfies the \dipurgename-condition. To show interference security, let $\dom(a)\notin\dsrc{a\alpha}us$ for some agent $u$ and state $s$, we show that $\obs_u(s\cdot a\alpha)=\obs_u(s\cdot\alpha)$. Note that since $\dom(a)\notin\dsrc{a\alpha}us$, it follows that $\dipurge{a\alpha}us=\dipurge\alpha us$. Hence from the prerequisites of the theorem it follows that $\obs_u(s\cdot a\alpha)=\obs_u(s\cdot\alpha)$ as required.

 \item
 We prove that the intransitive unwinding characterization is also equivalent to \dipsecurity. 
 First assume that there is an intransitive unwinding $(\direl{\Dom'})_{\Dom'\subseteq \Dom}$ for $M$ with respect to $\dpol$. 
 We show that the system is \dipsecure. 
 For this it suffices to show that if $\dom(a)\notin\dsrc{a\alpha}us$, then $s\cdot a\alpha\direl{\Dom'} s\cdot\alpha$ for some set $\Dom'$ with $u\in \Dom'$. 
For each prefix $\alpha'$ of $\alpha$, let $\Dom_{\alpha'}$ be defined as 
\begin{equation*}
  \Dom_{\alpha'}=\set{v\in D\ \vert\ \dom(a)\notin\dsrc{a\alpha'}vs} \enspace.
\end{equation*}
Clearly, if $\alpha'$ is a prefix of $\alpha''$, then $\Dom_{\alpha''}\subseteq \Dom_{\alpha'}$.
Since $u\in \Dom_\alpha$, it suffices to show that $s\cdot a\alpha'\direl{\Dom_{\alpha'}} s\cdot\alpha'$ for all prefixes $\alpha'$ of $\alpha$. 
We show the claim by induction. 
For $\alpha'=\epsilon$, the claim follows from \gLR, since $\dom(a)\not\dintrel su$. 
Hence assume that $\alpha'=\beta b$ for some sequence $\beta$ and action $b$. 
By induction, we have that $s\cdot a\beta\direl{\Dom_\beta}s\cdot\beta$, 
where $\Dom_\beta$ contains all agents $v$ with $\dom(a)\notin\dsrc{a\beta}vs$. 
Now let $u\in \Dom_{\alpha'}$, it then also follows that $u\in \Dom_{\beta}$. 
Let $\Dom'$ be defined as in the condition \gSC. 
Since the condition implies $s\cdot a\beta b\direl{\Dom'}s\cdot\beta b$, 
it suffices to show that $u\in \Dom'$. 
Clearly this is the case if $\dom(b)\in \Dom_\beta$, i.e., if $\Dom_\beta=\Dom'$. 
Hence assume this is not the case, by definition of $\Dom_\beta$ it then follows that $\dom(a)\in\dsrc{a\beta}{\dom(b)}s$. 
Since $\dom(a)\notin\dsrc{a\beta b}us$, this implies that $\dom(b)\not\dintrel{s\cdot a\beta} u$, hence $u\in \Dom'$ follows in this case as well.

For the other direction, assume that the system is \dipsecure. 
We define $s\direl{\Dom'} t$ if there is a state $\tilde s$, an action $a$ and a sequence $\alpha$, such that $s=\tilde s \cdot a\alpha$, $t=\tilde s\cdot\alpha$, and for all $u\in \Dom'$, we have $\dom(a)\notin\dsrc{a\alpha}u{\tilde s}$. 
We claim that this defines an intransitive unwinding for $M$ with respect to $\dpol$.  
Since the system is \dipsecure, the condition \gOC is obviously satisfied. 
The condition \gLR follows from the fact that if $\dom(a)\not\dintrel su$, then $\dom(a)\notin\dsrc aus$. 
It remains to show \gSC. 
Hence let $s \direl{\Dom'} t$, and let $\tilde s$, $a$ and $\alpha$ be chosen with the above properties. 
Let $b$ be an action, and let $\Dom''$ be the set resulting from applying \gSC. 
It remains to show that for each $u\in \Dom''$, we have $\dom(a)\notin\dsrc{a\alpha b}u{\tilde s}$. 
First assume that $\dom(b)\in \Dom'$, it then follows from the definition of $\direl{\Dom'}$ that $\dom(a)\notin\dsrc{a\alpha}{\dom(b)}{\tilde s}$, and hence $\dom(a)\notin\dsrc{a\alpha b}u{\tilde s}$. 
On the other hand, if $\dom(b)\notin \Dom'$, then from $u\in \Dom''$, we know that $\dom(b)\not\dintrel{\tilde s\cdot a\alpha}u$, and hence from $\dom(a)\notin\dsrc{a\alpha}u{\tilde s}$ (since $u\in \Dom'$) and $\dsrc{a\alpha b}u{\tilde s}=\dsrc{a\alpha}u{\tilde s}$, it follows that $\dom(a)\notin\dsrc{a\alpha b}u{\tilde s}$ as required.
\end{enumerate}
\end{proof}

\subsection{Proof of Theorem~\ref{theorem:intransitive case np complete}}
\label{sect:complexity}

\begin{theorem}
 Checking whether a system is not \dipsecure can be done in \NP.
\end{theorem}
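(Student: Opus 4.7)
The approach is to exhibit, for any system violating \dipsecurity, a polynomial-size certificate verifiable in polynomial time. A natural certificate is a tuple $(u,s,a,\alpha,\gamma)$ where $\gamma$ is a path with $s=s_0\cdot\gamma$, $\dom(a)\notin\dsrc{a\alpha}{u}{s}$, and $\obs_u(s\cdot a\alpha)\neq\obs_u(s\cdot\alpha)$. The path $\gamma$ can be taken of length at most $|S|$; the only remaining obstacle is bounding $|\alpha|$.

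My plan is to prove a pumping lemma: whenever $|\alpha|>|S|^2$, the pair $(s\cdot a\alpha',s\cdot\alpha')$ repeats along the joint execution (by pigeonhole on $S\times S$), and the intermediate factor can be excised. Formally, if we write $\alpha=\gamma_1\gamma_2\gamma_3$ with $s\cdot a\gamma_1=s\cdot a\gamma_1\gamma_2$ and $s\cdot\gamma_1=s\cdot\gamma_1\gamma_2$, then $\alpha''=\gamma_1\gamma_3$ satisfies $s\cdot a\alpha''=s\cdot a\alpha$ and $s\cdot\alpha''=s\cdot\alpha$, so the observational mismatch is preserved.

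The technical core is the claim that the no-source condition also survives: $\dsrc{a\alpha''}{u}{s}\subseteq\dsrc{a\alpha}{u}{s}$. I would prove this by first establishing the auxiliary monotonicity statement $\dsrc{\sigma}{u}{t\cdot\beta}\subseteq\dsrc{\beta\sigma}{u}{t}$ for arbitrary $\beta,\sigma,t$, which follows by straightforward induction on $|\beta|$ from the defining recursion, since prepending an action only grows the source set and the left-hand side is exactly the base of the right-hand side's recursion. Applying this with $t\cdot\beta=t$ at state $s\cdot a\gamma_1$ handles the case $\gamma_1=\epsilon$, and the general case follows by an outer induction on $|\gamma_1|$, propagating the containment through each leading action using monotonicity of the conditional ``add $\dom(b)$'' clause in the definition of $\dsrc{b\alpha}{u}{s}$: a larger set of candidate targets can only make the condition easier to satisfy.

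Iterating the pumping produces a witness with $|\alpha|\leq|S|^2$. The NP algorithm thus guesses $u,s,a,\alpha,\gamma$ of polynomial total size and verifies the three conditions, each of which reduces to simulating the runs $\gamma$, $\alpha$, and $a\alpha$ and evaluating $\dsrc{a\alpha}{u}{s}$ by its recursive definition---all in polynomial time. The one nontrivial step is the source-set containment under loop excision; everything else is routine bookkeeping.
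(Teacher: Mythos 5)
Your proposal is correct and follows essentially the same route as the paper's proof: guess the witness $(u,s,a,\alpha)$, bound $\card{\alpha}$ by $\card{\States}^2$ via pigeonhole on the product automaton $F_s\times F_{s\cdot a}$, and excise the loop. The only difference is that you carefully justify the source-set containment $\dsrc{a\alpha''}us\subseteq\dsrc{a\alpha}us$ under loop excision (via the monotonicity of the \sources{} recursion in its ``terminal'' set), a step the paper dispatches with ``clearly, removing a loop does not add information flow''---your argument for it is sound.
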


\begin{proof}
 The algorithm simply guesses the corresponding values of $a$, $u$, $s$, and $\alpha$, and verifies that these satisfy $\obs_u(s\cdot a\alpha)\neq\obs_u(s\cdot\alpha)$ and $\dom(a)\notin\dsrc{a\alpha}us$ in the straight-forward way. To show that this gives an \NP-algorithm, it suffices to show that the length of $\alpha$ can be bounded polynomially in the size of the system. We show that if the system is insecure, then $\alpha$ can be chosen with $\card\alpha\leq\card \States^2$.

 To show this, let $\alpha$ be a path of minimal length satisfying the above. Let $F_s$ and $F_{s\cdot a}$ be the finite state machines obtained when starting the system in the states $s$ and $s\cdot a$, respectively, and let $F=F_s\times F_{s\cdot a}$, with initial state $(s,s\cdot a)$. Clearly, in $F$, we have $(s,s\cdot a)\cdot \alpha=(s\cdot\alpha, s\cdot a\alpha)$. If $\card\alpha\ge \card \States^2$, then $\alpha$ visits a state from $F$ twice, i.e., $\alpha$ contains a nontrivial loop. Such a loop can be removed from $\alpha$ without changing the states that are reached. Clearly, removing a loop does not add information flow, hence the thus-obtained $\alpha'$ also satisfies the prerequisites for $\alpha$, which is a contradiction to $\alpha$'s minimality.
\end{proof}

\begin{theorem}\label{theorem:np hardness}
 For every security definition that is at least as strict as in\-for\-ma\-tion-flow-security and at least as permissive as interference-security, the problem to determine whether a given system is insecure is \NP-hard under $\redlogm$-reductions.
\end{theorem}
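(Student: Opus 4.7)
The plan is a log-space many-one reduction from 3SAT to $X$-insecurity, leveraging the sandwiching
\[
\text{interference-secure}\;\Longrightarrow\;X\text{-secure}\;\Longrightarrow\;\text{IF-secure},
\]
which contrapositively gives IF-insecure $\Rightarrow$ $X$-insecure $\Rightarrow$ not interference-secure. It therefore suffices to produce, in log-space from a 3CNF formula $\varphi$, a system $M_\varphi$ such that (a) $\varphi$ satisfiable implies $M_\varphi$ is not IF-secure, and (b) $\varphi$ unsatisfiable implies $M_\varphi$ is interference-secure; the sandwich then forces $M_\varphi$ to be $X$-insecure iff $\varphi$ is satisfiable, which is exactly NP-hardness of $X$-insecurity under $\redlogm$.

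For the construction I would adapt the NP-hardness reduction of~\cite{emsw11}. The system $M_\varphi$ has a high agent $H$, a low agent $L$, one variable agent $V_i$ per variable $x_i$, and one clause agent $C_j$ per clause. Each agent owns a constant-size local automaton, so the whole description is polynomial in $|\varphi|$ and log-space computable. The $V_i$ commit to truth values through their actions; the $C_j$ inspect the committed literals appearing in their clause and signal ``satisfied'' once the current play makes the clause true. The policy has no direct edge $H\nintrel L$, but a transitive chain through the $C_j$ permits a distinguished ``reveal'' action $a$ of $H$ to change $\obs_L$ only once \emph{every} $C_j$ has simultaneously signalled satisfaction.

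I would then verify the two endpoints separately. In case (a), letting the $V_i$ enact a satisfying assignment, letting all $C_j$ complete, and finally letting $H$ perform $a$, produces a play along which $\dom(a)=H$ lies outside $L$'s allowed sources yet changes $\obs_L$, so $M_\varphi$ is IF-insecure. In case (b), no play ever drives all $C_j$ into their ``satisfied'' states simultaneously, so the reveal cannot affect $L$; a case analysis over the gadget then shows that on every reachable state no high action alters any observation of $L$, i.e.\ $M_\varphi$ is interference-secure.

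The hardest step is (b). One must genuinely obtain \emph{interference}-security---the strictest assumption permitted on $X$---not merely some weaker notion. The gadget therefore has to be designed so that intermediate clause signals cannot, even in combination with $H$'s actions, propagate any information to $L$ until the full satisfaction chain is closed; the per-agent, distributed encoding (each $C_j$ an isolated constant-size automaton with no shared bookkeeping) is what guarantees this cleanly. Log-space computability of the overall reduction is then routine.
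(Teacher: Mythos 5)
Your high-level skeleton is exactly the paper's: sandwich the unknown definition between the strictest and most permissive notions, reduce from an \NP-complete problem in log-space, and verify the two endpoints separately (insecurity w.\,r.\,t.\ the weakest notion on yes-instances, security w.\,r.\,t.\ the strongest notion on no-instances). The paper happens to reduce from $3$-colorability rather than 3SAT, but that choice is immaterial. What is missing is the actual gadget, and the part you elide is precisely the non-trivial content of the proof. Your sketch has the variable agents ``commit to truth values'' and the clause agents ``inspect the committed literals,'' with $\obs_L$ changing ``only once every $C_j$ has simultaneously signalled satisfaction.'' In a deterministic polynomial-size state-observed system this cannot be implemented by bookkeeping in the state: remembering an assignment to $n$ variables, or which of $m$ clauses are simultaneously satisfied, costs exponentially many states. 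The paper instead chains the gadgets $\colorsys{u}$ and $\diffcols{u}{v}$ sequentially and enforces consistency of each agent's choice \emph{through the information-flow semantics itself}: when $u$ picks color $i$, the auxiliary agent $u_{\neq i}$ receives the secret action $h$ via the local policy, and any later inconsistent choice forces the run through a state where $u_{\neq i}$ may interfere with $L$, so the secret is transmitted and the run ceases to be a witness. Without an analogue of these $u_{\neq i}$ agents your reduction either blows up the state space or fails to prevent a variable agent from declaring $x_i$ true in one clause gadget and false in another, which would make unsatisfiable formulas map to insecure systems and break direction (b).

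There is also a sign error in your description of the witness. You say a ``transitive chain through the $C_j$ permits'' the reveal action to reach $L$ once all clauses are satisfied, and simultaneously that on this very play $H$ ``lies outside $L$'s allowed sources.'' These cannot both hold: if the chain transmits $a$ to $L$ in the sense of $\dsrc{a\alpha}{L}{s}$, the flow is allowed and there is no violation. The paper's witness is the opposite configuration: on the satisfying (``hiding'') path \emph{no} agent currently permitted to interfere with $L$ has received $h$, so the flow is forbidden, and yet $L$ distinguishes \last from \lastprime because only \lastprime carries observation $1$. Getting this orientation right, while still making unsatisfiable instances interference-secure (your step (b), which you correctly identify as the hard part but do not carry out), is where all the work lies; as written the proposal does not yet constitute a proof.
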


\begin{wrapfigure}[20]{r}{6.25cm}
 \begin{tikzpicture}[->,>=stealth',shorten >=1pt,auto,node distance=1cm, semithick]
  \tikzset{round-boxed/.style={draw=black, line width=1pt, dash pattern=on 1pt off 0pt, inner sep=0mm, rectangle, rounded corners}};

  \node(incoming) at (0,0) { $\cdot$ };

  \node (h)      at (3,3) { $h$ };
  \node (uneq0)   at (3,2) { $u_{\neq0}$ } ;
  \path (h) edge (uneq0);
  \node (u-0-1) [round-boxed, fit = (h) (uneq0)] {} ;

  \path (incoming) edge node[above] { $u_{=0}$} (u-0-1);

  \node (h)      at (3,0.5) { $h$ };
  \node (uneq1)   at (3,-0.5) { $u_{\neq1}$ } ;
  \path (h) edge (uneq1);
  \node (u-1-1) [round-boxed, fit = (h) (uneq1)] {} ;

  \path (incoming) edge node[above] { $u_{=1}$} (u-1-1);

  \node (h)      at (3,-2) { $h$ };
  \node (uneq2)   at (3,-3) { $u_{\neq2}$ } ;
  \path (h) edge (uneq2);
  \node (u-2-1) [round-boxed, fit = (h) (uneq2)] {} ;

  \path (incoming) edge node[above] { $u_{=2}$} (u-2-1);

  \node(outgoing) at (6,0) { $\cdot$ };

  \path (u-0-1) edge node[above] { $h$ } (outgoing);
  \path (u-1-1) edge node[above] { $h$ } (outgoing);
  \path (u-2-1) edge node[above] { $h$ } (outgoing);
 \end{tikzpicture}
  \caption{System $\colorsys u$}\label{fig:colorsys}
\end{wrapfigure}

We reduce from the 3-colorability problem for graphs. Let a graph $G$ with vertices $u_1,\dots,u_n$ and edges $(v^1_1,v^2_1),\dots,(v^m_1,v^m_2)$ be given. We construct a system $M^G$ as follows:

\begin{itemize}
 \item for each vertex $u$, there is an agent $u$ with actions $u_{=0}$, $u_{=1}$, and $u_{=2}$, and there are agents $u_{\neq 0}$, $u_{\neq 1}$, $u_{\neq 2}$, each having exactly one action, which for simplicity we denote with the agent's name. Additionally, there is an agent $h$ with a single action $h$, and an agent $L$ with a single action $L$.
 \item for each vertex $u$, we construct a subsystem $\colorsys u$ (see Figure~\ref{fig:colorsys}), that models the choice of coloring of $u$ in the graph. In $\colorsys u$ and all following systems, all transitions that are not explicitly indicated in the graphical representation loop in the corresponding state.
\end{itemize}

\begin{itemize}
 \item for each edge $(u,v)$, we construct a subsystem $\diffcols uv$ (see Figure~\ref{fig:diffcols}), which enforces that the colors of $u$ and $v$ must be different. The edges labelled with a transition of the form $u_{\neq i,j}$ represent two consecutive edges, the first one with the transition $u_{\neq i}$, and the second one labelled with the transition $u_{\neq j}$, where the policy is repeated between the two transitions.
 \item the system $M^G$ is now designed as shown in Figure~\ref{fig:complete system MG}. 
 We denote the left-most state with $s_0$. The unlabelled arrows between the different $\colorsys u$ and $\diffcols uv$-nodes express that the final node of one is the starting node of the other. The subsystems $\colorsysprime u$ and $\diffcolsprime uv$ are defined in the same way as $\colorsys u$ and $\diffcols uv$, except that here, in all states we have policies that allow interference between any two agents. With \last, we denote the final state of $\diffcols{v^m_1}{v^m_2}$, and with \lastprime, the final state of $\diffcolsprime{v^m_1}{v^m_2}$. We define the observation functions as follows:  $\obs_L(\lastprime)=1$, and for all other combinations of agent $u$ and state $s$, $\obs_u(s)=0$.
\end{itemize}

\begin{figure}[h!]
\scalebox{0.83}{
   \begin{tikzpicture}[->]
  \tikzstyle{int}=[draw, minimum size=2em]
  % Define the style for the round-boxed boxes
  \tikzset{round-boxed/.style={draw=black, line width=1pt, inner sep=2mm, rectangle, rounded corners}};

  \node(agts-without-h) at (-0.5,-1) { $\agts\setminus\set h$ };
  \node(L) [below of=agts-without-h] { $L$ };
  \path (agts-without-h) edge (L) ;
  \node(s0) [round-boxed, fit = (agts-without-h) (L)] {} ;

  \node(dummy) [round-boxed] at (1.55,-0.75) { $\ $ };

  \node(colorsys-1) [round-boxed] at (3.75,-0.5) {\boxSpacing$\colorsys{u_1}$\boxSpacing};
  \node(dots1) at (5.2,-0.5) {\dots};
  \node(colorsys-n) [round-boxed] at (6.6,-0.5) {\boxSpacing$\colorsys{u_n}$\boxSpacing}; 

  \node(diffcols-1) [round-boxed] at (8.75,-0.5) {\boxSpacing$\diffcols{v^1_1}{v^1_2}$\boxSpacing};
  \node(dots2) at (10.3,-0.5) {\dots};
  \node(diffcols-n) [round-boxed] at (12,-0.5) {\boxSpacing$\diffcols{v^m_1}{v^m_2}$\boxSpacing};

  \path (s0) edge node[above] {$h$} (dummy);
  \path (dummy) edge node[above] {$\actions\setminus\set{h}$} (colorsys-1);
  \path (colorsys-1) edge (dots1);
  \path (dots1) edge (colorsys-n);
  \path (colorsys-n) edge (diffcols-1);
  \path (diffcols-1) edge (dots2);
  \path (dots2) edge (diffcols-n);

  \node(colorsys-1-prime) [round-boxed] at (3.75,-2.5) {\boxSpacing$\colorsysprime{u_1}$\boxSpacing};
  \node(dots3) at (5.2,-2.5) {\dots};
  \node(colorsys-n-prime) [round-boxed] at (6.6,-2.5) {\boxSpacing$\colorsysprime{u_n}$\boxSpacing};

  \node(diffcols-1-prime) [round-boxed] at (8.75, -2.5) {\boxSpacing$\diffcolsprime{v^1_1}{v^1_2}$\boxSpacing};
  \node(dots4) at (10.3,-2.5) {\dots};
  \node(diffcols-n-prime) [round-boxed] at (12,-2.5) {\boxSpacing$\diffcolsprime{v^m_1}{v^m_2}$\boxSpacing};

  \path (s0) edge node[below] { $\actions\setminus\set{h}$ } (colorsys-1-prime);
  \path (colorsys-1-prime) edge (dots3);
  \path (dots3) edge (colorsys-n-prime);
  \path (colorsys-n-prime) edge (diffcols-1-prime);
  \path (diffcols-1-prime) edge (dots4);
  \path (dots4) edge (diffcols-n-prime);
 \end{tikzpicture}
}
 \caption{Complete system $M^G$}\label{fig:complete system MG}
 \end{figure}
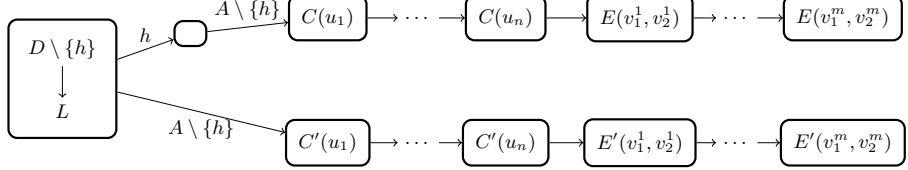

The main property of $M^G$ is that it is possible to find a path $h\alpha$ from $s_0$ to $\last$ that does \emph{not} transmit $h$ to $L$ if and only if $G$ is $3$-colorable:

\begin{definition}
A path $h\alpha$ is \emph{hiding}, if $\dom(h)\notin\dsrc{h\alpha}{L}{s_0}$, and $s_0\cdot h\alpha=\last$.
\end{definition}

Intuitively, the subsystem $\colorsys u$ forces the agent $u$ to ``choose'' a color $i\in\set{0,1,2}$, by performing the action $u_{=i}$. For each edge $(u,v)$ or $(v,u)$ in which $u$ is involved, the agent $u$ later repeats the same transition in the subsystem $\diffcols uv$ (or $\diffcols vu$). These systems ensure that no two agents that are connected with an edge can choose the same color---if they do, then a dead-end is reached. To ensure that agents are consistent in their choice of colors (i.e., choose the same color in later $\diffcols uv$-systems as in the $\colorsys u$ system, and consequently chooses the same color for each $\diffcols uv$-system), we use the following construction: When agent $u$ chooses color $i$ in $\colorsys{u}$, the agent $u_{\neq i}$ ``receives'' interference from $h$. If the agent $u$ later claims to have a color different from $i$, then the only available path is one that allows an interference between $u_{\neq i}$ and $L$, which transmits the information about $h$ to $L$.

\begin{lemma}\label{lemma:hiding path iff colorable}
 There is a hiding path if and only if $M^G$ is $3$-colarable.
\end{lemma}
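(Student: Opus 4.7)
The plan is to prove both implications separately, mirroring the intended semantics of the gadgets: $\colorsys{u}$ forces agent $u$ to commit to a color and ``leaks'' that commitment into the source set of $u_{\neq c(u)}$, while $\diffcols{u}{v}$ forces agents to stay consistent with their prior commitments and forbids $c(u)=c(v)$.

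For the ``if'' direction, assume $G$ has a proper $3$-coloring $c\colon V\to\{0,1,2\}$. I would construct $\alpha$ so that $h\alpha$ traverses the top branch of $M^G$: take the $h$-transition out of $s_0$, traverse each $\colorsys{u_k}$ by performing $u_{k,=c(u_k)}$ followed by the exit $h$, and traverse each $\diffcols{v_1^j}{v_2^j}$ by replaying the same colors $c(v_1^j)$, $c(v_2^j)$. Because $c$ is proper, $c(v_1^j)\neq c(v_2^j)$ for every edge, so no $u_{\neq i,j}$ dead-end is reached and the path indeed ends at \last. To check that the path is hiding, I would compute $\dsrc{\cdot}{L}{\cdot}$ backwards from \last\ along the chosen path, maintaining as invariant that $h$ never enters the source set. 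The only agents ever added to the source set are $u_{\neq c(u)}$ for the chosen colors, and by construction none of these has an outgoing edge to $L$ in any local policy encountered on the consistent path; so when the backward computation reaches $s_0$, the source set is disjoint from $\{v\mid h\dintrel{s_0}v\}=\{h\}$, and thus $\dom(h)\notin\dsrc{h\alpha}{L}{s_0}$.

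For the ``only if'' direction, suppose $h\alpha$ is a hiding path ending in \last. Since the top row is a sequence of gadgets in series, the path must enter each $\colorsys{u}$ through a unique inner state, hence commits to a unique color; define $c(u)$ to be that color. The key claim is that $c$ is a proper coloring. Fix an edge $(u,v)\in E(G)$ and consider the passage of the path through $\diffcols{u}{v}$. First, the design of $\diffcols{u}{v}$ only admits paths that choose distinct colors for $u$ and $v$; any attempt to choose equal colors runs into a dead-end before reaching the gadget's exit, which contradicts $s_0\cdot h\alpha=\last$. Second, the choices made inside $\diffcols{u}{v}$ must agree with $c(u)$ and $c(v)$: if agent $u$ were to perform $u_{=j}$ with $j\neq c(u)$, then the resulting transition would traverse a state whose local policy allows $u_{\neq c(u)}$ to interfere with $L$, and since $u_{\neq c(u)}$ is already in the source set for $L$ (from the earlier visit to $\colorsys{u}$, by the same backward-$\dsrc$ computation as above), this would add $h$ to the source set, contradicting that $h\alpha$ is hiding. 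Combining the two observations yields $c(u)=c(u_{\text{chosen in }\diffcols{u}{v}})\neq c(v_{\text{chosen in }\diffcols{u}{v}})=c(v)$, so $c$ is proper.

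The main obstacle I anticipate is the bookkeeping of the $\dsrc$ sets along partial prefixes, especially the verification that the ``consistency enforcement'' claim really works: one must check that the intermediate local policies inside $\colorsys{u}$ and $\diffcols{u}{v}$ have been designed so that, for every deviation from a consistent color choice, some edge from an $h$-source agent to $L$ is activated in exactly the state reached by the deviating action. Carrying this out cleanly will require a case analysis over the structure of the gadgets together with a short induction on the prefix length that maintains an explicit description of $\dsrc{\beta}{L}{s_0}$ after each prefix $\beta$ of $h\alpha$.
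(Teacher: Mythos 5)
Your overall strategy is exactly the paper's: for the forward direction, follow the coloring to build a loop-free path and argue it is hiding; for the converse, read off $c(u)$ from the choice made in $\colorsys{u}$, show the path cannot later deviate from that choice without transmitting $h$ to $L$, and use the dead-ends in $\diffcols uv$ to get properness. The ``only if'' half of your plan matches the paper's proof step for step and is fine.

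The problem is in the ``if'' direction, where your key invariant is stated backwards and, as written, is self-contradictory. You claim that ``the only agents ever added to the source set are $u_{\neq c(u)}$ for the chosen colors, and by construction none of these has an outgoing edge to $L$ in any local policy encountered on the consistent path.'' Under the paper's definition of $\sources$, an agent enters $\dsrc{\cdot}{L}{\cdot}$ precisely by performing an action in a state where it has an edge to an agent already in the set; an agent with no such edge cannot be added at all, so your two clauses cannot both hold. Moreover the facts are the other way around: on the consistent path the agents that \emph{do} enter the backward-computed set are exactly the $u_{\neq j}$ with $j\neq c(u)$, because they perform the exit actions of the $\diffcols uv$-boxes in states where $u_{\neq j}\dintrel{}L$. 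The point of the construction is the dual one: the agent $u_{\neq c(u)}$ never performs a non-looping action on the consistent path and hence never enters $\dsrc{\cdot}{L}{\cdot}$, and the only non-reflexive edges out of $h$ in the states where $h$ acts are $h\dintrel{}u_{\neq c(u)}$ (inside $\colorsys u$); therefore $h$ is never added, which is what ``hiding'' requires. (Equivalently, in the forward ``who has received $h$'' reading that the paper also uses, the set of informed agents stays $\set{h}\cup\set{u_{\neq c(u)} \mid u\in V}$ and none of its members ever acts in a state with an edge to $L$.) The fix is local and the rest of your plan goes through, but as stated this step would not survive being written out against the definition of $\dsrc{\alpha}{L}{s}$.
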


\begin{proof}
 First assume that $G$ is $3$-colorable, hence let $c\colon \set{u_1,\dots,u_n}\rightarrow\set{0,1,2}$ be a coloring function such that for all edges $(u,v)\in E$, we have that $c(u)\neq c(v)$. We construct the path $a\alpha$ as the unique path from $s_0\cdot a$ to $\last$ that starts with $L$, does not use loops in any state, and where each agent $u$ chooses the action $u_{=c(u)}$ whenever the current state has more than one non-looping actions. Since $c$ is a $3$-coloring, this path does not hit a dead-end in any of the $\diffcols us$-systems, and in particular, reaches the state $\last$. Due to the construction of the path, whenever a transaction $u_{\neq i}$ is performed, the action $u_{=i}$ has never been performed on the path, and thus $u_{\neq i}$ has not received $h$. Hence none of the agents interfering with $L$ has received the action $h$, and thus $\dom(h)\notin\dsrc{a\alpha}L{s_0}$, i.e., $a\alpha$ is hiding.

 For the other direction, assume that there is a hiding path $a\alpha$. Without loss of generality, we can assume that $a\alpha$ does not use any actions that loop in the current state. Since $a\alpha$ is hiding, we know that $s_0\cdot a\alpha=\last$, in particular, every subsystem $\colorsys u$ and $\diffcols uv$ is passed when following $a\alpha$ from $s_0$. We can thus define a coloring $c\colon \set{u_1,\dots,u_n}\rightarrow\set{0,1,2}$ by $c(u)=i$, where $i$ is the unique value such that at the start of $\colorsys u$, the action $u_{=i}$ is performed by $u$. We claim that this is a $3$-coloring of $G$. 
 
 For this, first observe that on $a\alpha$, no action $u_{=j}$ is performed for $j\neq c(u)$: Due to the above, no looping action is performed. Now observe that after the performance of $u_{=c(u)}$ in $\colorsys u$, the agent $u_{\neq c(u)}$ has received the $h$-event. Now after a later performance of the action $u_{=j}$, every path that proceeds to $\last$ uses a transition $u_{\neq c(u)}$ in a state where $u_{\neq c(u)}\dintrel{} L$, which is a contradiction to the assumption that $h\alpha$ is hiding. 

 We now show that for each edge $(u,v)$ of $G$, we have that $c(u)\neq c(v)$. Since $a\alpha$ is hiding, $a\alpha$ passes through the subsystem $\diffcols uv$. Due to the above, in this subsystems the actions $u_{=c(u)}$ and $v_{=c(v)}$ are performed at the relevant states. If $c(u)$ and $c(v)$ were equal, this would reach a dead-end state, which is a contradiction, as $a\alpha$ is hiding, and hence $s_0\cdot a\alpha=\last$.
\end{proof}

Since $M^G$ can clearly be constructed from $G$ in logarithmic space, the following lemma now proves Theorem~\ref{theorem:np hardness}:

\begin{lemma}
\begin{itemize}
 \item If $G$ is $3$-colorable, then $M^G$ is not \dipsecure.
  \item If $G$ is not $3$-colorable, then $M^G$ is \dipsecure.
 \end{itemize}
 \end{lemma}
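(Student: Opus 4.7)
The plan is to turn hiding paths into witnesses of insecurity of $M^G$ and back, using Lemma~\ref{lemma:hiding path iff colorable} to connect both parts of the claim.

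For the forward direction I assume $G$ is $3$-colorable. Lemma~\ref{lemma:hiding path iff colorable} then supplies a path $h\alpha$ from $s_0$ with $s_0\cdot h\alpha=\last$ and $\dom(h)\notin\dsrc{h\alpha}{L}{s_0}$. Reaching $\last$ forces $\alpha$ to begin with an action from $\actions\setminus\set h$, since that is the only way to leave the dummy state $s_0\cdot h$ into the unprimed branch. Hence the same sequence $\alpha$, executed from $s_0$ directly, enters the primed branch at $\colorsysprime{u_1}$. Because the primed subsystems are defined with the same transitions as their unprimed counterparts and differ only in their policies, this gives $s_0\cdot\alpha=\lastprime$. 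Taking $s=s_0$, $a=h$, $u=L$ and this $\alpha$ yields $\obs_L(s_0\cdot\alpha)=1\neq 0=\obs_L(s_0\cdot h\alpha)$ together with $\dom(h)\notin\dsrc{h\alpha}{L}{s_0}$, exactly the witness of insecurity that is required.

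For the contrapositive I assume an insecurity witness $(s,a,\alpha,u)$ for $M^G$ and argue that it forces a hiding path. Since $\obs$ takes only the values $0$ and $1$ and equals $1$ only at the single pair $(\lastprime,L)$, any such witness must satisfy $u=L$, and exactly one of $s\cdot a\alpha$ and $s\cdot\alpha$ must equal $\lastprime$. A case analysis on the location of $s$ rules out all but $s=s_0$: if $s$ lies in any primed subsystem, every action keeps the run inside it and its policy allows arbitrary interference, so $\dom(a)\in\dsrc{a\alpha}{L}{s}$ trivially; if $s$ lies in the dummy state or any unprimed subsystem, then $\lastprime$ is structurally unreachable from $s$. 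At $s=s_0$ the subcase $a\in\actions\setminus\set h$ is ruled out as well, because $\dom(a)\dintrel{s_0}L$ and hence $\dom(a)\in\dsrc{a\alpha}{L}{s_0}$. Thus $a=h$, which forces $s_0\cdot\alpha=\lastprime$ via the primed branch while $s_0\cdot h\alpha$ lies in the unprimed branch. The primed/unprimed structural identification then gives $s_0\cdot h\alpha=\last$, and together with $\dom(h)\notin\dsrc{h\alpha}{L}{s_0}$ this exhibits a hiding path; Lemma~\ref{lemma:hiding path iff colorable} then yields $3$-colorability of $G$, which is the contrapositive.

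The main obstacle I expect is the case analysis in the converse direction, and in particular the structural identification of the primed and unprimed branches: one needs that an $\alpha$ with $s_0\cdot\alpha=\lastprime$ via primed corresponds to the same $\alpha$ executed from the dummy state reaching $\last$ via unprimed. Once this is spelled out using the shared transition function of the two families of subsystems, the rest is a straightforward source-set bookkeeping argument.
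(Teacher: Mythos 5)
Your proof is correct and follows essentially the same route as the paper's: both directions reduce to the existence of a hiding path via Lemma~\ref{lemma:hiding path iff colorable}, with the same case analysis pinning down $s=s_0$ and $a=h$ in the converse direction. The only small inaccuracy is the claim that reaching \last\ \emph{forces} $\alpha$ not to begin with $h$---since $h$ loops at the dummy state it does not, but leading $h$'s can be stripped without loss of generality, which is exactly how the paper handles this point.
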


\begin{proof}
 First assume that $G$ is $3$-colorable. By Lemma~\ref{lemma:hiding path iff colorable}, there is a hiding path $h\alpha$. In particular, $s_0\cdot h\alpha=\last$. Since the action $h$ loops in the state $s_0\cdot h$, we can without loss of generality assume that $\alpha$ does not start with $h$, and hence $s_0\cdot\alpha=\lastprime$. Since $h\alpha$ is hiding, we know that $\dom(h)\notin\dsrc{h\alpha}{L}{s_0}$. Since in $s_0$, there is no outgoing edge from $h$, we also know that $\dom(h)^{s_0}_{\downarrow}\cap\dsrc{\alpha}{L}{s_0}=\emptyset$. Since $\obs_L(\last)\neq\obs_L(\lastprime)$, it follows that the $M^G$ is not 
 \dipsecure. 
 
 Now assume that $G$ is not $3$-colorable, and indirectly assume that $M^G$ is not \dipsecure.
  Since $L$ is the only agent whose observation function is not constant, this implies that there is a state $s$, an action $a$, and a sequence $\alpha$ such that $\dom(a)\notin\dsrc{a\alpha}Ls$ and $\obs_L(s\cdot a\alpha)\neq\obs_L(s\cdot\alpha)$. Since $\lastprime$ is the only state with an observation different from $0$, we know that $\lastprime\in\set{s\cdot a\alpha,s\cdot\alpha}$. In particular, $s$ is an ancestor of $\lastprime$ in $M^G$. Since $\dom(a)\notin\dsrc{a\alpha}Ls$, we know that in particular, $\dom(a)\not\dintrel{s}L$. Since the only ancestor state of $\lastprime$ in which the local policy is not the complete relation is $s_0$, we know that $s=s_0$. Since in $s_0$, all agents except for $h$ may interfere with $L$, we also know that $a=h$. Since $s_0\cdot h\alpha\neq\lastprime$ for any $\alpha$, we know that $s_0\cdot\alpha=\lastprime$. From the design of 
 $M^G$, it follows that $s_0\cdot h\alpha=\last$. Since $h\notin\dsrc{h\alpha}{L}{s_0}$, it follows that $h\alpha$ is hiding, and thus Lemma~\ref{lemma:hiding path iff colorable}, implies that $G$ is $3$-colorable as required.
\end{proof}

\begin{figure}
\scalebox{0.9}{\rotatebox{270}{%
 \begin{tikzpicture}[->,>=stealth',shorten >=1pt,auto,node distance=1cm, semithick]
     \tikzset{round-boxed/.style={draw=black, line width=1pt, dash pattern=on 1pt off 0pt, inner sep=0mm, rectangle, rounded corners}};
   \node (incoming) at (-1,-1.75) { $\cdot$ };
   \node (uneq1) at (2,3)   { $u_{\neq 1}$ };
   \node (uneq2) at (3,3)   { $u_{\neq 2}$ };
   \node (L)     at (2.5,2) { $L$ };
   \path (uneq1) edge (L);
   \path (uneq2) edge (L);
   \node (u=0) [round-boxed, fit=(uneq1) (uneq2) (L)] {} ;
   \node (u=0-checked) [round-boxed] at (6,2.5) { \phantom{\LARGE{X}} };
   \path (u=0) edge node { $u_{\neq1,2}$} (u=0-checked) ;
   \node (uneq0) at (2,-1.25)   { $u_{\neq 0}$ };
   \node (uneq2) at (3,-1.25)   { $u_{\neq 2}$ };
   \node (L)     at (2.5,-2.25) { $L$ };
   \path (uneq0) edge (L);
   \path (uneq2) edge (L);
   \node (u=1) [round-boxed, fit=(uneq0) (uneq2) (L)] {} ;
   \node (u=1-checked) [round-boxed] at (6,-1.75) { \phantom{\LARGE{X}} };
   \path (u=1) edge node { $u_{\neq0,2}$ } (u=1-checked) ;
   \node (uneq0) at (2,-5.5)   { $u_{\neq 0}$ };
   \node (uneq1) at (3,-5.5)   { $u_{\neq 1}$ };
   \node (L)     at (2.5,-6.5) { $L$ };
   \path (uneq0) edge (L);
   \path (uneq1) edge (L);
   \node (u=2) [round-boxed, fit=(uneq0) (uneq1) (L)] {} ;
   \node (u=2-checked) [round-boxed] at (6,-6) { \phantom{\LARGE{X}} };
   \path (u=2) edge node { $u_{\neq0,1}$ } (u=2-checked) ;
   \path (incoming) edge node[above] { $u_{=0}$ } (u=0) ;
   \path (incoming) edge node[above] { $u_{=1}$ } (u=1) ;
   \path (incoming) edge node[above] { $u_{=2}$ } (u=2) ;
   \node (u=0-v=0) [round-boxed ] at (9.5, 4.125) {\phantom{\LARGE{X}} };
   \path (u=0-checked) edge node[above] { $v_{=0}$ } (u=0-v=0);
   \node (vneq0) at (9,3)    { $v_{\neq 0}$ };
   \node (vneq2) at (10,3)    { $v_{\neq 2}$ };
   \node (L)     at (9.5,2.25) { $L$ };
   \path (vneq0) edge (L);
   \path (vneq2) edge (L);
   \node (u=0-v=1) [round-boxed, fit=(vneq0) (vneq2) (L)] {};
   \path (u=0-checked) edge node[above] { $v_{=1}$ } (u=0-v=1);
   \node (vneq0) at (9,1.5)    { $v_{\neq 0}$ };
   \node (vneq1) at (10,1.5)    { $v_{\neq 1}$ };
   \node (L)     at (9.5,0.75) { $L$ };
   \path (vneq0) edge (L);
   \path (vneq1) edge (L);
   \node (u=0-v=2) [round-boxed, fit=(vneq0) (vneq1) (L)] {};
   \path (u=0-checked) edge node[above] { $v_{=2}$ } (u=0-v=2);
   \node (vneq1) at (9,0)    { $v_{\neq 1}$ };
   \node (vneq2) at (10,0)    { $v_{\neq 2}$ };
   \node (L)     at (9.5,-0.75) { $L$ };
   \path (vneq1) edge (L);
   \path (vneq2) edge (L);
   \node (u=1-v=0) [round-boxed, fit=(vneq1) (vneq2) (L)] {};
   \path (u=1-checked) edge node[above] { $v_{=0}$ } (u=1-v=0);
   \node (u=1-v=1) [round-boxed ] at (9.5, -1.75) {\phantom{\LARGE{X}} };
   \path (u=1-checked) edge node[above] { $v_{=1}$ } (u=1-v=1);
   \node (vneq0) at (9,-3)    { $v_{\neq 0}$ };
   \node (vneq1) at (10,-3)    { $v_{\neq 1}$ };
   \node (L)     at (9.5,-3.75) { $L$ };
   \path (vneq0) edge (L);
   \path (vneq1) edge (L);
   \node (u=1-v=2) [round-boxed, fit=(vneq0) (vneq1) (L)] {};
   \path (u=1-checked) edge node[above] { $v_{=2}$ } (u=1-v=2);
   \node (vneq1) at (9,-4.5)    { $v_{\neq 1}$ };
   \node (vneq2) at (10,-4.5)    { $v_{\neq 2}$ };
   \node (L)     at (9.5,-5.25) { $L$ };
   \path (vneq1) edge (L);
   \path (vneq2) edge (L);
   \node (u=2-v=0) [round-boxed, fit=(vneq1) (vneq2) (L)] {};
   \path (u=2-checked) edge node[above] { $v_{=0}$ } (u=2-v=0);
   \node (vneq0) at (9,-6)    { $v_{\neq 0}$ };
   \node (vneq2) at (10,-6)    { $v_{\neq 2}$ };
   \node (L)     at (9.5,-6.75) { $L$ };
   \path (vneq0) edge (L);
   \path (vneq2) edge (L);
   \node (u=2-v=1) [round-boxed, fit=(vneq0) (vneq2) (L)] {};
   \path (u=2-checked) edge node[above] { $v_{=1}$ } (u=2-v=1);
   \node (u=2-v=2) [round-boxed ] at (9.5, -8) {\phantom{\LARGE{X}} };
   \path (u=2-checked) edge node[above] { $v_{=2}$ } (u=2-v=2);
   \node (outgoing) at (18,-1.75) { $\cdot$ };
   \path (u=0-v=1) edge [bend left] node[above] {  $v_{\neq0,2}$  } (outgoing);
   \path (u=0-v=2) edge [bend left] node[above] {  $v_{\neq0,1}$  } (outgoing);
   \path (u=1-v=0) edge node[below] {  $v_{\neq1,2}$  } (outgoing);
   \path (u=1-v=2) edge node[above] {  $v_{\neq0,1}$  } (outgoing);
   \path (u=2-v=0) edge [bend right] node[above] {  $v_{\neq1,2}$  } (outgoing);
   \path (u=2-v=1) edge [bend right] node[above] {  $v_{\neq0,2}$  } (outgoing);
  \end{tikzpicture}}}
  \caption{The subsystem $\diffcols uv$}\label{fig:diffcols}
\end{figure}

We now prove the FPT result, from which the case for a logarithmic number of agents immediately follows:

\begin{proof}
 It clearly suffices to provide an FPT algorithm. Such an algorithm can be obtained by the standard dynamic programming approach, by first creating a table with an entry for   every choice $s$, $t$ and $\Dom'$, that indicates whether $s\direl{\Dom'}t$ has already been established. 
The size of the table is $2^{\card \Dom}\cdot\card{\States}^2$. 
Now initialize the table with $\card \States\cdot\card \Actions$ operations (using the \gLR property), and use the \gSC condition to add entries to the table until no changes are performed anymore. 
Then the condition \gOC can be verified by checking, for each agent $u$, and each set $\Dom'$ for which $u\in \Dom'$, whether for all $s\direl{\Dom'} t$, we have $\obs_u(s)=\obs_u(t)$. 
For each choice of $u$ and $\Dom'$, this requires $\card \States^2$ accesses to the table. Since the access to the table can be implemented in time $2^{\card \Dom}\cdot\mathtext{poly}{\card M}$, this completes the proof. 
\end{proof}

\subsection{Proof of Theorem~\ref{theorem:redundant edges}}

\begin{proof}
  Clearly, if $M$ is not \dipsecure with respect to $\dpol$, then $M$ is also not \dipsecure with respect to $\dpolprime$. 
Using induction, we can assume that $\dpolprime$ arose from $\dpol$ by removing a single intransitively useless edge $e$. 
Assume that $M$ is not \dipsecure with respect to $\dpolprime$. 
Hence there are $a\in \Actions$, $\alpha in \Actions^*$, $s \in \States$, $u \in \Dom$ such that
$\dom(a) \notin \dsrc{a \alpha}{u}{s}$ (with respect to $\dpolprime$) and 
$\obs_u(s \cdot a \alpha) \neq \obs_u(s\cdot \alpha)$. 
Since $M$ is \dipsecure, we know that $\dom(a) \in \dsrc{a \alpha}{u}{s}$ (with respect to $\dpol$).
In particular, we know that $s\cdot a\alpha \not\dintrel{u} s\cdot \alpha$. 
It follows thtat $e$ is not intransitively useless, a contradiction. 
\end{proof}

\subsection{Proof of Theorem~\ref{theorem:polynomial unwinding characterization of intransitive uniformity and security}}

The proof of this theorem highlights an interesting difference between intransitive noninterference with a global policy (IP-security) and with local policies: It can easily be shown (see~\cite{emsw11}) that if a system is not IP-secure, then there exist a ``witness'' for the insecurity consisting of a state $s$, an agent $u$, an action $a$, and a sequence $\alpha$ such that 
\begin{enumerate}
 \item $\dom(a)\notin\sources(a\alpha,u)$ and $\obs_u(s\cdot a\alpha)\neq\obs_u(s\cdot\alpha)$ (i.e., these values demonstrate insecurity of the system), and
 \item $\alpha$ contains no $b$ with $\dom(a)\dintrel{}\dom(b)$.
\end{enumerate}

Intuitively, this means that to verify insecurity, it suffices to consider sequences in which the ``secret'' action $a$ is not transmitted even one step. This feature is crucial for the polynomial-time algorithm in~\cite{emsw11} to verify IP-security. In a setting with local policies, the situation is different, the above-mentioned property does not hold. This is in fact the key reason why no ``small'' unwinding for \dipsecurity exists, and why the verification problem is \NP-hard. However, in systems with a uniform policy, we again can prove an analogous property, even though the proof is more complicated than for the setting with a global policy:

\begin{lemma}
\label{lemma:minimal path different observations}
 Let $M$ be a system with a policy that is intransitively uniform. Then $M$ is \dipsecure if and only if there are $a$, $u$, $s$, and $\alpha$ with $\dom(a)\notin\dsrc{a\alpha}us$, $\obs_u(s\cdot\alpha)\neq\obs_u(s\cdot a\alpha)$, and no $b$ with $\dom(a)\dintrel{s}\dom(b)$ appears in $\alpha$.
\end{lemma}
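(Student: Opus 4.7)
The statement is to be read as ``$M$ is \emph{not} \dipsecure iff such a tuple $(a,u,s,\alpha)$ exists''. The backward direction is then immediate from Definition~3: the two conditions $\dom(a)\notin\dsrc{a\alpha}{u}{s}$ and $\obs_u(s\cdot\alpha)\neq\obs_u(s\cdot a\alpha)$ directly witness the failure of \dipsecurity.

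For the forward direction I plan a double minimality argument. Assume $M$ is not \dipsecure, so by Definition~3 there exists a tuple $(a,u,s,\alpha)$ with $\dom(a)\notin\dsrc{a\alpha}{u}{s}$ and $\obs_u(s\cdot\alpha)\neq\obs_u(s\cdot a\alpha)$. Pick such a tuple minimizing $|\alpha|$ and, subject to that, the number of actions $b$ in $\alpha$ with $\dom(a)\dintrel{s}\dom(b)$. Suppose for contradiction that $\alpha$ still contains such an action, and write $\alpha=\beta b\gamma$ with $b$ its first occurrence.

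The first step is the auxiliary claim $\dom(a)\notin\dsrc{a\beta}{u}{s}$. A straightforward induction on $|\beta|$ using the recursive definition of $\dsrc$ shows $\dsrc{\beta}{u}{s\cdot a}\subseteq\{u\}\cup\{\dom(c):c\text{ occurs in }\beta\}$. The hypothesis $\dom(a)\notin\dsrc{a\alpha}{u}{s}$ forces $\dom(a)\not\dintrel{s}u$, and the choice of $\beta$ ensures that no $c$ in $\beta$ satisfies $\dom(a)\dintrel{s}\dom(c)$, so $\dom(a)$ is not added when passing from $\dsrc{\beta}{u}{s\cdot a}$ to $\dsrc{a\beta}{u}{s}$. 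Consequently $(a,u,s,\beta)$ meets the first witness condition, and by length-minimality of $|\alpha|$ it must fail the second, giving $\obs_u(s\cdot\beta)=\obs_u(s\cdot a\beta)$. Setting $s_1=s\cdot\beta$ and $s_2=s\cdot a\beta$, the same claim yields $s_1\approx^i_u s_2$, and uniformity of $\dpol$ then forces $\infagents{u}{s_1}=\infagents{u}{s_2}$.

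The hard part will be deriving a contradiction from $\obs_u(s_1)=\obs_u(s_2)$ together with $\obs_u(s_1\cdot b\gamma)\neq\obs_u(s_2\cdot b\gamma)$. The uniform intransitive unwinding of Theorem~\ref{theorem:polynomial unwinding characterization of intransitive uniformity and security} is the natural tool: rule \dLR at the state $s$ combined with iterated applications of \dSC along $\beta$ yields $s_1\sim^{s,\dom(a)}_u s_2$, but the \dSC step breaks at $b$ because $\dom(a)\dintrel{s}\dom(b)$. I would then exploit uniformity iteratively along $b\gamma$, tracking for each agent $v$ whether the local policies in the two branches still agree and whether $\dom(a)$ can still reach $u$ along the altered run. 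This should produce one of two alternatives, each contradicting our minimality choice: either the shortened sequence $\beta\gamma$ is itself a witness (contradicting length-minimality), or one can shift the hidden action to $b$ from a suitable start state to obtain a witness with strictly fewer direct $\dintrel{s}$-successors of the hidden action (contradicting the secondary minimality). The delicate technical point is that the local policies in $s_1$ and $s_2$ may disagree for agents \emph{other} than $u$, so the source sets along $b\gamma$ can evolve differently in the two alternative runs; closing this gap will require a careful induction invoking uniformity for every agent whose local policy is relevant to whether $\dom(a)$'s information propagates to $u$.
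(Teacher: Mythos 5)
Your reading of the statement as ``$M$ is \emph{not} \dipsecure iff such a tuple exists'' matches the paper's intent, the backward direction is indeed immediate, and your opening moves for the forward direction are sound: a length-minimal counterexample, the first offending action $b$, the containment $\dsrc{\beta}{u}{s\cdot a}\subseteq\{u\}\cup\{\dom(c): c \text{ occurs in } \beta\}$, and the conclusion $\dom(a)\notin\dsrc{a\beta}us$ are all correct. But the heart of the argument is exactly the part you defer, and what you defer is the entire technical content of the lemma. The paper closes the argument by a three-way case split on whether deleting $b$ changes $u$'s final observation in the $a$-branch ($\obs_u(s\cdot a\beta b\beta')$ vs.\ $\obs_u(s\cdot a\beta\beta')$), in the $a$-free branch ($\obs_u(s\cdot\beta b\beta')$ vs.\ $\obs_u(s\cdot\beta\beta')$), or in neither; each case yields a strictly shorter witness, namely $(b,u,s\cdot a\beta,\beta')$, $(b,u,s\cdot\beta,\beta')$, or $(a,u,s,\beta\beta')$ respectively. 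The nontrivial step in each case is verifying the hiding condition for the new witness --- that $\dom(b)\notin\dsrc{b\beta'}{u}{s\cdot\beta}$, and in the third case that $\dom(a)\notin\dsrc{a\beta\beta'}us$. The paper proves these by a \emph{secondary} minimal-prefix induction: one takes a shortest prefix $\gamma=\pi c\pi'$ at which some agent $v$ first ``learns'' the secret in one run but not the other, and applies intransitive uniformity to the agent $v$ at the states reached after $\pi$ to conclude that the transmitting edge $\dom(c)\dintrel{}v$ is present in one run iff it is present in the other, contradicting the minimality of $\gamma$. Your sketch names this difficulty (``tracking for each agent $v$ \dots a careful induction invoking uniformity for every agent'') but does not carry it out, and explicitly labels it a gap. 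As it stands the forward direction is therefore not proved.

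Two further remarks. First, your plan to invoke the uniform intransitive unwinding of Theorem~\ref{theorem:polynomial unwinding characterization of intransitive uniformity and security} is circular: that theorem is proved \emph{from} this lemma, so the unwinding relations cannot be assumed to exist here. Second, your double-minimality criterion is unnecessary: in all three cases of the paper's argument the replacement witness is strictly shorter than $\alpha$, so plain length-minimality suffices, and the intermediate facts you derive ($\obs_u(s\cdot\beta)=\obs_u(s\cdot a\beta)$ and $\infagents{u}{s\cdot\beta}=\infagents{u}{s\cdot a\beta}$) are true but are not the form of uniformity the argument actually needs --- what is needed is uniformity applied to the \emph{intermediate} agents $\dom(c)$ along the transmission chain, not to $u$ at the endpoint.
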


\begin{proof}
 Clearly if such $a$, $u$, $s$, and $\alpha$ exist, then the system is not \dipsecure. For the converse, let $\alpha$ be of minimal length such that there exist $u$, $s$, and $a$ with $\dom(a)\notin\dsrc{a\alpha}us$ and $\obs_u(s\cdot a\alpha)\neq\obs_u(a\cdot\alpha)$. 
 Indirectly, assume that $\alpha=\beta b\beta'$ for some $b$ with $\dom(a)\dintrel{s}\dom(b)$. We consider three cases.

 \begin{itemize}
  \item \emph{Assume $\obs_u(s\cdot a\beta b\beta')\neq \obs_u(s\cdot a\beta\beta')$.} Note that $\dom(b)\notin\dsrc{b\beta'}u{s\cdot a\beta}$. Hence choosing $s'=s\cdot a\beta$, $a'=b$, and $\alpha'=\beta'$ is a contradiction to the minimality of $\alpha$.
  \item \emph{Assume $\obs_u(s\cdot\beta b\beta')\neq \obs_u(s\cdot\beta\beta')$.} To show that this again is a contradiction to the minimality of $\alpha$ (starting in the state $s\cdot\beta$), it suffices to show that $\dom(b)\notin\dsrc{b\beta'}u{s\cdot\beta}$. 
  Hence, indirectly assume that $\dom(b)\in\dsrc{b\beta'}u{s\cdot\beta}$, and let $\gamma$ be a minimal prefix of $b\beta'$ such that there is some agent $v$ with 
  \begin{itemize}
     \item $\dom(b)\in\dsrc\gamma v{s\cdot\beta}$,
     \item $\dom(a)\notin\dsrc{a\beta\gamma}vs$.
  \end{itemize}
  Since choosing $v=u$ and $\gamma=\beta'$ satisfies these conditions, such a minimal $\gamma$ exists. Again, consider the point where $v$ ``learns'' that $a$ was performed, i.e., let $\gamma=\pi c\pi'$ with 
  \begin{itemize}
    \item $\dom(b)\in\dsrc\pi{\dom(c)}{s\cdot\beta}$, and 
    \item $\dom(c)\dintrel{s\cdot\beta\pi}v$.
  \end{itemize}
  Since $\dom(a)\notin\dsrc{a\cdot\beta\gamma}vs$, and $\pi$ is a prefix of $\gamma$, the prerequisites to the lemma imply that $v^\uparrow_{s\cdot a\beta\pi}=v^\uparrow_{s\cdot\beta\pi}$, in particular, $\dom(c)\dintrel{s\cdot a\beta\pi}v$. Since $\dom(a)\notin\dsrc{a\beta\gamma}vs$, this implies 
$$\dom(a)\notin\dsrc{a\beta\pi}{\dom(c)}s,$$ hence we have a contradiction to the minimality of $\gamma$.
  \item \emph{Assume $\obs_u(s\cdot a\beta b\beta')=\obs_u(s\cdot a\beta\beta')$ and $\obs_u(s\cdot\beta b\beta')=\obs_u(s\cdot\beta\beta')$.} Since $\obs_u(s\cdot a\beta b\beta')\neq \obs_u(s\cdot\beta b\beta')$, this implies $\obs_u(s\cdot a\beta\beta')\neq\obs_u(s\cdot\beta\beta')$. To obtain a contradiction to the minimality of $\alpha$, it suffices to show that $\dom(a)\notin\dsrc{a\beta\beta'}us$. Hence, indirectly assume that $\dom(a)\in\dsrc{a\beta\beta'}us$, and let $\gamma$ be a minimal prefix of $\beta'$ such that there is an agent $v$ with
  \begin{itemize}
    \item $\dom(a)\notin\dsrc{a\beta b\gamma}vs$, and
    \item $\dom(a)\in\dsrc{a\beta\gamma}vs$.
  \end{itemize}
  Since choosing $v=u$ and $\gamma=\beta'$ satisfies these conditions, such a minimal $\gamma$ exists. Now consider the step where $v$ ``learns'' $a$, which clearly happens inside $\gamma$ (as $\dom(a)\notin\dsrc{a\beta b\gamma}vs$). Hence $\gamma=\pi c\pi'$ with
  \begin{itemize}
    \item $\dom(a)\in\dsrc{a\beta\pi}{\dom(c)}s$, and
    \item $\dom(c)\dintrel{s\cdot a\beta\pi}v$.
  \end{itemize}

 Since $\dom(a)\notin\dsrc{a\beta b\gamma}vs$, we have $\dom(b)\notin\dsrc{b\gamma}v{s\cdot a\beta}$. Since $\pi$ is a prefix of $\gamma$, this implies $\dom(b)\notin\dsrc{b\pi}v{s\cdot a\beta}$. The conditions of the lemma this imply that $v^\uparrow_{s\cdot a\beta b\pi}=v^\uparrow_{s\cdot a\beta\pi}$. In particular, this implies $\dom(c)\dintrel{s\cdot a\beta b\pi}v$. Since $\dom(a)\notin\dsrc{a\beta b\gamma}vs$, this implies $\dom(a)\notin\dsrc{a\beta b\pi}{\dom(c)}s$, which is a contradiction to the minimality of $\gamma$.
 \end{itemize}
\end{proof}

We now show a similar fact which allows us to easily verify whether a policy is intransitively uniform: To verify uniformity, it again suffices to consider action sequences in which the ``secret'' action is not even transmitted a single step. This is shown in the following Lemma:

\begin{lemma}\label{lemma:minimal path different policies}
 If a policy for a system is not intransitively uniform, there is an agent $u$, an action $a$, a sequence $\alpha$, and a state $s$ such that 
 \begin{enumerate}
  \item $\dom(a)\notin\dsrc{a\alpha}us$,
  \item $\infagents{u}{s\cdot a\alpha}\neq \infagents{u}{s\cdot\alpha}$,
 \end{enumerate}
 and contains no $b$ with $\dom(a)\dintrel{s}\dom(b)$.
\end{lemma}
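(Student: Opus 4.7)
The plan is to follow the proof of Lemma~\ref{lemma:minimal path different observations} with the policy predicate $\infagents{u}{\cdot}$ in place of the observation predicate $\obs_u(\cdot)$. The main obstacle will be that the earlier proof invokes the uniformity hypothesis at two spots (inside its Cases~2 and~3), and uniformity is precisely what we cannot assume here; the key idea is that wherever the earlier argument would use uniformity, failure of the needed policy equality itself constitutes a strictly smaller witness of non-uniformity, which contradicts minimality.

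To extract an initial witness I will use that $\approx^i_u$ is by definition the smallest equivalence relation containing the atomic pairs $(\tilde s\cdot a\alpha, \tilde s\cdot\alpha)$ with $\dom(a)\notin\dsrc{a\alpha}{u}{\tilde s}$. Any equivalence chain witnessing non-uniformity must therefore contain some such atomic pair whose two states already differ in $\infagents u$, yielding a witness $(\tilde s, u, a, \alpha)$ with $\dom(a)\notin\dsrc{a\alpha}{u}{\tilde s}$ and $\infagents{u}{\tilde s\cdot a\alpha}\neq\infagents{u}{\tilde s\cdot\alpha}$. I then pick such a witness with $|\alpha|$ minimal and, for contradiction, assume $\alpha=\beta b\beta'$ with $\dom(a)\dintrel{\tilde s}\dom(b)$. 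I split into three cases depending on which of $(\tilde s\cdot a\beta b\beta', \tilde s\cdot a\beta\beta')$, $(\tilde s\cdot\beta b\beta', \tilde s\cdot\beta\beta')$, or $(\tilde s\cdot a\beta\beta', \tilde s\cdot\beta\beta')$ first shows the $\infagents u$ mismatch; these three cases mirror exactly those of the preceding proof.

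In the easy case (mismatch in the first pair), the quadruple $(\tilde s\cdot a\beta, u, b, \beta')$ is a strictly shorter witness of non-uniformity. Its side condition $\dom(b)\notin\dsrc{b\beta'}{u}{\tilde s\cdot a\beta}$ follows from $\dom(a)\notin\dsrc{a\beta b\beta'}{u}{\tilde s}$ and the hypothesis $\dom(a)\dintrel{\tilde s}\dom(b)$, using monotonicity of $\dsrc\cdot\cdot\cdot$ under trace extensions, contradicting minimality of $\alpha$. For the other two cases I will aim at the candidate shorter witnesses $(\tilde s\cdot\beta, u, b, \beta')$ and $(\tilde s, u, a, \beta\beta')$, respectively, and, as in the preceding proof, descend to a minimal prefix $\gamma=\pi c\pi'$ and an agent $v$ along which uniformity of the policy between two $\approx^i_v$-related states---namely $\tilde s\cdot a\beta\pi$ versus $\tilde s\cdot\beta\pi$ in Case~2, or $\tilde s\cdot a\beta\pi$ versus $\tilde s\cdot a\beta b\pi$ in Case~3---would suffice to contradict minimality of $\gamma$.

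At exactly this point the main obstacle arises, and the argument must diverge from the proof of Lemma~\ref{lemma:minimal path different observations}. I branch: if the required policy equality actually holds, the earlier argument goes through verbatim and contradicts minimality of $\gamma$. Otherwise the two $\approx^i_v$-related states differ in $\infagents v$, and this pair directly supplies a fresh non-uniformity witness---$(\tilde s, v, a, \beta\pi)$ in Case~2 and $(\tilde s\cdot a\beta, v, b, \pi)$ in Case~3---whose action-sequence length is at most $|\alpha|-1$. The respective source conditions $\dom(a)\notin\dsrc{a\beta\pi}{v}{\tilde s}$ and $\dom(b)\notin\dsrc{b\pi}{v}{\tilde s\cdot a\beta}$ follow from $\dom(a)\notin\dsrc{a\beta\gamma}{v}{\tilde s}$ (in Case~3 combined with $\dom(a)\dintrel{\tilde s}\dom(b)$) by the same \dsrc-monotonicity. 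In every case minimality of $\alpha$ is contradicted, finishing the proof.
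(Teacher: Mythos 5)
Your proposal is correct and matches the paper's own proof essentially step for step: the paper likewise takes a minimal-length atomic witness, splits into the same three cases on where the $\infagents{u}{\cdot}$ mismatch occurs, and at exactly the two spots where the proof of Lemma~\ref{lemma:minimal path different observations} invokes uniformity, it instead derives the needed policy equality from minimality of $\alpha$ (i.e., a failure of that equality would itself be a strictly shorter non-uniformity witness), then contradicts minimality of $\gamma$. Your explicit extraction of an initial atomic witness from the generators of $\approx^i_u$ is a detail the paper leaves implicit, but it is the correct justification.
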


\begin{proof}
 Choose $u$, $a$, $s$, and $\alpha$ such that $\card{\alpha}$ is minimal, and indirectly assume that $\alpha=\beta b\beta'$ for some sequences $\beta$ and $\beta'$, where $\dom(a)\dintrel{s}\dom(b)$. Note that this implies

 $$\dom(b)\notin\dsrc{b\beta'}u{s\cdot a\beta},$$ which we will use throughout the proof.
 We consider three cases:

 \begin{itemize}
  \item \emph{Assume that $\infagents{u}{s\cdot a\beta b\beta'}\neq \infagents{u}{s\cdot a\beta\beta'}$.}
   We choose $s'=s\cdot a\beta$, $a'=b$, and $\alpha'=\beta'$. This is a contradiction to the minimality of $\alpha$, since $\card{\alpha'}<\card{\beta'}$.

  \item \emph{Assume that $\infagents{u}{s\cdot\beta b\beta'}\neq \infagents{u}{s\cdot\beta\beta'}$.}
   We choose $s'=s\cdot\beta$, $a'=b$, and $\alpha=\beta'$ and obtain a contradiction in the same way as in the above case. For this, it suffices to prove that $\dom(b)\notin\dsrc{b\beta'}u{s\cdot\beta}$. Hence assume indirectly that $\dom(b)\in\dsrc{b\beta'}u{s\cdot\beta}$. Let $\gamma$ be a minimal prefix of $b\beta'$ such that there is an agent $v$ with 
   \begin{itemize}
     \item $\dom(b)\in\dsrc{\gamma}{v}{s\cdot\beta}$,
     \item $\dom(a)\notin\dsrc{a\beta\gamma}vs$.
   \end{itemize}
   Since $\gamma=b\beta'$ and $v=u$ satisfies these conditions, such a minimal choice of $\gamma$ and $v$ exists. Now consider the position where $v$ ``learns'' $b$, i.e., let $\gamma=\pi c\pi'$ such that the action $c$ transmits the $b$-action to $v$, i.e., we have that
   \begin{itemize}
     \item $\dom(b)\in\dsrc{\pi}{\dom(c)}{s\cdot\beta}$,
     \item $\dom(c)\dintrel{s\cdot\beta\pi}v$.
   \end{itemize}
   Note that $\pi$ is a proper prefix of $\gamma$. Since $\dom(a)\notin\dsrc{a\beta\gamma}vs$, it follows that $\dom(a)\notin\dsrc{a\beta\pi}vs$. Hence we know by the minimality of $\alpha$ that $v^\uparrow_{s\cdot\beta\pi}=v^\uparrow_{s\cdot a\beta\pi}$,
   In particular, $\dom(c)\dintrel{s\cdot a\beta\pi}v$.
   We now have the following:
   \begin{itemize}
    \item Due to the above, we know that $\dom(b)\in\dsrc{\pi}{\dom(c)}{s\cdot\beta}$,
    \item since $\dom(a)\notin\dsrc{a\beta\gamma}vs$, we know that $\dom(a)\notin\dsrc{a\beta\pi}{\dom(c)}s$.
   \end{itemize}
   Since $\pi$ is a proper prefix of $\gamma$, this is a contradiction to the minimality of $\gamma$.

   \item \emph{Assume that $\infagents{u}{s\cdot a\beta b\beta'}=\infagents{u}{s\cdot a\beta\beta'}$ and $\infagents{u}{s\cdot\beta b\beta'}=\infagents{u}{s\cdot\beta\beta'}$.} Since $\infagents{u}{s\cdot a\beta b\beta'}\neq \infagents{u}{s\cdot\beta b\beta'}$, it then follows that $\infagents{u}{s\cdot a\beta\beta'}\neq \infagents{u}{s\cdot\beta\beta'}$. It suffices to show that $\dom(a)\notin\dsrc{a\beta\beta'}{u}{s}$, we then have a contradiction to the minimality of $\alpha$. Hence indirectly assume that $\dom(a)\in\dsrc{a\beta\beta'}us$. 
Let $\gamma$ be a minimal prefix of $\beta'$ such that there is some $v$ such that
   \begin{itemize}
     \item $\dom(a)\notin\dsrc{a\beta b\gamma}vs$,
     \item $\dom(a)\in\dsrc{a\beta\gamma}vs$.
   \end{itemize}
   Since $\gamma=\beta'$ and $v=u$ satisfy these conditions, such a minimal choice exists. Similarly as before, look at the action where $a$ is forwared to $v$, i.e., let $\gamma=\pi c\pi'$ such that 
   \begin{itemize}
     \item $\dom(a)\in\dsrc{a\beta\pi}{\dom(c)}s$,
     \item $\dom(c)\dintrel{s\cdot a\beta\pi}v$.
   \end{itemize}
   Since $\dom(a)\notin\dsrc{a\beta b\gamma}vs$ and $\dom(a)\dintrel{s}\dom(b)$, it follows that $\dom(b)\notin\dsrc{b\gamma}v{s\cdot a\beta}$. Since $\pi$ is a prefix of $\gamma$, this implies $\dom(b)\notin\dsrc{b\pi}{v}{s\cdot a\beta}$. The minimality of $\alpha$ implies that 
   $v^\uparrow_{s\cdot a\beta b\pi}=v^\uparrow_{s\cdot a\beta\pi}$, in particular, $\dom(c) \dintrel{s\cdot a\beta b\pi} v$. Since $\dom(a)\notin\dsrc{a\beta b\gamma}vs$, we obtain
   \begin{itemize}
     \item $\dom(a)\notin\dsrc{a\beta b\pi}{\dom(c)}s$,
     \item from the above, we know that $\dom(a)\in\dsrc{a\beta\pi}{\dom(c)}{s}$.
   \end{itemize}
   This contradicts the minimality of $\gamma$, since $\pi$ is a proper prefix of $\gamma$.
 \end{itemize}
\end{proof}

Using these lemmas, we can now prove Theorem~\ref{theorem:polynomial unwinding characterization of intransitive uniformity and security}:

\begin{proof}
 \begin{enumerate}
  \item  First assume that there is a uniform intransitive unwinding satisfying \dPC, \dSC, and \dLR, and indirectly assume that the policy is not intransitively uniform. Due to Lemma~\ref{lemma:minimal path different policies}, there exist $a,u,s$, and $\alpha$ such that $\dom(a)\notin\dsrc{a\alpha}u{s}$, $\infagents{u}{s\cdot a\alpha}\neq \infagents{u}{s\cdot\alpha}$, and $\alpha$ does not contain any $b$ with $\dom(a)\dintrel{s}\dom(b)$. Let $v=\dom(a)$. Let $\sim^{s,v}_u$ be an equivalence relation satisfying \dPC, \dSC, and \dLR. It suffices to show that $s\cdot a\alpha\sim^{s,v}_us\cdot\alpha$ to obtain a contradiction to \dPC.

 Clearly, $\dom(a)\not\dintrel{s}u$, hence \dLR\ implies $s\sim^{s,\dom(a)}_us\cdot a$, i.e., $s^{s,v}_us\cdot a$. Note that for all $a'$ appearing in $\alpha$, we have that $\dom(a)\not\dintrel{s}\dom(a')$. Hence applying \dSC\ for each $a'$, we obtain $s\cdot a\alpha\sim^{s,v}_us\cdot\alpha$ as required.

 For the converse, assume that for all $\dom(a)\notin\dsrc{a\alpha}us$, we have that $\infagents{u}{s\cdot a\alpha}=\infagents{u}{s\cdot\alpha}$, and let $s_0$ be a state, and let $v$ and $u$ be agents. We define

 \medskip

 \begin{tabular}{llp{8cm}}
  $s\sim^{s_0,v}_ut$ & iff & for all sequences $\alpha$ that contain no $b$ with $v\dintrel{s_0}\dom(b)$, we have $\infagents{u}{s\cdot\alpha}=\infagents{u}{t\cdot\alpha}$.
 \end{tabular}

 \medskip
 Clearly, $\sim^{s_0,v}_u$ is an equivalence relation and satisfies \dPC\ (choose $\alpha=\epsilon$). For showing \dSC, let $s\sim^{s_0,v}_ut$, and let $v\not\dintrel{s_0}\dom(a)$. To show the required condition $s\cdot a\sim^{s_0,v}_ut\cdot a$, let $\alpha$ be a sequence containing no $b$ with $v\dintrel{s_0}b$. Since $v\not\dintrel{s_0}\dom(a)$, the sequence $a\alpha$ satisfies the same condition, and hence from $s\sim^{s_0,v}_ut$, it follows that $\infagents{u}{s\cdot a\alpha}=v^\uparrow_{s\cdot a\alpha}$ as required.

 Finally, consider \dLR. Let $\dom(a)\not\dintrel{s}u$. To show that $s\sim^{s,\dom(a)}_us\cdot a$, let $\alpha$ be such that no $b$ with $\dom(a)\dintrel{s}\dom(b)$ appears in $\alpha$, we need to show that $\infagents{u}{s\cdot\alpha}=\infagents{u}{s\cdot a\alpha}$. This follows from the prerequites, since clearly, $\dom(a)\notin\dsrc{a\alpha}us$.

 \item
 \begin{enumerate}
  \item Assume that the system is \dipsecure. Let $s_0$ be a state, and let $v$ and $u$ be agents. We define:

 \medskip
 \begin{tabular}{llp{8cm}}
  $s\sim^{s_0,v}_ut$ & iff & for all sequences $\alpha$ that contain no $b$ with $v\dintrel{s_0}\dom(b)$, we have $\obs_u(s\cdot\alpha)=\obs_u(t\cdot\alpha)$. 
 \end{tabular}

 \medskip
Clearly, $\sim^{s_0,v}_u$ is an equivalence relation and satisfies \dOC\ (choose $\alpha=\epsilon$). For showing \dSC, let $s\sim^{s_0,v}_ut$, and let $a\in A$ with $v\not\dintrel{s_0}\dom(a)$. We need to show that for all $\alpha$ containing no $b$ with $v\dintrel{s_0}\dom(b)$, we have $\obs_u(s\cdot a\alpha)=\obs_u(t\cdot a\alpha)$. This trivially follows from $s\sim^{s_0,v}_ut$, since $\alpha'=a\alpha$ also does not contain a $b$ with $v\dintrel{s_0}\dom(b)$.

 Finally, consider \dLR. Let $\dom(a)\not\dintrel{s}u$. We need to show that $s\sim^{s,\dom(a)}_us\cdot a$. Hence let $\alpha$ be a sequence containing no $b$ with $\dom(a)\dintrel{s}\dom(b)$. We need to show that $\obs_u(s\cdot\alpha)=\obs_u(s\cdot a\alpha)$. Since the system is \dipsecure, it suffices to show that $\dom(a)\notin\dsrc{a\alpha}us$. This follows trivially since $\dom(a)\not\dintrel{s}u$, and $\alpha$ does not contain any $b$ with $\dom(a)\dintrel{s}\dom(b)$.
 
  \item Assume that the system is not \dipsecure. Due to Lemma~\ref{lemma:minimal path different observations}, there is a state $s$, an agent $u$, an action $a$ and a sequence $\alpha$ with $\dom(a)\notin\dsrc{a\alpha}us$, $\obs_u(s\cdot a\alpha)\neq\obs_u(s\cdot\alpha)$, and $\alpha$ does not contain any $b$ with $\dom(a)\dintrel{s}\dom(b)$. Let $v=\dom(a)$, and let $\sim^{s,v}_u$ be an equivalence relation on $S$ that satisfies \dOC, \dSC, and \dLR. It suffices to show that $s\alpha\sim^{s,v}_us\cdot a\alpha$. Clearly we have that $v\not\dintrel{s}u$. Therefore, (recall that $v=\dom(a)$), \dLR implies $s\sim^{s,v}_us\cdot a$. Note that for all $b\in\alpha$, we have that $\dom(a)\not\dintrel{s}\dom(b)$. Hence applying \dSC repeatedly, we obtain $s\cdot a\alpha\sim^{s,v}_us\cdot\alpha$, which completes the proof.
 \end{enumerate}
\end{enumerate}
\end{proof}

\end{document}